\documentclass[12pt,a4paper,fleqn]{article}

\usepackage{lscape,exscale,amsthm, enumerate,adjustbox}
\usepackage[intlimits]{amsmath}
\usepackage{rawfonts}
\usepackage{latexsym}
\usepackage[cp850]{inputenc}
\usepackage{epsfig}
\usepackage{bbm}
\usepackage{enumitem}
\usepackage{float}
\usepackage{ulem}
\usepackage{xcolor, multirow}
\usepackage[pagewise]{lineno}

\usepackage[pass,letterpaper]{geometry}

\usepackage{xr} 

\RequirePackage[OT1]{fontenc}
\RequirePackage{amsthm,amsmath}

\parskip0mm
\parindent3ex
\topmargin-2cm
\textheight252mm \textwidth170mm
\oddsidemargin-5mm
\evensidemargin3mm

\usepackage[authoryear]{natbib}
\bibliographystyle{apalike}

%
%

%


\usepackage{setspace}

\newcommand{\bol}[1]{\mbox{\boldmath$#1$}}

\newcommand{\bSigma}{\bol{\Sigma}}

\newcommand{\bm}{\bol{\mu}}

\newcommand{\bb}{\mathbf{b}}

\newcommand{\bx}{\mathbf{X}}

\newcommand{\bQ}{\mathbf{Q}}

\newcommand{\by}{\mathbf{Y}}

\newcommand{\bP}{\mathbf{P}}

\newcommand{\bR}{\mathbf{R}}

\newcommand{\bw}{\mathbf{w}}

\newcommand{\bi}{\mathbf{1}}

\newcommand{\bI}{\mathbf{I}}

\newcommand{\bxi}{\boldsymbol{\xi}}
\newcommand{\bD}{\mathbf{D}}
\newcommand{\eps}{\pmb{\varepsilon}}

\newcommand{\bV}{\mathbf{V}}

\newcommand{\bS}{\mathbf{S}}

\newcommand{\sx}{\bar{\mathbf{x}}}

\newcommand{\sy}{\bar{\mathbf{y}}}
\newcommand{\btheta}{\boldsymbol{\theta}}
\newcommand{\bTheta}{\boldsymbol{\Theta}}

\newcommand{\bU}{\mathbf{U}}

\newcommand{\tbV}{\tilde{\mathbf{V}}}

\usepackage{stackrel}
\numberwithin{equation}{section}
\theoremstyle{plain}
\newtheorem{theorem}{Theorem}[section]

\newtheorem{proposition}{Proposition}[section]
\newtheorem{lemma}{Lemma}[section]
\newtheorem{corollary}{Corollary}[section]

\newtheorem{remark}{Remark}
\newtheorem*{remark*}{Remark}



%

\usepackage{ulem}

\begin{document}

\begin{center}
\vspace*{0.3cm} \noindent {\bf \large Optimal shrinkage-based portfolio selection in high dimensions}
\\

\vspace{0.5cm} \noindent {\sc  Taras Bodnar$^{a}$, Yarema Okhrin$^{b}$ and Nestor Parolya$^{c,}$\footnote{Corresponding author. E-mail address: n.parolya@tudelft.nl}}\\
\vspace{0.5cm} {\it \footnotesize $^a$ Department of Mathematics, Stockholm University, Roslagsv\"{a}gen 101, SE-10691 Stockholm, Sweden}\\
{\it \footnotesize $^b$ Department of Statistics, University of Augsburg, Universit\"{a}tsstr. 16, D-86159 Augsburg, Germany}  \\
{\it \footnotesize $^c$ Department of Applied Mathematics, Delft University of Technology, Mekelweg 4,
2628 CD Delft, The Netherlands}
 \end{center}

\begin{abstract}
In this paper we estimate the mean-variance portfolio in the high-dimensional case using the recent results from the theory of random matrices. We construct a linear shrinkage estimator which is distribution-free and is optimal in the sense of maximizing with probability $1$ the asymptotic out-of-sample expected utility, i.e., mean-variance objective function for different values of risk aversion coefficient which in particular leads to the maximization of the out-of-sample expected utility and to the minimization of the out-of-sample variance.
 One of the main features of our estimator is the inclusion of the estimation risk related to the sample mean vector into the high-dimensional portfolio optimization. The asymptotic properties of the new estimator are investigated when the number of assets $p$ and the sample size $n$ tend simultaneously to infinity such that $p/n \rightarrow c\in (0,+\infty)$. The results are obtained under weak assumptions imposed on the distribution of the asset returns, namely the existence of the $4+\varepsilon$ moments is only required.
 Thereafter we perform numerical and empirical studies where the small- and large-sample behavior of the derived estimator is investigated. The suggested estimator shows significant improvements over the existent approaches including the nonlinear shrinkage estimator and the three-fund portfolio rule, especially when the portfolio dimension is larger than the sample size. Moreover, it is robust to deviations from normality.
\end{abstract}

\noindent JEL Classification: G11, C13, C14, C58, C65\\
\noindent {\it Keywords}: expected utility portfolio, large-dimensional asymptotics, covariance matrix estimation, random matrix theory.

\section{Introduction}

In the seminal paper of \cite{mark1952} the author suggests to determine the optimal composition of a portfolio of financial assets by  minimizing the portfolio variance assuming that the expected portfolio return attains some prespecified fixed value. By varying this value we obtain the whole efficient frontier in the mean-standard deviation space. Despite of its simplicity, this approach justifies the advantages of diversification and is a standard technique and benchmark in asset management. Equivalently (see, \cite{tobin1958}, \cite{bodnar2013}) we can obtain the same portfolios by maximizing the expected quadratic utility (EU) with the optimization problem given by
\begin{equation}\label{MV}
\bw^\prime\bm_n-\dfrac{\gamma}{2}\bw^{\prime}\bSigma_n\bw\rightarrow max ~~\text{subject to}~~\bw^\prime\bi_p=1\,,
\end{equation}
where $\bw=(\omega_1,\ldots,\omega_p)^\prime$ is the vector of portfolio weights, $\bi_p$ is the $p$-dimensional vector of ones,  $\bm_n$ and $\bSigma_n$ are the $p$-dimensional mean vector and the $p\times p$ covariance matrix of asset returns, respectively. The quantity $\gamma>0$ determines the investor's behavior towards risk.
It must be noted that the maximization of the mean-variance objective function (\ref{MV}) is equivalent to the maximization of the exponential utility (CARA) function under the assumption of normality of the asset returns. In this case $\gamma$ equals the investor's absolute risk aversion coefficient (see, e.g., \cite{pratt1964}).

The solution of the optimization problem (\ref{MV}) is well known and it is given by
\begin{equation}\label{EUp}
\bw_{EU}=\bw_{GMV}+\gamma^{-1}\bQ_n\bm_n\,,
\end{equation}
where
\begin{equation}\label{Q}
\bQ_n=\bSigma_n^{-1}-\dfrac{\bSigma_n^{-1}\bi_p\bi_p^\prime\bSigma_n^{-1}}{\bi_p^\prime\bSigma_n^{-1}\bi_p}
\end{equation}
and
\begin{equation}\label{GMV}
\bw_{GMV}=\dfrac{\bSigma_n^{-1}\bi_p}{\bi_p^\prime\bSigma_n^{-1}\bi_p}
\end{equation}
is the vector of the weights of the global minimum variance (GMV) portfolio. By changing the risk-aversion  coefficient $\gamma \in (0,\infty)$ we obtain the set of optimal portfolios. \cite{merton1972} proved that this set is a parabola in the mean-variance (R-V) space (cf. \cite{bodnar2009}) given by
\begin{equation}\label{eff_intro}
(R-R_{GMV})^2=s(V-V_{GMV}),
\end{equation}
where
\begin{equation}\label{RV_GMV_intro}
R_{GMV}=\dfrac{\bm_n^\prime\bSigma_n^{-1}\bi_p}{\bi_p^\prime\bSigma_n^{-1}\bi_p}
\quad \text{and} \quad
V_{GMV}=\dfrac{1}{\bi_p^\prime\bSigma_n^{-1}\bi_p}
\end{equation}
are the expected return and the variance of the GMV portfolio, and
\begin{equation}\label{s_intro}
s=\bm_n^\prime \bQ_n \bm_n
\end{equation}
is its slope parameter. The quantity $s$ is always non-negative since $\bQ_n$ is a positive semidefinite matrix. Moreover, when $s$ is equal to zero, then the efficient frontier degenerates into a straight line with the GMV portfolio being the only optimal portfolio.

In practice, however, the above  mentioned approach of constructing an optimal portfolio frequently shows  poor out-of-sample performance in terms of various performance measures. Even naive portfolio strategies, e.g., equally weighted portfolio (see, \cite{demiguel2009}), often outperform the mean-variance strategy.  One of the reasons is the estimation risk. The unknown parameters $\bm_n$ and $\bSigma_n$ have to be estimated using historical data on asset returns. This results in the ''plug-in'' estimator of the EU portfolio (\ref{EUp}) which is a traditional and simple way to evaluate the portfolio in practice. This estimator is constructed by replacing  the mean vector $\bm_n$ and the covariance matrix $\bSigma_n$ with their sample counterparts in (\ref{EUp}). \cite{okhrin2006} derive the expectation and the variance of the sample portfolio weights under the assumption that the asset returns follow a multivariate normal distribution, whereas \cite{bodnar2011} obtain the exact finite-sample distribution. Recently, \cite{bodnarmp2015}  extended these results to the case $n < p$.

The estimation  of the parameters has a negative impact on the performance of the asset allocation strategy. This is noted in a series of papers with \cite{merton1980}, \cite{bestgrauer1991}, \cite{chopraz1993} among others. Several approaches have arisen to reduce the consequences of the estimation risk. One strand of research opts for the Bayesian framework and using appropriate priors takes the estimation risk into account already while building the portfolio. The second strand relies on the shrinkage techniques and is related to the method exploited in this paper. A straightforward way to improve the properties of the estimators for $\bm_n$ and $\bSigma_n$ is to use the shrinkage approach (see, \cite{jorion1986}, \cite{lw2004}). Alternatively, one may apply the shrinkage estimation to the portfolio weights directly. \cite{golokh2007} consider the multivariate shrinkage estimator by shrinking the portfolios with and without the riskless asset to an arbitrary static portfolio. A similar technique is used by \cite{frahm2010}, who construct a feasible shrinkage estimator for the GMV portfolio which dominates the traditional one. At last, \cite{bodnar2014} suggest  a shrinkage estimator for the GMV portfolio which is feasible even for the singular sample covariance matrix.

An important issue nowadays is, however, the asset allocation for large portfolios. The sample estimators work well only in the case when the number of assets $p$ is fixed and substantially smaller than the sample size $n$.  This case is known as the standard asymptotics in statistics (see, \cite{lecam2000}). Under this asymptotics the traditional sample estimator is a consistent estimator for the EU portfolio. But what happens when the dimension $p$ and the sample size $n$ are comparable of size, say $p=900$ and $n=1000$?  Technically, here we are in the situation when both the number of assets $p$ and the sample size $n$ tend to infinity. In the case when $p/n$ tends to some concentration ratio $c>0$ this asymptotics is known as high-dimensional asymptotics or ``Kolmogorov'' asymptotics (see, e.g., \cite{baisil2010}). If $c$ is close to one the sample covariance matrix tends to be close to a singular one and when $c>1$ it becomes singular.
Thus it is very unstable and tends to under- or overestimate the true parameters for $c$ smaller but close to 1 (see, \cite{baishi2011}). As a result, the sample estimator of the EU portfolio behaves badly in this case both from the theoretical and practical points of view (see, e.g., \cite{karoui2010, RubioMestrePalomar2012}). For $c>1$ the inverse sample covariance matrix does not exist and the portfolio cannot be constructed in the traditional way.

Taking the above mentioned information into account the aim of the paper is to construct a feasible and simple shrinkage estimator of the EU portfolio which is optimal in an asymptotic sense and is additionally distribution-free. The estimator is developed using the fast growing branch of probability theory, namely random matrix theory. The main result of this theory is proved by \cite{marpas1967} and further extended under very general conditions by \cite{silverstein1995}. Now it is called Mar$\breve{\text{c}}$enko-Pastur equation. Its importance arises in many areas of science because it shows how the true covariance matrix and its sample estimator are connected asymptotically. Knowing this  we can build suitable estimators for high-dimensional quantities which depend on $\bSigma_n$. In our case this refers to the shrinkage intensities. Note however, that the optimal shrinkage intensity depends again on the unknown characteristics of the asset returns. To overcome this problem we derive consistent estimators for specific functions (quadratic and bilinear forms) of the inverse sample covariance matrix and mean vector. Furthermore, we succeed to provide consistent estimators for the optimal shrinkage intensities too. Additional advantage of our approach is the simultaneous treatment of estimation risks of both the covariance matrix and the mean vector. In particular we contribute to the existent literature (see, \cite{ledoit2017nonlinear}) by weakening the assumption imposed on the mean vector of the asset returns.

It is worth mentioning that there are clear links between the subject of the paper and classical methods in statistical signal processing. The data generating process considered in the paper encompasses a broad range of system configurations described by the general vector channel model. Moreover, as for the aforementioned mean-variance portfolio optimization problem, usual linear filtering schemes solving typical signal waveform estimation and detection problems in signal array processing and wireless communications are based on the estimation of the unknown population covariance matrix. Famous example is the equivalence of the GMV portfolio to the so-called Capon or minimum variance distortionless response (MVDR) beamformer (see, \cite{verdu1998, vantrees2002}).

The rest of paper is organized as follows. In the next section, we construct a shrinkage estimator for the optimal portfolio weights obtained by shrinking the EU portfolio weights to an arbitrary target portfolio. The oracle shrinkage intensity and the corresponding feasible bona-fide estimators for $c<1$ and $c>1$ are established as well. The derived results are evaluated in Section 3 in extensive simulation and empirical studies. All proofs are moved to the Appendix presented in the supplementary material.

\section{Optimal shrinkage estimator of mean-variance portfolio}\label{sec:main}

Let $\by_n=(\mathbf{y}_1,\mathbf{y}_2,...,\mathbf{y}_n)$ be the $p \times n$ data matrix which consists of $n$ vectors of the returns on $p\equiv p(n)$ assets. Let $E(\mathbf{y}_i)=\bm_n$ and $Cov(\mathbf{y}_i)=\bSigma_n$ for $i \in 1,...,n$. We assume that $p/n\rightarrow c\in (0, +\infty)$ as $n\rightarrow\infty$. This type of limiting behavior is known as "the large dimensional asymptotics" or "Kolmogorov asymptotics". In this case the traditional sample estimators perform poorly or even very poorly and tend to over/underestimate the unknown parameters of the asset returns, e.g., the mean vector and the covariance matrix.

Throughout the paper it is assumed that there exists a $p\times n$ random matrix $\bx_n$ which consists of independent and identically distributed (i.i.d.) real random variables with zero mean and unit variance such that
\begin{equation}\label{obs}
 \by_n= \bm_n \bi_n^\prime + \bSigma_n^{\frac{1}{2}}\bx_n \,.
 \end{equation}
It must be noted that the observation matrix $\by_n$ has dependent rows but independent columns. Broadly speaking, this means that we allow arbitrary cross-sectional correlations of the asset returns but assume their independence over time. Although this assumption looks quite restrictive for financial applications, there exist stronger results from random matrix theory which show that the model can be extended to (weakly) dependent variables by demanding more complicated conditions on the elements of $\by_n$ (see, \cite{baizhou2008}) or by controlling the number of dependent entries as dimension increases (see, \cite{huipan2010}, \cite{friesen2013}, \cite{weietal2016}).  Although our findings can still be used when weak serial dependence structure is present between the observation vectors, like in the case of uncorrelated GARCH (generalized autoregressive conditional heteroscedastic) processes or similar ones (see, e.g., the simulation study in \cite{bodnar2021sampling}), we suspect substantial changes in the analytical expressions stated in the theorems for strongly correlated observation vectors, like in the case of VAR (vector autoregressive) processes. In such situations, the estimator will depend on the autocorrelation matrices of the underlying stochastic model and the theoretical results of the paper must be adjusted correspondingly. This interesting and important topic is not treated in the paper and is left for future research.

Nevertheless, if the entries of matrix $\by_n$ are weakly dependent or so called $m$-dependent,  this will only make the proofs more technical, but leave the results unchanged. For that reason we assume independent in time asset returns only to simplify the proofs of the main theorems and make them as transparent as possible. The three assumptions which are used throughout the paper are the following:

\begin{enumerate}

\item[(A1)] The covariance matrix of the asset returns $\bSigma_n$ is a nonrandom $p$-dimensional positive definite matrix.

\item[(A2)] The elements of the matrix $\bx_n$ have uniformly bounded $4+\varepsilon$ moments for some $\varepsilon>0$.

\item[(A3)] The efficient frontier is asymptotically a non-degenerate object, i.e. for its slope parameter it holds that $s=\bm_n^\prime\bQ_n\bm_n>0$ uniformly in $p$.

\end{enumerate}

All of these regularity assumptions are general enough to fit many real world situations. The assumption (A1) together with (\ref{obs}) are usual for financial and statistical problems and they impose no strong restrictions. The assumption (A2) is a technical one. Although we demand the existence of moments of order a bit higher than four, this is solely due to the fact that the almost sure convergence is employed in the formulation of the theoretical results. In case of the convergence in probability the existence of exactly the fourth moment is sufficient. Indeed, it can be easily shown that this extra $\varepsilon$ follows from the Borel-Cantelli lemma (see \cite{rubmes2011}[Proof of Lemma 4]). The assumption (A3) has an important financial interpretation. It ensures that the efficient frontier is a parabola in the mean-variance space as defined in \eqref{eff_intro} and it does not degenerate into a line parallel to the variance axis (cf., \cite{bodnar2010unbiased}). In the latter case, the only optimal portfolio is the GMV portfolio \eqref{GMV}, a special case of the EU portfolio \eqref{EUp} with $\gamma=\infty$, and its shrinkage estimators have already been developed in \citet{frahm2010} and \citet{bodnar2014}. The assumption (A3) can be tested in practice by using Theorem 1 of \cite{bodnar2021statistical}.

The sample covariance matrix is given by
\begin{equation}\label{samplecov}
 \bS_n=\dfrac{1}{n}\by_n(\bI_n-\frac{1}{n}\bi_n\bi_n^\prime)\by_n^{\prime}=\dfrac{1}{n}\bSigma_n^{\frac{1}{2}}\bx_n(\bI_n-\frac{1}{n}\bi_n\bi_n^\prime)\bx_n^{\prime}\bSigma_n^{\frac{1}{2}}\,,
 \end{equation}
where the symbol $\bI_n$ stands for the $n$-dimensional identity matrix. The sample mean vector becomes
\begin{equation}
 \sy_n= \dfrac{1}{n}\by_n\bi_n=\bm_n+\bSigma_n^{\frac{1}{2}}\sx_n~~\text{with}~\sx_n=\dfrac{1}{n}\bx_n\bi_n\,.
\end{equation}

\subsection{Oracle estimator. Case $c<1$}

In this section we consider the optimal shrinkage estimator for the EU portfolio weights presented in the introduction by finding the shrinkage parameter $\alpha$ and fixing some target portfolio $\mathbf{b}$.

The resulting estimator for $c<1$ is given by
\begin{equation}\label{gse}
  \hat{\bw}_{GSE}=\alpha_n\hat{\bw}_S+(1-\alpha_n)\mathbf{b}~~\text{with}~\mathbf{b}^\prime\bi_p=1\,,
\end{equation}
where the vector $\hat{\bw}_S$ is the sample estimator of the EU portfolio given in (\ref{EUp}), namely
\begin{equation}
\hat{\bw}_S=\dfrac{\bS_n^{-1}\bi_p}{\bi_p^\prime\bS_n^{-1}\bi_p}+\gamma^{-1}\hat{\bQ}_n\sy_n\,
\end{equation}
with
\begin{equation}\label{hatQ}
\hat{\bQ}_n=\bS_n^{-1}-\dfrac{\bS_n^{-1}\bi_p\bi_p^\prime\bS_n^{-1}}{\bi_p^\prime\bS_n^{-1}\bi_p}\,.
\end{equation}
The target portfolio $\mathbf{b}\in\mathbbm{R}^p$ is a given nonrandom (or random, but independent of $\by_n$) vector with $\mathbf{b}^\prime\bi_p=1$.
 No assumption is imposed on the shrinkage intensity $\alpha_n$ which is the object of our interest.

The aim is now to find the optimal shrinkage intensity for a given nonrandom target portfolio $\mathbf{b}$. For that reason we introduce a unified mean-variance objective function in order to calibrate the shrinkage intensity $\alpha_n$. Consider the following optimization problem
\begin{equation}\label{utility}
U(\beta)=\hat{\bw}_{GSE}^\prime(\alpha_n)\bm_n-\dfrac{\beta}{2}\hat{\bw}_{GSE}^\prime(\alpha_n)\bSigma_n\hat{\bw}_{GSE}(\alpha_n)\longrightarrow max~~
\text{with respect to}~~\alpha_n\,.
\end{equation}
Obviously, the mean-variance objectives (\ref{MV}) and (\ref{utility}) coincide if $\beta=\gamma$. Other special values of $\beta$ which lead to widely used out-of-sample performance measures we summarize in the following proposition
\begin{proposition}[Calibration criteria]\label{calibr}
  The optimization problem (\ref{utility}) is equivalent to
  \begin{itemize}
  \item[(i)] maximization of the mean-variance objective (\ref{MV}) if $\beta=\gamma$,
  \item[(ii)] minimization of the out-of-sample variance $\hat{\bw}_{GSE}^\prime(\alpha_n)\bSigma_n\hat{\bw}_{GSE}(\alpha_n)$ if $\beta\to\infty$,
   \end{itemize}
\end{proposition}

The proof of Proposition \ref{calibr} follows from the fact that all optimal mean-variance portfolios can be obtained by maximizing the expected quadratic utility function with a specific risk aversion coefficient. As a result, the global minimum variance portfolio is a partial solution of the optimization problem \eqref{MV}.
The presentation of the calibration criterion (\ref{utility}) provides an elegant way how to find the optimal shrinkage intensity $\alpha_n=\alpha_n(\beta)$ in a unified manner for several popular out-of-sample loss functions and compare them just by changing the parameter $\beta$.
In Section 2.3, we provide consistent estimates of these quantities under high-dimensional asymptotic regime $p/n\to c >0$ for $(p, n)\to \infty$.

It is worth mentioning that the coefficient $\beta$ has an interesting interpretation from statistical point of view. While coefficient $\gamma$ controls for investor attitude towards financial risk ("in-sample risk"), the parameter $\beta$ stays for controlling the estimation risk ("out-of-sample risk"). This implies that even the mean-variance investor with arbitrary $\gamma>0$ could choose $\beta\to\infty$ if she/he is interested, for example, in the minimization of the out-of-sample variance of the estimated portfolio.

The unified calibration criterion (\ref{utility}) can be rewritten as
{\small
\begin{eqnarray}\label{minvar}
&&U(\beta)=\alpha_n\hat{\bw}_S^\prime\bm_n+(1-\alpha_n)\mathbf{b}^\prime\bm_n-\dfrac{\beta}{2}\left(\alpha^2_n\hat{\bw}_S^\prime\bSigma_n\hat{\bw}_S
+2\alpha_n(1-\alpha_n)\mathbf{b}^\prime\bSigma_n\hat{\bw}_S+(1-\alpha_n)^2\mathbf{b}^\prime\bSigma_n\mathbf{b}\right)\rightarrow max\nonumber\\
&&\text{with respect to}~~\alpha_n\,.
\end{eqnarray}
}

Next, taking the derivative of $U$ with respect to $\alpha_n$ and setting it equal to zero we get
{\small
\begin{eqnarray}\label{der2}
&&\dfrac{\partial U}{\partial\alpha_n}=(\hat{\bw}_S-\mathbf{b})^\prime\bm_n-\beta\left(\alpha_n\hat{\bw}_S^\prime\bSigma_n\hat{\bw}_S+(1-2\alpha_n)\mathbf{b}^\prime\bSigma_n\hat{\bw}_S-(1-\alpha_n)\mathbf{b}^\prime\bSigma_n\mathbf{b}\right) \overset{!}{=}0\nonumber\,.
\end{eqnarray}
}
 From the last equation it is easy to find the optimal shrinkage intensity $\alpha_n^*$ given by
 \begin{equation}\label{alfa}
 \alpha_n^*=\beta^{-1}\dfrac{(\hat{\bw}_S-\mathbf{b})^\prime(\bm_n-\beta\bSigma_n\mathbf{b})}{(\hat{\bw}_S-\mathbf{b})^\prime\bSigma_n(\hat{\bw}_S-\mathbf{b})}\,.
 \end{equation}
To ensure that $\alpha_n^*$ is the unique maximizer of (\ref{utility}) the second derivative of $U$ must be negative, which is always fulfilled. Indeed, it follows from the positive definitiveness of the matrix $\bSigma_n$, namely
\begin{equation}\label{secder}
\dfrac{\partial^2 U}{\partial\alpha^2_n}=-\beta(\hat{\bw}_S-\mathbf{b})^\prime\bSigma_n(\hat{\bw}_S-\mathbf{b})<0\,.
\end{equation}

In the next theorem we derive the asymptotic properties of the optimal shrinkage intensity $\alpha^{*}_n$ under large-dimensional asymptotics.

\begin{theorem}\label{th1}
Assume (A1)-(A3). Then  it holds that
\begin{equation*}
\left|\alpha^*_n-\alpha^*\right|\stackrel{a.s.}{\longrightarrow}0 ~~ \text{for}
~~\dfrac{p}{n}\rightarrow c\in(0, 1)~~ \text{as} ~~n\rightarrow\infty
\end{equation*}
with
\begin{equation}\label{ainfty}
\alpha^*=\beta^{-1}\dfrac{(R_{GMV}-R_b)\left(1+\dfrac{\beta/\gamma}{1-c}\right)+\beta(V_b-V_{GMV})+\dfrac{\gamma^{-1}}{1-c}s}{\dfrac{1}{1-c}V_{GMV}-2\left(V_{GMV} + \frac{\gamma^{-1}}{1-c}(R_b-R_{GMV})\right)+\gamma^{-2}\left(\dfrac{s}{(1-c)^3}+\dfrac{c}{(1-c)^3}\right)+V_b},
\end{equation}
where the parameters of the efficient frontier $R_{GMV}$, $V_{GMV}$ and $s$ are given in \eqref{RV_GMV_intro} and \eqref{s_intro},
respectively. The quantities $R_b=\mathbf{b}^\prime\bm_n$ and $V_b=\mathbf{b}^\prime\bSigma_n\mathbf{b}$ denote the expected return and the variance of the target portfolio $\mathbf{b}$.
\end{theorem}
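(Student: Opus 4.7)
The strategy is to write $\alpha_n^* = \beta^{-1} N_n / D_n$ as a rational function of a finite list of bilinear and quadratic forms in $\bS_n^{-1}$, identify the almost sure limit of each such form by random matrix theory, and conclude by the continuous mapping theorem. Substituting $\hat{\bw}_S = \hat{\bw}_{GMV} + \gamma^{-1}\hat{\bQ}_n\sy_n$ with $\hat{\bw}_{GMV} = \bS_n^{-1}\bi_p / (\bi_p^\prime\bS_n^{-1}\bi_p)$ into (\ref{alfa}) and using $\bi_p^\prime\hat{\bQ}_n = \bzero^\prime$ gives
\begin{align*}
N_n &= \hat{\bw}_{GMV}^\prime\bm_n - R_b - \beta\bigl(\hat{\bw}_{GMV}^\prime\bSigma_n\mathbf{b} - V_b\bigr) + \gamma^{-1}\sy_n^\prime\hat{\bQ}_n(\bm_n - \beta\bSigma_n\mathbf{b}),\\
D_n &= \hat{\bw}_{GMV}^\prime\bSigma_n\hat{\bw}_{GMV} - 2\hat{\bw}_{GMV}^\prime\bSigma_n\mathbf{b} + V_b + 2\gamma^{-1}\hat{\bw}_{GMV}^\prime\bSigma_n\hat{\bQ}_n\sy_n - 2\gamma^{-1}\sy_n^\prime\hat{\bQ}_n\bSigma_n\mathbf{b} + \gamma^{-2}\sy_n^\prime\hat{\bQ}_n\bSigma_n\hat{\bQ}_n\sy_n.
\end{align*}

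For the blocks involving only the deterministic vectors $\bi_p, \bm_n, \bSigma_n\mathbf{b}$, I would invoke the standard consequences of the Silverstein equation combined with Bai--Silverstein concentration of quadratic forms, which hold under (A2): for deterministic $p$-sequences $\ba_n, \bb_n$ with $\ba_n^\prime\bSigma_n^{-1}\ba_n$ and $\bb_n^\prime\bSigma_n^{-1}\bb_n$ uniformly bounded,
\begin{equation*}
\ba_n^\prime\bS_n^{-1}\bb_n - \tfrac{1}{1-c}\ba_n^\prime\bSigma_n^{-1}\bb_n \xrightarrow{a.s.} 0, \qquad \ba_n^\prime\bS_n^{-1}\bSigma_n\bS_n^{-1}\bb_n - \tfrac{1}{(1-c)^3}\ba_n^\prime\bSigma_n^{-1}\bb_n \xrightarrow{a.s.} 0.
\end{equation*}
Applied to the pairs drawn from $\{\bi_p, \bm_n, \bSigma_n\mathbf{b}\}$ and using (A3) together with the identity $\bi_p^\prime\bSigma_n^{-1}(\bSigma_n\mathbf{b}) = \bi_p^\prime\mathbf{b} = 1$, these yield $\hat{\bw}_{GMV}^\prime\bm_n \to R_{GMV}$, $\hat{\bw}_{GMV}^\prime\bSigma_n\mathbf{b} \to V_{GMV}$ and $\hat{\bw}_{GMV}^\prime\bSigma_n\hat{\bw}_{GMV} \to V_{GMV}/(1-c)$; moreover, combining $\bi_p^\prime\bS_n^{-1}\bSigma_n \approx (1-c)^{-1}\bi_p^\prime$ in bilinear sense with $\bi_p^\prime\hat{\bQ}_n = \bzero^\prime$ forces $\hat{\bw}_{GMV}^\prime\bSigma_n\hat{\bQ}_n\sy_n \to 0$.

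For the $\sy_n$-blocks, I would decompose $\sy_n = \bm_n + \bSigma_n^{1/2}\sx_n$ with $\sx_n = n^{-1}\bx_n\bi_n$. Since (\ref{samplecov}) makes $\bS_n$ depend on $\bx_n$ only through the centred matrix $\bx_n(\bI_n - n^{-1}\bi_n\bi_n^\prime)$, multiplying $\bx_n$ on the right by any orthogonal matrix with first column $n^{-1/2}\bi_n$ places $\sx_n$ and $\bS_n$ into disjoint column blocks whose entries retain mean zero, unit variance, uncorrelatedness and the $4+\varepsilon$ moment bound. The Bai--Silverstein trace lemma and linear-form inequality applied conditionally on $\bS_n$ then yield, for any $\bS_n$-measurable matrix $\bM_n$ of bounded spectral norm,
\begin{equation*}
\sx_n^\prime\bSigma_n^{1/2}\bM_n\bSigma_n^{1/2}\sx_n - \tfrac{1}{n}\tr(\bSigma_n\bM_n) \xrightarrow{a.s.} 0, \qquad \bm_n^\prime\bM_n\bSigma_n^{1/2}\sx_n \xrightarrow{a.s.} 0.
\end{equation*}
Expanding each $\sy_n$-block through this decomposition and evaluating the purely deterministic pieces by the previous paragraph gives $\bm_n^\prime\hat{\bQ}_n\bm_n \to s/(1-c)$, $\bm_n^\prime\hat{\bQ}_n\bSigma_n\mathbf{b} \to (R_b - R_{GMV})/(1-c)$ and $\bm_n^\prime\hat{\bQ}_n\bSigma_n\hat{\bQ}_n\bm_n \to s/(1-c)^3$, with $s = \bm_n^\prime\bQ_n\bm_n$; the cross terms vanish by the second formula above; and the only surviving quadratic-in-$\sx_n$ trace is $n^{-1}\tr(\bSigma_n\hat{\bQ}_n\bSigma_n\hat{\bQ}_n) \to c/(1-c)^3$, which reduces to the Marchenko--Pastur moment $\int \lambda^{-2}\,dF_c(\lambda) = (1-c)^{-3}$ multiplied by $p/n$ after absorbing a vanishing rank-one correction from $\hat{\bQ}_n - \bS_n^{-1}$. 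Collecting, $\sy_n^\prime\hat{\bQ}_n\bm_n \to s/(1-c)$, $\sy_n^\prime\hat{\bQ}_n\bSigma_n\mathbf{b} \to (R_b - R_{GMV})/(1-c)$ and $\sy_n^\prime\hat{\bQ}_n\bSigma_n\hat{\bQ}_n\sy_n \to (s + c)/(1-c)^3$. Substituting into $N_n$ and $D_n$ verifies that they converge a.s.\ to the numerator and denominator of $\beta\alpha^*$ in (\ref{ainfty}); since the limit of $D_n$ is bounded away from zero by positive definiteness of $\bSigma_n$ and $c < 1$, the continuous mapping theorem delivers $\alpha_n^* \xrightarrow{a.s.} \alpha^*$.

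The principal difficulty is the decoupling of $\sx_n$ from $\bS_n$ without a normality assumption, since the quadratic- and linear-form limits above require the sample mean to behave as if independent of the sample covariance. The orthogonal rotation of the columns of $\bx_n$ described above handles this: although the rotated entries are no longer i.i.d.\ under (A2) alone, they remain centred, unit-variance, uncorrelated and satisfy the uniform $4+\varepsilon$ bound, which are exactly the ingredients needed by the Bai--Silverstein machinery. The remainder of the proof is lengthy but routine algebra using the identity $\hat{\bQ}_n = \bS_n^{-1} - \bS_n^{-1}\bi_p\bi_p^\prime\bS_n^{-1}/(\bi_p^\prime\bS_n^{-1}\bi_p)$ and the vanishing relation $\bi_p^\prime\hat{\bQ}_n = \bzero^\prime$.
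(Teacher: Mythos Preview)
Your overall architecture---expand $\alpha_n^*$ into a finite collection of bilinear and quadratic forms in $\bS_n^{-1}$ and $\hat{\bQ}_n$, compute each limit, and finish by continuous mapping---coincides with the paper's, and the limits you list are the correct ones. The divergence is entirely in how you justify the $\sx_n$-blocks, and there your decoupling argument has a genuine gap.

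You rotate $\bx_n$ on the right by an orthogonal matrix with first column $n^{-1/2}\bi_n$ so that $\sqrt{n}\,\sx_n$ becomes the first column and $\bS_n$ depends only on the remaining columns, and then you want to apply the Bai--Silverstein trace and linear-form inequalities ``conditionally on $\bS_n$''. Under Gaussianity this works perfectly, because an orthogonal rotation of i.i.d.\ Gaussians yields i.i.d.\ Gaussians, so the first rotated column is genuinely independent of the others. Under (A2) alone, however, the rotated columns are only \emph{uncorrelated}; within each row $i$, the entries $(\bz_1)_i,\ldots,(\bz_n)_i$ are linear combinations of $x_{i1},\ldots,x_{in}$ and are generally dependent. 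The standard trace lemma (e.g.\ Lemma~B.26 in Bai--Silverstein) requires the random vector to be independent of the matrix; your assertion that ``centred, unit-variance, uncorrelated and uniform $4+\varepsilon$ moments are exactly the ingredients needed'' is not correct as stated. Conditionally on the remaining columns, $(\bz_1)_i$ need not have mean zero or variance one, so neither the conditional nor the unconditional version of the argument goes through without further work.

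The paper sidesteps this issue entirely. Instead of decoupling $\sx_n$ from $\bS_n$, it writes $\bV_n=\tbV_n-\sx_n\sx_n^\prime$ with $\tbV_n=n^{-1}\bx_n\bx_n^\prime$, expands $\bV_n^{-1}$ via Sherman--Morrison, and evaluates quantities like $\sx_n^\prime\tbV_n^{-k}\sx_n$ by passing to the $n\times n$ companion matrix $n^{-1}\bx_n^\prime\bx_n$ through the Woodbury identity, so that these become traces against the deterministic rank-one matrix $n^{-1}\bi_n\bi_n^\prime$. The Rubio--Mestre result (Lemma~\ref{lem1} in the paper) then gives the limits directly---no independence between $\sx_n$ and $\bS_n$ is ever invoked. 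The mixed terms $\sx_n^\prime\tbV_n^{-1}\btheta\to 0$ are handled by citing \cite{pan2014}. If you replace your rotation step by this companion-matrix route, the rest of your outline carries over unchanged.
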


Next, we assess the performance of the classical estimator of the portfolio weights $\hat{\bw}_S$ and the optimal shrinkage weights $\hat{\bw}_{GSE}$. As a measure of performance we consider the relative increase in the utility of the portfolio return compared to the portfolio based on true parameters of asset returns. The results are summarized in the following corollary.

\begin{corollary}\label{cor1}
(a) Let $U_{EU}$ and $U_S$ be the mean-variance objectives in (\ref{MV}) for the true EU portfolio and its traditional estimator. Then under the assumptions of Theorem \ref{th1}, the relative loss of the traditional estimator of the EU portfolio is given by
\begin{equation}\label{eq:lossS}
L_S = \frac{U_{EU}-U_S}{U_{EU}} \stackrel{a.s.}{\longrightarrow} \frac{\frac{\gamma}{2} \left(\frac{1}{1-c}-1\right)\cdot V_{GMV} +  \gamma^{-1} \left({\frac{1}{2}}-\frac{1}{(1-c)} + \frac{1}{2(1-c)^3}\right) \cdot s + \frac{1}{2\gamma}\cdot \frac{c}{(1-c)^{ 3 } } } {R_{GMV}+\frac{1}{2}\gamma^{-1}\cdot s - \frac{\gamma}{2} V_{GMV}}
\end{equation}
for $\frac{p}{n}\rightarrow c\in(0,1)$ as $n\rightarrow \infty$.

(b) Let $U_{GSE}$ be the expected quadratic utility for  optimal shrinkage estimator of the EU portfolio. Under the assumptions of Theorem \ref{th1}, the relative loss of the optimal shrinkage estimator is given by
\begin{equation}\label{eq:lossS}
L_{GSE} = \frac{U_{EU}-U_{GSE}}{U_{EU}} \stackrel{a.s.}{\longrightarrow} (\alpha^*)^2 L_S + (1-\alpha^*)^2 L_{\bb} +\alpha^*(1-\alpha^*)\frac{c}{1-c}
\frac{R_b-R_{GMV}-\gamma^{-1} s}{U_{EU}}
\end{equation}
for $\frac{p}{n}\rightarrow c\in(0,1)$ as $n\rightarrow \infty$ with $L_{\bb}=(U_{EU}-U_{\bb})/U_{EU}$ is the relative loss in the expected utility $U_{\bb}$ of the target portfolio $\bb$.
\end{corollary}

\subsection{Oracle estimator. Case $c>1$.}
Here, similarly as in \cite{bodnar2014}, we will use the generalized inverse of the sample covariance matrix $\bS_n$. Particularly, we use the following generalized inverse of the sample covariance matrix $\bS_n$
\begin{equation}\label{geninverse}
\bS_n^*=\bSigma_n^{-1/2}\left(\dfrac{1}{n}\bx_n\bx_n^\prime - \sx_n\sx_n^\prime\right)^+\bSigma_n^{-1/2}\,,
\end{equation}
where $'+'$ denotes the Moore-Penrose inverse. It can be shown that $\bS_n^*$ is a generalized inverse of $\bS_n$ satisfying $\bS_n^*\bS_n\bS_n^*=\bS_n^*$ and $\bS_n\bS_n^*\bS_n=\bS_n$. However, $\bS_n^*$ is not exactly equal to the Moore-Penrose inverse because it does not satisfy the conditions $(\bS_n^*\bS_n)^\prime=\bS_n^*\bS_n$ and $(\bS_n\bS_n^*)^\prime=\bS_n\bS_n^*$. In case $c<1$ the generalized inverse $\bS_n^*$ coincides with the usual inverse $\bS_n^{-1}$. Moreover, if $\bSigma_n$ is a multiple of identity matrix, then $\bS_n^*$ is equal to the Moore-Penrose inverse $\bS_n^+$. In this section, $\bS_n^*$ is used only to determine an oracle estimator for the weights of the EU portfolio. The bona fide estimator is constructed in the next section.

Thus, the oracle estimator for $c>1$ is given by
\begin{equation}\label{gse1}
\hat{\bw}^*_{GSE}=\alpha_n^+\hat{\bw}_{S^*}+(1-\alpha_n^+)\mathbf{b}~~\text{with}~\mathbf{b}^\prime\bi_p=1,
\end{equation}
where the vector $\hat{\bw}_{S^*}$ is the sample estimator of the EU portfolio given in (\ref{EUp}), namely
\begin{equation}
\hat{\bw}_{S^*} = \dfrac{\bS_n^{*}\bi_p}{\bi_p^\prime\bS_n^{*}\bi_p}+\gamma^{-1}\hat{\bQ}^*_n\sy_n\,
\end{equation}
with
\begin{equation}\label{hatQ1}
\hat{\bQ}^*_n=\bS_n^{*}-\dfrac{\bS_n^{*}\bi_p\bi_p^\prime\bS_n^{*}}{\bi_p^\prime\bS_n^{*}\bi_p}\,.
\end{equation}
Again, the shrinkage intensity $\alpha_n^+$ is the object of our interest. In order to save place we skip the optimization procedure for $\alpha_n^+$ as it is only slightly different from the case $c<1$. The optimal shrinkage intensity $\alpha_n^+$ in case $c>1$ is given by
 \begin{equation}\label{alfa1}
 \alpha_n^+=\beta^{-1}\dfrac{(\hat{\bw}_{S^*}-\mathbf{b})^\prime(\bm_n-\beta\bSigma_n\mathbf{b})}{(\hat{\bw}_{S^*}-\mathbf{b})^\prime\bSigma_n(\hat{\bw}_{S^*}-\mathbf{b})}\,.
 \end{equation}

In the next theorem we find the asymptotic equivalent quantity for $\alpha_n^+$ in the case $p/n\rightarrow c\in(1, +\infty)$ as $n\rightarrow\infty$.

\begin{theorem}\label{th2}
Assume (A1)-(A3). Then  it holds that
\begin{equation*}
\left|\alpha^+_n-\alpha^+\right| \stackrel{a.s.}{\longrightarrow} 0 ~~\text{for} ~~\dfrac{p}{n}\rightarrow c\in(1, +\infty)~~\text{as} ~~n\rightarrow\infty
\end{equation*}
with
\begin{equation}\label{ainfty2}
\alpha^+=\beta^{-1}\dfrac{(R_{GMV}-R_b)\left(1+\dfrac{\beta/\gamma}{c(c-1)}\right)+\beta(V_b-V_{GMV})+\dfrac{\gamma^{-1}}{c(c-1)}s}{\dfrac{c^2}{(c-1)}V_{GMV}-2\left(V_{GMV} + \frac{\gamma^{-1}}{c(c-1)}(R_b-R_{GMV})\right)+\dfrac{\gamma^{-2}}{(c-1)^3}(s+c^2)+V_b}\,,
\end{equation}
where $R_{GMV}$, $V_{GMV}$, $R_b$, $V_b$, and $s$ are defined in Theorem \ref{th1}.
\end{theorem}

Next, as for the case $c<1$, we provide here the expression for the relative losses.

\begin{corollary}\label{cor-th2}
(a) Let $U_{EU}$ and $U_S$ be the mean-variance objectives in (\ref{MV}) for the true EU portfolio and its traditional estimator. Then under the assumptions of Theorem \ref{th2}, the relative loss of the traditional estimator of the EU portfolio is given by
\begin{equation}\label{eq:lossS}
L_S = \frac{U_{EU}-U_S}{U_{EU}} \stackrel{a.s.}{\longrightarrow}\frac{\frac{ \gamma }{2} \left(\frac{c^2}{c-1}-1\right)\cdot V_{GMV} +  \gamma^{-1} \left(\frac{1}{2}-\frac{1}{c(c-1)} + \frac{1}{2(c-1)^3}\right) \cdot s + \frac{1}{2\gamma} \cdot \frac{c^2}{(c-1)^3}}{R_{GMV}+ \frac{1}{2}\gamma^{-1}  \cdot s - \frac{\gamma}{2} V_{GMV}}
\end{equation}
for $\frac{p}{n}\rightarrow c\in(1,+\infty)$ as $n\rightarrow \infty$.

(b) Let $U_{GSE}$ be the expected quadratic utility for the optimal shrinkage estimator of the EU portfolio. Under the assumptions of Theorem \ref{th2}, the relative loss of the optimal shrinkage estimator is given by
\begin{equation}\label{eq:lossGSE}
L_{GSE} = \frac{U_{EU}-U_{GSE}}{U_{EU}} \stackrel{a.s.}{\longrightarrow}(\alpha^+)^2 L_S + (1-\alpha^+)^2 L_{\bb}+\alpha^+(1-\alpha^+)\frac{1+c-c^2}{c(c-1)}\frac{R_b-R_{GMV}-\gamma^{-1} s}{U_{EU}}
\end{equation}
for $\frac{p}{n}\rightarrow c\in(1,+\infty)$ as $n\rightarrow \infty$ with $L_{\bb}=(U_{EU}-U_{\bb})/U_{EU}$ is the relative loss in the expected utility $U_{\bb}$ of the target portfolio $\bb$.
\end{corollary}

\subsection{Estimation of unknown parameters. Bona fide estimator}

The limiting shrinkage intensities $\alpha^*$ and $\alpha^+$ are not feasible in practice, because they depend on $R_{GMV}$, $V_{GMV}$, $s$, $R_b$, and $V_b$ which are unknown quantities. In this subsection we derive consistent estimators for  $\alpha^*$ and $\alpha^+$. These results are summarized in two propositions dealing with the cases $c\in(0, 1)$ and $c\in(1, \infty)$, respectively. The statements  follow directly from the proofs of Theorems \ref{th1} and \ref{th2} that are provided in the supplement of the paper.
\begin{proposition}\label{prop1} The consistent estimator for the limiting optimal shrinkage intensity $\alpha^*$ under large dimensional asymptotics $p/n\rightarrow c<1~ \text{as} ~n\rightarrow\infty$ is given by
\begin{equation}\label{a:bf}
 \widehat{\alpha}^*=\beta^{-1}\dfrac{(\hat{R}_{c}-\hat{R}_b)\left(1+\dfrac{\beta/\gamma}{1-p/n}\right)+\beta(\hat{V}_b-\hat{V}_{c})+\dfrac{\gamma^{-1}}{1-p/n}\hat{s}_{c}}{\dfrac{1}{1-p/n}\hat{V}_{c}-2\left(\hat{V}_{c} + \frac{\gamma^{-1}}{1-p/n}(\hat{R}_b-\hat{R}_{c})\right)+\gamma^{-2}\left(\dfrac{\hat{s}_c}{(1-p/n)^3}+\dfrac{p/n}{(1-p/n)^3}\right)+\hat{V}_b}
\end{equation}
where $\hat{R}_{c}$, $\hat{V}_{c}$, $\hat{s}_{c}$, $\hat{R}_{b}$ and $\hat{V}_{b}$ are given by
 \begin{eqnarray}
\hat{R}_{c}&=&\hat{R}_{GMV}\label{hRc}\\
\hat{V}_{c}&=&\frac{1}{1-p/n}\hat{V}_{GMV}\label{hVc}\\
\hat{s}_{c}&=&(1-p/n)\hat{s}-p/n\label{hsc}\\
 \hat{R}_b&=& \mathbf{b}^\prime\sy_n\label{hRb}\\
 \hat{V}_b&=& \mathbf{b}^\prime\bS_n\mathbf{b}\label{hVb}\,,
\end{eqnarray}
which are also ratio consistent estimators for $R_{GMV}$, $V_{GMV}$, $s$, $R_b$, and $V_b$, respectively, while $\hat{R}_{GMV}$, $\hat{V}_{GMV}$ and $\hat{s}$ are traditional plug-in estimators.
\end{proposition}
Using Proposition \ref{prop1} we can immediately construct a bona-fide estimator for the expected utility portfolio weights in case $c<1$. It holds that
\begin{equation}\label{bonafide}
\hat{\bw}_{BFGSE}=\widehat{\alpha}^*\left(\dfrac{\bS_n^{-1}\bi_p}{\bi_p^\prime\bS_n^{-1}\bi_p}+\gamma^{-1}\hat{\bQ}_n\sy_n\right)+(1-\widehat{\alpha}^*)\mathbf{b}\,
\end{equation}
with $\widehat{\alpha}^*$ given in Proposition \ref{prop1}. The expression (\ref{bonafide}) is the optimal shrinkage estimator for a given target portfolio $\mathbf{b}$ in  the sense that the shrinkage intensity $\widehat{\alpha}^*$ tends almost surely to its optimal value $\alpha^*$ for $p/n\rightarrow c\in(0, 1)$ as $n\rightarrow\infty.$

The situation is more complex in case $c>1$. Here we can present only oracle estimators for the unknown quantities $R_{GMV}$, $V_{GMV}$ and $s$.

\begin{proposition}\label{prop2}
The consistent estimator for the limiting optimal shrinkage intensity $\alpha^*$ under large dimensional asymptotics $p/n\rightarrow c>1~ \text{as} ~n\rightarrow\infty$ is given by
\begin{equation}\label{a:bf2o}
 \widehat{\alpha}^o = \beta^{-1}\dfrac{(\hat{R}^o_{c}-\hat{R}_b)\left(1+\dfrac{\beta/\gamma}{p/n(p/n-1)}\right)+\beta(\hat{V}_b-\hat{V}^o_{c})+\dfrac{\gamma^{-1}}{p/n(p/n-1)}\hat{s}^o_{c}}{\dfrac{(p/n)^2}{p/n-1}\hat{V}^o_{c}-2\left(\hat{V}^o_{c} + \frac{\gamma^{-1}}{p/n(p/n-1)}(\hat{R}_b-\hat{R}^o_{c})\right)+\dfrac{\gamma^{-2}}{(p/n-1)^3}\left(\hat{s}^o_c+(p/n)^2\right)+\hat{V}_b}\,,
\end{equation}
where $\hat{R}^o_{c}$, $\hat{V}^o_{c}$, $\hat{s}^o_{c}$ are given by
  \begin{eqnarray*}
\hat{R}^o_{c}&=&\hat{R}_{GMV}\\
\hat{V}^o_{c}&=&\frac{1}{p/n(p/n-1)}\hat{V}_{GMV}\\
\hat{s}^o_{c}&=&p/n[(p/n-1)\hat{s}-1]\,,
\end{eqnarray*}
where $\hat{R}_{GMV}$, $\hat{V}_{GMV}$ and $\hat{s}$ are the traditional plug-in estimators based on the generalized inverse $\bS_n^*$ from \eqref{geninverse} and $\hat{R}_b$ and $\hat{V}_b$ are given in (\ref{hRb}) and (\ref{hVb}), respectively.
\end{proposition}


Note that $\widehat{\alpha}^o$  from Proposition \ref{prop2} is not the bona fide estimator for the unknown shrinkage intensity $\alpha^+$, since the matrix $\bS_n^*$ depends on the unknown quantities. Thus, we propose a reasonable approximation using the application of the Moore-Penrose inverse $\bS_n^+$.
As a result, the bona fide estimators of the quantities $R_{GMV}$, $V_{GMV}$ and $s$ in case $c>1$ are approximated by
\begin{eqnarray}\label{RVs}
\hat{R}^+_{c}\approx \dfrac{\sy_n^\prime\bS_n^+\bi_p}{\bi_p^\prime\bS_n^+\bi_p},~~~
\hat{V}^+_{c}\approx \dfrac{1}{p/n(p/n-1)}\dfrac{1}{\bi_p^\prime\bS_n^+\bi_p},~~~
\hat{s}^+_c \approx p/n[(p/n-1)\sy_n^\prime\bQ_n^+\sy_n-1]\,,
\end{eqnarray}
respectively. The application of \eqref{RVs} leads to the bona fide optimal shrinkage estimator of the EU portfolio in case $c>1$ expressed as
\begin{equation}\label{bonafide+}
\hat{\bw}_{BFGSE}^+=\widehat{\alpha}^+\left(\dfrac{\bS_n^{+}\bi_p}{\bi_p^\prime\bS_n^{+}\bi_p}+\gamma^{-1}\hat{\bQ}^+_n\sy_n\right)+(1-\widehat{\alpha}^+)\mathbf{b}_n\,,
\end{equation}
with
\begin{equation}\label{a:bf2}
 \widehat{\alpha}^+ = \beta^{-1}\dfrac{(\hat{R}^+_{c}-\hat{R}_b)\left(1+\dfrac{\beta/\gamma}{p/n(p/n-1)}\right)+\beta(\hat{V}_b-\hat{V}^+_{c})+\dfrac{\gamma^{-1}}{p/n(p/n-1)}\hat{s}^+_{c}}{\dfrac{(p/n)^2}{p/n-1}\hat{V}^+_{c}-2\left(\hat{V}^+_{c} + \frac{\gamma^{-1}}{p/n(p/n-1)}(\hat{R}_b-\hat{R}^+_{c})\right)+\dfrac{\gamma^{-2}}{(p/n-1)^3}\left(\hat{s}^+_c+(p/n)^2\right)+\hat{V}_b}\,,
\end{equation}
where $\hat{R}_b$ and $\hat{V}_b$ are given in (\ref{hRb}) and (\ref{hVb}), respectively; $\bQ_n^+=\bS_n^{+}-\dfrac{\bS_n^{+}\bi\bi^\prime\bS_n^{+}}{\bi^\prime\bS_n^{+}\bi}$ and $\bS_n^{+}$ is the Moore-Penrose pseudo-inverse of the sample covariance matrix $\bS_n$.

\begin{remark}\rm
    It is easy to verify that if $\bSigma_n=\sigma^2\bI_p$ for any $\sigma>0$ the considered approximations in \eqref{RVs} become the exact ones. Next, we investigate the quality of this approximation in general case without imposing restrictions on $\bSigma_n$. This issue was studied for other quantities involving $\bS^*_n$ and $\bS_n^+$ in detail by \cite{bodpar2020}, who compare the limiting spectral distributions of $\bS^*_n$ and $\bS_n^+$ by deriving the limits for their corresponding Stieltjes transforms. It is concluded that the two inverses behave completely different in general. However, when the concentration ratio $c$ approaches $1$, then the limiting spectral distributions of both inverses $\bS^*_n$ and $\bS_n^+$ coincide independently of the structure of $\bSigma_n$. This in turn means that one should expect a good approximation quality when $c$ is not far away from $1$.

   \begin{figure}[h!]
    \centering
    \includegraphics[width=0.7\textwidth]{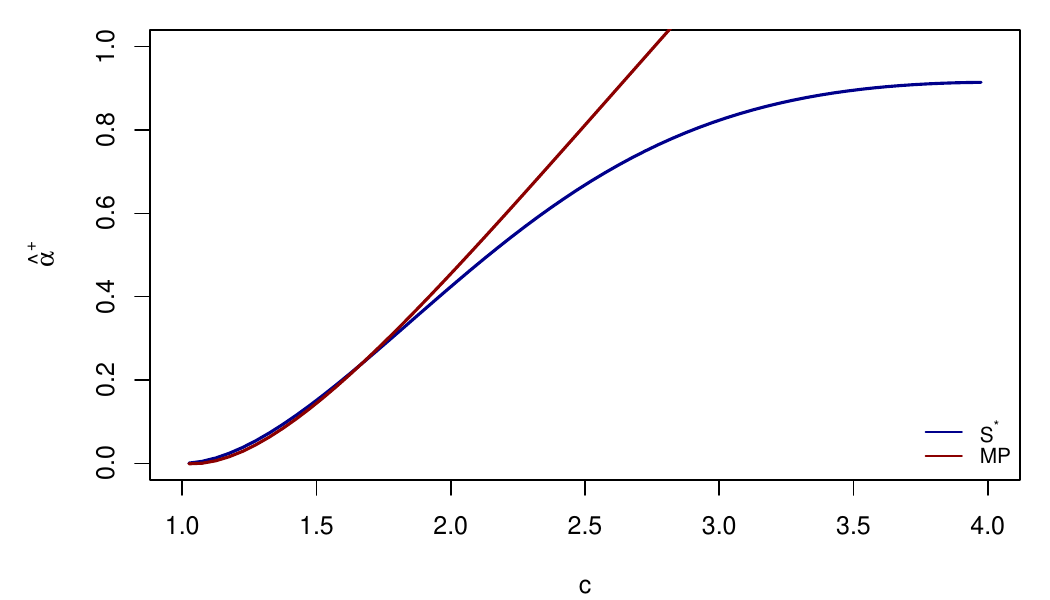}
    \caption{Estimated optimal shrinkage intensities for $\bS^*_n$ and $\bS_n^+$(MP) as function of concentration ratio $c>1$ and dimension $p=300$. }\label{alphacomp}
  \end{figure}

  \begin{figure}[h!]
    \centering
\begin{tabular}{cc}
   \hspace{-2cm} \includegraphics[width=0.55\textwidth]{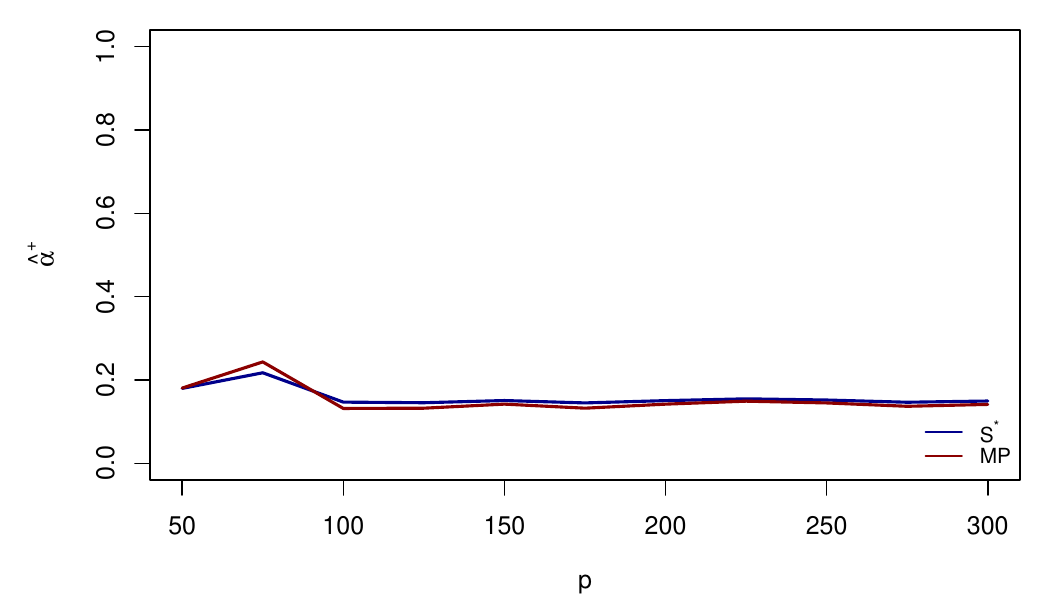} &\hspace{-0.5cm} \includegraphics[width=0.55\textwidth]{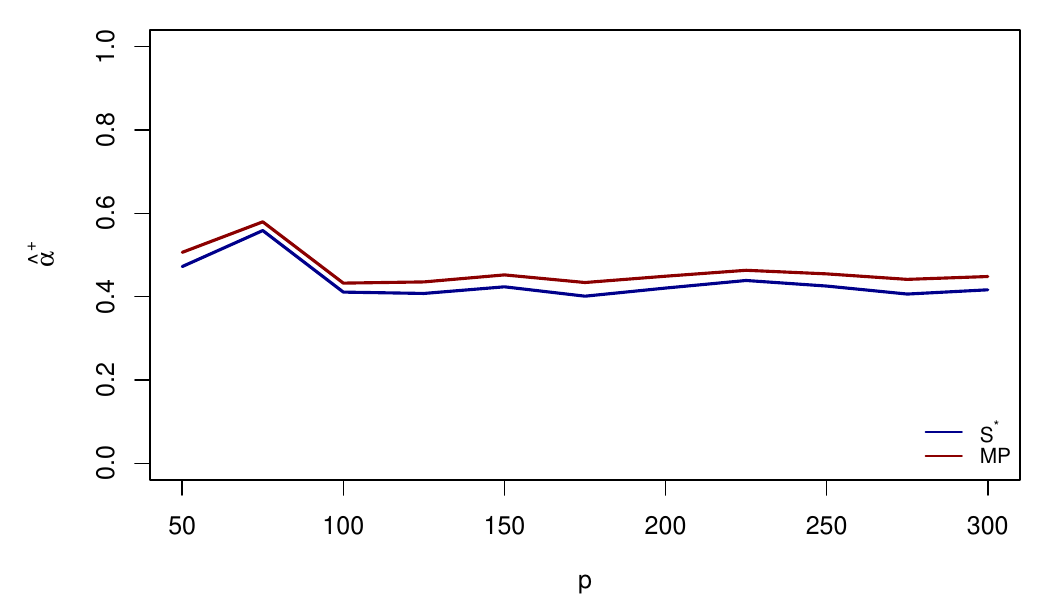}
\end{tabular}
  \caption{Estimated optimal shrinkage intensities for $\bS^*_n$ and $\bS_n^+$(MP) as function of dimension $p$ for $c=1.5$ (left) and $c=2$ (right).}\label{alphacompdim}
  \end{figure}

In Figure \ref{alphacomp} we provide a comparison between the optimal shrinkage intensities computed using different types of generalized inverses, namely Moore-Penrose inverse $\bS_n^+$ and the reflexive inverse $\bS_n^*$ from \eqref{geninverse}. The optimal shrinkage intensity is calculated by \eqref{a:bf2} in the case of $\bS_n^+$ and by using \eqref{a:bf2} where $\hat{R}^+_c$, $\hat{V}^+_c$, and $\hat{s}^+_c$ are replaced by $\hat{R}^o_c$, $\hat{V}^o_c$, and $\hat{s}^o_c$ from Proposition \ref{prop2} in the case of $\bS_n^*$. The design of the simulation study is exactly the same as the one described in Section \ref{simulations}.

We observe that the two optimal shrinkage intensities are quite identical till the breaking point $c=2$. For $c>2$ the Moore-Penrose approximation is not reliable anymore. We observe a similar behaviour for other structures of the covariance matric $\bSigma_n$ and the mean vector $\bm_n$, which indicates that the results are robust and justifies the theoretical findings of \cite{bodpar2020} for the optimal shrinkage intensity given in \eqref{a:bf2}. Figure \ref{alphacompdim} provides further numerical results related to the comparison of the two optimal shrinkage intensities. Here, we set $c=1.5$ and $c=2$ and study the robustness of the results in Figure \ref{alphacomp}  to changes in the dimension $p$ from $50$ to $300$. We see that the shrinkage intensities are very similar uniformly over $p$, independently of the chosen value of $c$.

Summarizing the above findings, we can recommend the application of the Moore-Penrose approximation for $c\le 2$, but there is no guarantee for a good performance for $c>2$. Also, we observe that the Moore-Penrose inverse gets closer to $\bS_n^*$ when the covariance matrix $\bSigma_n$ is sparse. In the empirical study of Section \ref{empirical} we consider the values of the concentration ratio $c$ bounded by $2$. The empirical results are in line with the discussion provided in this remark and, thus, $c\approx 2$ seems to be a breaking point for the approximation indeed. More theoretical treatment of this interesting phenomenon is of an independent research interest and is not within the scope of this paper.
  \end{remark}

\begin{remark}\rm
 Seemingly, we have handled two cases $c<1$ and $c>1$ (for $c \le 2$), but not $c=1$. The case $c=1$ is not easy to manage because the sample covariance matrix is theoretically invertible for $c$ equal or close to one but computationally very unstable. The reason is the smallest eigenvalue of $\bS_n$ which is numerically very close to zero. Indeed, it is well-known that the smallest eigenvalue of $\bS_n$ is of order $(1-\sqrt{p/n})^2$, which converges to zero if $p/n\to1$ and all the estimators explode (see, e.g., \cite{baiyin1993}).

In order to overcome the difficulty in a small neighborhood of $c=1$ one has a few options to proceed:
\begin{description}
\item[Tikhonov (ridge)] One of the possibilities, which has also been used in the simulation and empirical studies of Section \ref{sim-emp}, is the Tikhonov (ridge) approximation of the Moore-Penrose inverse. Indeed, one can show by the eigenvalue decomposition that 
\begin{eqnarray}\label{Tikhonov}
\lim\limits_{\delta\to0^+}(\bS_n+\delta\bI)^{-1}\bS_n(\bS_n+\delta\bI)^{-1}= \bS^+_n
\end{eqnarray}
if $c>1$. For $c<1$ the limit in \eqref{Tikhonov} trivially exists and equals $\bS_n^{-1}$. The advantage of the representation \eqref{Tikhonov} is twofold. First, one has an elegant formula which incorporates both cases $c<1$ and $c>1$, and, secondly, one stabilizes the behaviour of the inverse matrix near singularity, i.e., near $c=1$. The only question arises how to choose $\delta=o(1)$ in practice, but it seems that taking $\delta=1/p$ works well in many applications. Thus, we will employ this adjustment in the empirical study in order to have a balanced and stable estimator when $c$ is close to $1$ from both sides. Although this procedure smoothes out the estimator of the precision matrix, it does not resolve the issue when $c$ is large, i.e., $c>2$.

\item[Moore-Penrose] Yet another option would be to derive the explicit limit of \eqref{alfa1} when the Moore-Penrose inverse matrix $\bS^+_n$ is directly used in \eqref{gse1}-\eqref{hatQ1}. This procedure is highly nontrivial because $\bS_n^+$ depends in a nonlinear way  on the matrix $\bSigma_n$ and this leads to nonlinear integral equations in the high-dimensional setting. The problem becomes even more involved when we consider quadratic and bilinear forms involving the Moore-Penrose inverse. Moreover, the case of centered sample covariance matrix (the sample mean is subtracted) makes the expressions tedious and confusing (see, e.g., \cite{pan2014}). Thus, the fact that the optimal shrinkage intensity depends on the mean vector $\bm_n$ and the covariance matrix $\bSigma_n$ only via the three parameters of the efficient frontier will be lost and no closed-form formulas can be derived for $\bSigma_n\neq \sigma^2\bI$. Nevertheless, we are working on this problem in a separate project and are studying the properties of the pseudo-inverse $\bS_n^+$ in detail, especially the limiting behaviour of its eigenvectors.

\item[Double-shrinkage] A more intuitive option is to apply a double-shrinkage approach, which incorporates both the shrinkage of the estimated portfolio weights and the regularization of the sample covariance matrix. Namely, first we shrink (regularize) the sample covariance matrix and then we shrink the estimated portfolio weights built upon the regularized sample covariance matrix. This could be done by taking  either the matrix $\bS_n+\delta\bI$ for some $\delta>0$ or the one from \eqref{Tikhonov} instead of $\bS_n$. Then the limiting shrinkage intensity $\alpha$ will depend on $\delta$ and one needs to optimize the objective function \eqref{utility} over $\delta$ as well. Unfortunately, no closed-form expression for the optimal shrinkage intensity is available in this case and the theoretical results presented in Section \ref{sec:main} must be changed accordingly. Actually, this procedure would generalize all of the above mentioned ideas (including the ones presented in this paper) and would provide a data-driven regularization parameter $\delta$. This interesting and important topic is left for future research.
\end{description}

\end{remark}

\section{Simulation and empirical studies} \label{sim-emp}

In this section we illustrate the performance and the advantages of the derived results using simulated and real data. Particularly we address the estimation precision of the shrinkage coefficient and compare the bona-fide estimator with the existent approaches.

\subsection{Simulation study}\label{simulations}

For simulation purposes we select the structure of the spectrum of the covariance matrix and of the mean vector to make it consistent with the characteristics of the empirical data. Particularly, for each dimension $p$ we select the expected returns equally spread on the interval -0.3 to 0.3, capturing a typical spectrum of daily returns measured in percent. The covariance matrix has a strong impact on the properties of the shrinkage intensity and for this reason we consider several structures of its spectra. Replicating the properties of empirical data we generate covariance matrices with eigenvalues  satisfying the equation $\lambda_i = 0.1 e^{\delta c \cdot (i-1)/p}$ for $i=1,...,p$ (see, e.g., \cite{HDShOP} for implementation). Thus the smallest eigenvalue is 0.1 and by selecting appropriate values for $c$ we control the largest eigenvalue and thus the condition index of the covariance matrix. Large condition indices imply ill-conditioned covariance matrices, with the eigenvalues very sensitive to changes of the elements. We choose $\delta$ to attain the condition indices of 150, 1000 and 8000. The target portfolio weights are set equal to the weights of the equally weighted portfolio, i.e. $b_i=1/p$  for $i=1,...,p$. The calibration criteria used to determine the optimal shrinkage intensities are selected as defined in Proposition \ref{calibr}.

\begin{figure}[h!]
\begin{tabular}{cc}
  \includegraphics[width=0.5\textwidth]{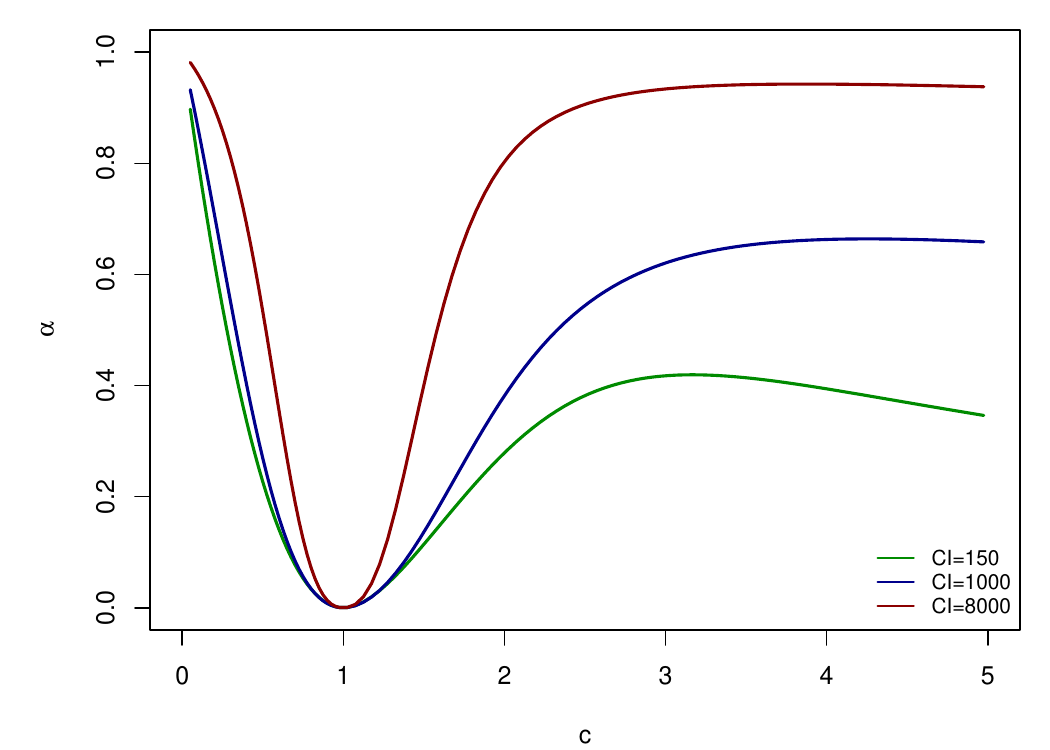}& \includegraphics[width=0.5\textwidth]{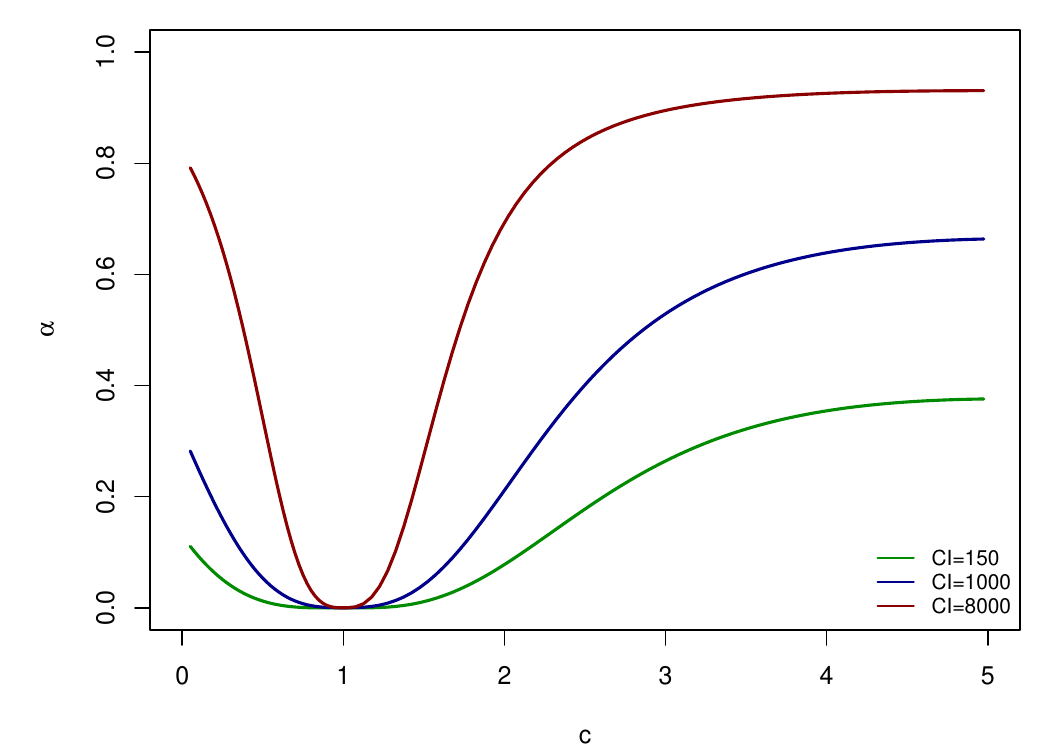}
\end{tabular}
\caption{The asymptotic optimal shrinkage intensity as a function of $c$ for the calibration criteria (i)-(ii) from Proposition \ref{calibr} (left to right).}
\label{fig:alphastar}
\end{figure}

 First, we assess the general behavior of the oracle shrinkage intensities as functions of $c$. The oracle shrinkage intensities are computed using expressions in (\ref{ainfty}) and (\ref{ainfty2}) for the cases $c<1$ and $c>1$, respectively. The parameters are computed using the true mean vector and the true covariance matrix.  The results are illustrated for different condition indices  and different calibration criteria in Figure \ref{fig:alphastar}. We observe that in all cases the shrinkage intensity falls to zero as $c\rightarrow 1_{-}$  and increases with $c$ for $c>1$. Thus if $c$ is small the shrinkage estimator puts higher weight on the traditional estimator of the portfolio weights, due to lower estimation risk. If $c$ tends to 1 the system becomes unstable because of nearly zero eigenvalues. In this case the portfolio weights collapse to the target portfolio weights. With  $c$ further increasing the shrinkage intensity increases too, implying that the pseudo-inverse covariance matrix can be evaluated in a proper way. The fraction of the sample EU portfolio increases with $c$ in this case. It is worth mentioning that at some high level of $c$ the information content in the data becomes less relevant and the shrinkage intensity starts to decrease. Note, however, that even for $p$ much larger than $n$, there is still valuable information in the sample covariance matrix leading to relatively high values of $\alpha^+$.

  Regarding the calibration criteria we observe that if the calibration criteria coincides with the expected quadratic utility (i.e. $\beta=\gamma$), then the limit shrinkage intensities are naturally higher, compared to those minimizing the out-of-sample variance. The variance of portfolio return for the equally weighted portfolio tends to be lower than that of the sample EU portfolio. Thus the shrinkage intensity weights the equally weighted portfolio more heavily. It is important to stress that the latter calibration criterion is more sensitive to the condition index.

\begin{figure}[h!]
\begin{center}
\begin{tabular}{cc}
  \includegraphics[width=0.5\textwidth]{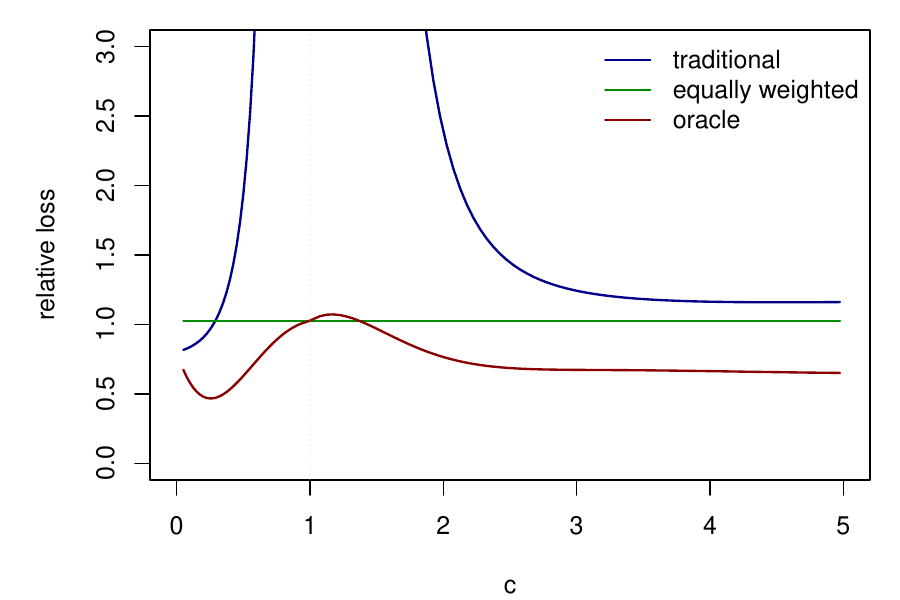} & \includegraphics[width=0.5\textwidth]{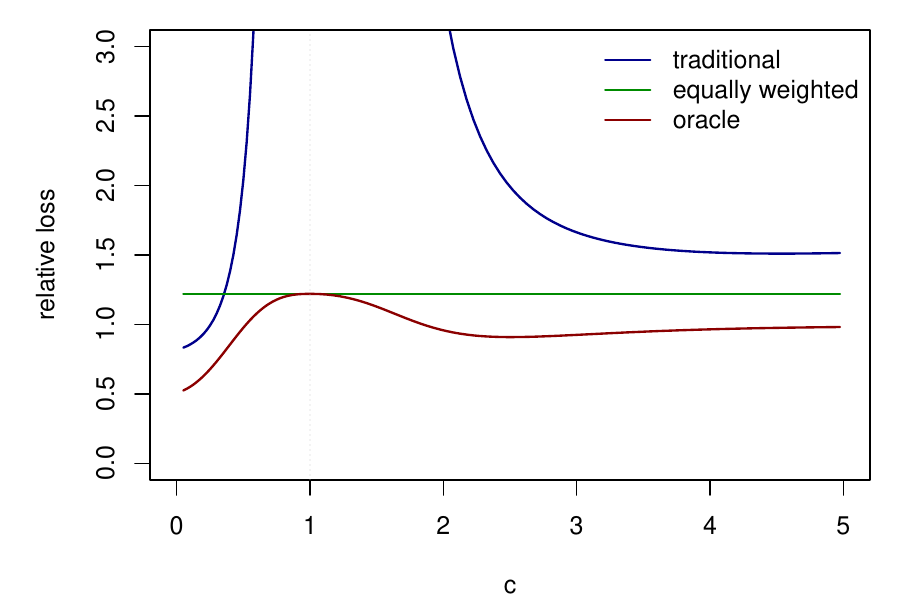}
\end{tabular}
\caption{The relative losses for the portfolios based on the optimal shrinkage estimator, the traditional estimator and the equally weighted portfolio  as a function of $c$ for the calibration criteria (i)-(ii) from Proposition \ref{calibr} (left to right).  The dimension is set to  $p=100$ and  the condition index is set to 1000.}
\label{fig:losses:p}
\end{center}
\end{figure}

In a similar fashion we analyze the relative losses of portfolios based on the traditional estimator and the oracle shrinkage estimator. As a benchmark, we take the equally weighted portfolio which is also the target portfolio of the shrinkage estimator. The relative losses as functions of $c$ for fixed $p=100$ are plotted in Figure \ref{fig:losses:p}. For $c<1$ the losses of the traditional estimator show explosive behavior and are comparable to the shrinkage-based estimators only for very small values of $c$. Thus the traditional estimator is reliable only if the sample size is considerably larger than the portfolio dimension. 
The performance of the shrinkage-based estimator is relatively stable over the whole range of $c$ and it clearly dominates both the traditional and the equally weighted benchmark in almost all of the considered cases. The losses are increasing for $c<1$ and attain the loss of the equally weighted portfolio around $c=1$. This is consistent with the results in Figure \ref{fig:alphastar}. For $c>1$ the losses decrease and remain stable for $c>3$. 

\begin{figure}[h!]
\begin{tabular}{cc}
\includegraphics[width=0.5\textwidth]{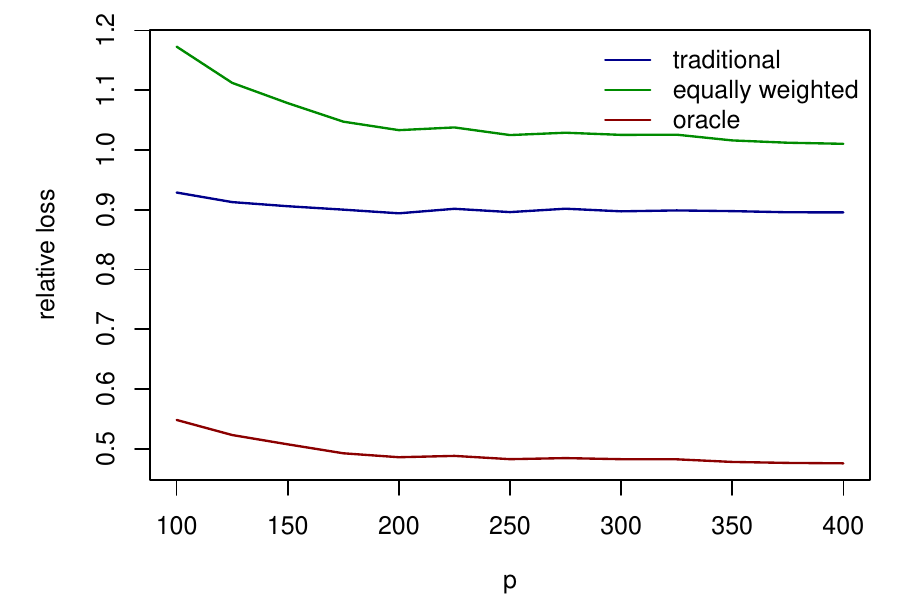} &
\includegraphics[width=0.5\textwidth]{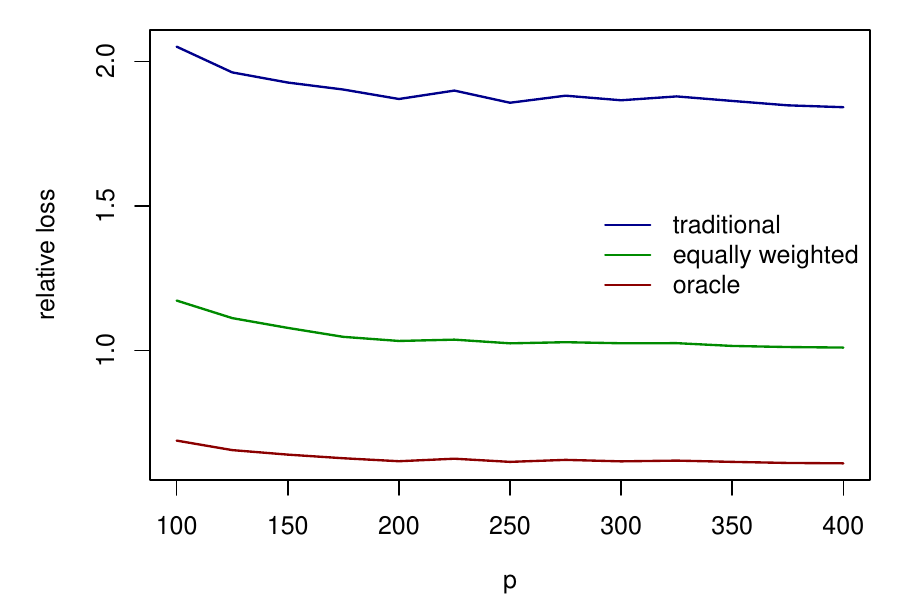} \\
\includegraphics[width=0.5\textwidth]{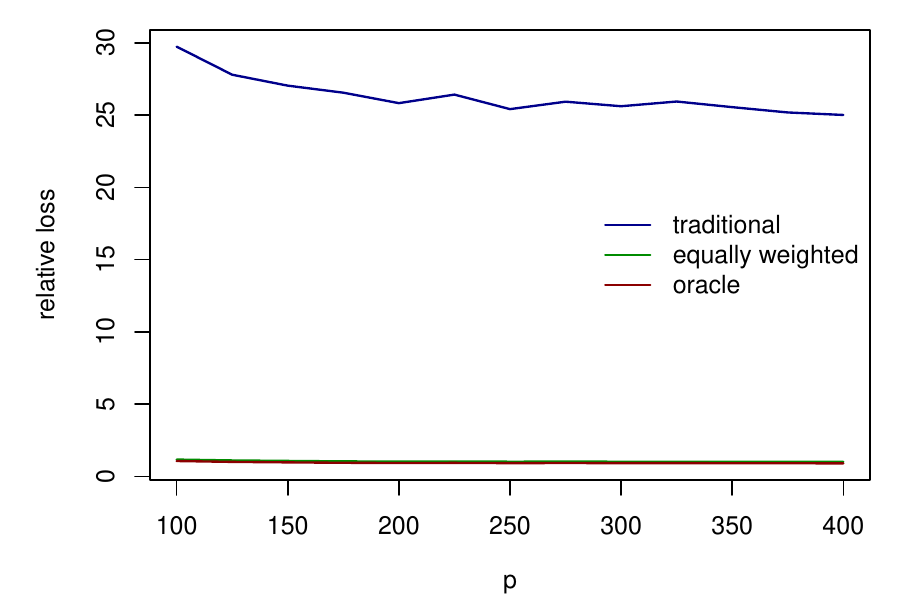} &
\includegraphics[width=0.5\textwidth]{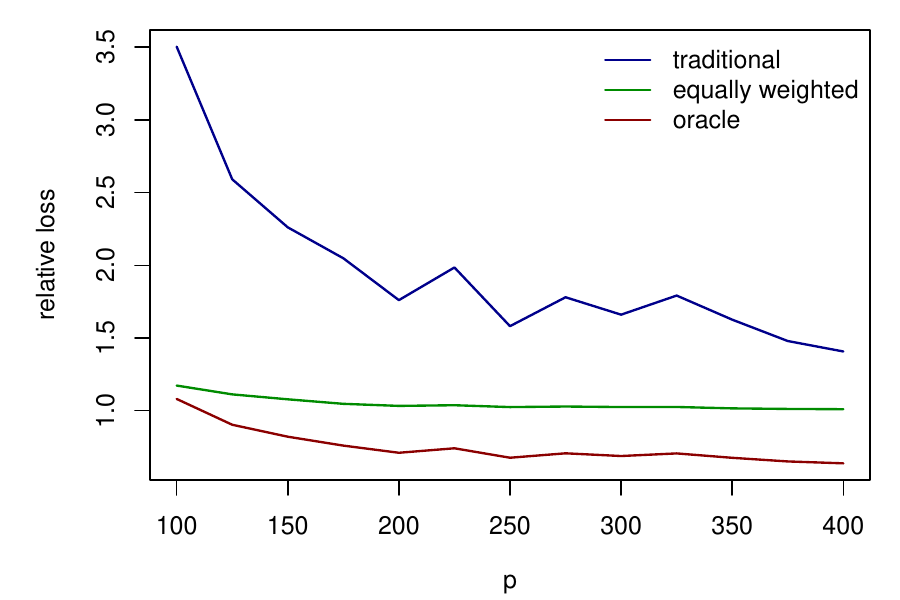}
\end{tabular}\caption{The relative losses for the portfolios based in the optimal shrinkage estimator, the traditional estimator and the equally weighted portfolio as a function of the dimension $p$ for $c=$ 0.2 (top left)), \; 0.5 (top right), \;0.8 (bottom left),  2 (bottom right). The condition index is set to 1000 and the mean-variance calibration criteria is used.}
\label{fig:losses}

\end{figure}

The behavior of losses as functions of the dimension $p$ is illustrated in Figure \ref{fig:losses}. For space reasons we provide here only the results for $\beta=\gamma$, i.e., for the first calibration criterion. The fraction $c$ is set to 0.2 (top left), 0.5 (top right), 0.8 and 2, while the condition index equals 1000. From financial perspective it is important to note that the traditional estimator outperforms the equally weighted portfolio only for small values of $c$ (in the particular setup for $c=0.2$), thus when the classical estimators are stable and robust. This is consistent with the evidence from Figure \ref{fig:losses:p}. For $c=0.8$ the losses of the traditional estimator increase dramatically and they are always considerably larger than the losses of two other considered trading strategies. As before the shrinkage-based estimator clearly beats both the traditional EU portfolio and the benchmark portfolio for all $c $ values. 
Furthermore, the performance of the shrinkage-based portfolio is stable for a wide range of dimensions, particularly for large values of $p$.

\subsection{Empirical study}\label{empirical}

 The  data used in this study cover daily returns on 395 S\&P500 constituents available for the whole period from 01.01.2000 till 23.03.2018. The investor allocates his/her wealth to the constituents with daily reallocation. We address several issues in this empirical study. First, we wish to verify the robustness of the established theoretical results for empirical data. Thus our aim is to go beyond the common practice of considering a single portfolio, but to generate a large set of different portfolios from the universe of the S\&P 500 constituents. Second, we assess the economic performance of the dynamic portfolio strategies stemming from the generalized shrinkage estimator for portfolio weights. Thus we consider several popular performance measures and test the significance in the differences between the alternative strategies. Third, the choice of the target portfolio can clearly have a substantial impact on the empirical results. For this reason, we consider several popular choices of the target. Finally, we wish to assess the dynamics of the estimated shrinkage intensities and relate their behavior to the market conditions. Next, we provide details on the setup of the empirical study.

To address the applicability of the suggested estimator in high dimensions we set $p=300$ which is  larger than a typical portfolio size in the literature. For each parameter constellation we draw randomly 1000 sets of assets from the available constituents of the S\&P500 index. This guarantees a robust assessment of the empirical results. For every set of the assets we build portfolios on each of the last 1000 trading days and compute the corresponding realized returns. Afterwards, we compute the certainty equivalent (CE), Sharpe ratio (SR), Value-at-Risk (VaR) and Expected shortfall (ES) as performance measures for each path of returns and every random portfolio. To avoid potentially skewed inferences due to outliers or asymmetries we report the 10\%-trimmed means and the medians of the CE and SR over the 1000 random portfolios. The VaR and ES are computed as lower empirical quantiles at $5\%$ and $1\%$ significance levels and are averaged over the portfolios either. For simplicity we neglect the transaction costs in the below discussion.

\subsubsection*{Target portfolios}

The target portfolio weights are the key component of the shrinkage estimator. We consider three different targets: the equally weighted portfolio and two modified global minimum-variance portfolios. The equally weighted portfolio arises if we assume that all asset returns have equal expectations, equal variances and equal correlations. The covariance matrix for the first global-minimum variance portfolio assumes different variances, but equal correlations. Thus allows for more heterogeneity compared to the equally weighted portfolio. The single correlation is computed as the average correlation for all asset pairs. The corresponding target is computed by $\mathbf{b}_{ec} =   \hat{\bSigma}_{ec}^{-1} \bi / \bi^\prime \hat{\bSigma}_{ec}^{-1}\bi$, where $\hat{\bSigma}_{ec} = diag\{\hat{\bSigma}\}^{1/2} \bR_{ec} diag\{\hat{\bSigma}\}^{1/2}$ with $diag\{\hat{\bSigma}\}$ being the diagonal of the sample covariance matrix of returns. For the second global minimum-variance portfolio we compute the covariance matrix of returns using the three-factor Fama-French model, i.e.
\[\hat{\bSigma}_{ff} = \hat{\mathbf{B}} \hat{\bSigma}_{f}\hat{\mathbf{B}}^\prime + diag\{\hat{\sigma}_{\eps_i}^2\}_{i=1,...,p},\]
where $\hat{\mathbf{B}}$ is the matrix of estimated parameters, $\hat{\bSigma}_{f}$ is the covariance matrix of the factors and $diag\{\hat{\sigma}_{\eps,i}^2\}_{i=1,...,p}$ is the diagonal matrix of residual variances.
The resulting target vector of weights is defined as $\mathbf{b}_{ff} = \hat{\bSigma}_{ff}^{-1} \bi / \bi^\prime \hat{\bSigma}_{ff}^{-1}\bi$. The latter two portfolios reduce the variation of portfolios by looking at the variance but not the mean of the underlying assets. Note, that $\mathbf{b}_{ec}$ and $\mathbf{b}_{ff}$ are stochastic by construction, since they are computed  using sample characteristics of the asset returns. Thus the theoretical results in these cases are valid only conditionally on the target vector.

\subsubsection*{Benchmark models}

To guarantee a fair assessment of the suggested estimator we consider two popular approaches as benchmarks. The first approach is based on the non-linear shrinkage estimator of the covariance matrix suggested by \cite{lw12, ledoit2017nonlinear}. The estimator relies on the spectral decomposition of the sample covariance matrix $\hat{\bSigma} = \bU \bD \bU^\prime$, but replaces the original eigenvalues by eigenvalues $\bD^*$ that minimize the Frobenius norm $\bD^*= \text{argmin}_{\bD} ||\bSigma - \bU \bD \bU^\prime||$. The solution can be approximated using a generalized version of the Mar\v{c}enko-Pastur equation. First, we consider a numerical implementation of this method proposed in \cite{lw2017} (called \textit{LWQuEST}). The direct numerical computation of the eigenvalues appears to be very demanding. Recently, \cite{lw2020} suggest an analytic expression of the nonlinear shrinkage estimator that uses a nonparametric estimator of the spectral density (called \textit{LWAnalytic}). In order to determine the optimal portfolio weights we use these two approaches as a plug-in estimator of the covariance matrix. Note that in contrary to our approach, these methods shrink a parameter of the distribution of asset returns and not the portfolio weights, which are the key object of interest in asset allocation problems. The second benchmark is an extension of the three-fund portfolio of \cite{kz07}. The optimal portfolio is a linear combination of the target portfolio, the sample global minimum-variance portfolio and the portfolio that maximizes the Sharpe ratio.\footnote{We thank Raymond Kan for providing these results in personal communication.}.

 The weights of this linear combination are determined by maximizing the expected out-of-sample utility. Note that \cite{kz07} derive the optimal portfolios assuming a finite portfolio with $p<n$. The extension of these results to the case $p/n\rightarrow c$ goes beyond the scope of this paper. For this reason we apply this estimator only for the case $p<n$. For both benchmarks we estimate the expected returns by the historical mean returns, as it is frequently done in practice.

\subsubsection*{Empirical results}

The results of the empirical study are summarized in Tables \ref{tab:emp:meanv} and \ref{tab:emp:minv} containing the results for mean-variance ($\beta=\gamma$) and minimum variance ($\beta=\infty$) calibration criteria, respectively. The top blocks of the table provides results for $c=0.2$, while the following blocks correspond to $c=$0.5, 0.8, and 2. At the beginning of each block we summarize the performance measures for the traditional estimator and the estimator based on the nonlinear-shrinkage of \cite{lw12}. These estimators do not depend on the target portfolio weights. Further we provide the results for the strategies involving the target, i.e. the suggested bona-fide shrinkage technique, the target portfolio itself and the extension of the estimator of \cite{kz07}. This is done for equally weighted portfolio, equal correlation portfolio and Fama-French portfolio as targets. Furthermore, we include results for the traditional portfolio and for the bona-fide shrinkage portfolio, where the sample covariance matrix is replaced by its regularized version based on the Tikhonov (ridge) approximation \eqref{Tikhonov} with $\delta=1/p$. The corresponding strategies are called \textit{trad ridge} and \textit{bona fide ridge}, respectively.

\begin{table}
\renewcommand{\arraystretch}{1}\scriptsize
\begin{center}
\adjustbox{scale=0.95}{%
\begin{tabular}{l|ccccccccc}
&& \multicolumn{2}{c}{CE} & \multicolumn{2}{c}{SR} &\multicolumn{2}{c}{VaR}&\multicolumn{2}{c}{ES} \\\hline
&& average  & median & average  & median  & $\alpha=0.05$ & $\alpha=0.01$ & $\alpha=0.05$ &$\alpha=0.01$ \\\hline
\multirow{16}{*}{\rotatebox[origin=c]{90}{$\bf c=0.2$}}
&trad & -0.84840 & -0.84795 & 0.10033 & 0.10041 & -0.88181 & -1.43362 & -1.27802 & -2.01774\\
&trad ridge  & -0.85096 & -0.85028 & 0.10066 & 0.1006 & -0.88193 & -1.43891 & -1.28130 & -2.02313\\
& LWQuEST  & -0.96539 & -0.96509 & \textbf{0.10310} & \textbf{0.10288} & -0.93884 & -1.52905 & -1.35071 & -2.10094\\
&LWAnalytical & -0.79142 & -0.79179 & 0.10141 & 0.10151 & -0.8544 & -1.37885 & -1.24301 & -1.97136\\
\cline{2-10}
& \multicolumn{8}{c}{equally weighted target}\\\cline{2-10}
&bona fide & -0.80701 & -0.80578 & 0.10125 & 0.10137 & -0.85637 & -1.41141 & -1.25832 & -2.00392\\
&bona fide ridge &  -0.80303 & -0.80305 & 0.10124 & 0.10136 & -0.85485 & -1.39997 & -1.25366 & -1.99908\\
&target & -1.47973 & -1.47978 & 0.05178 & 0.05183 & -1.35819 & -2.17993 & -1.90944 & -2.91308\\
&KZ & \textbf{-0.73868} & \textbf{-0.73831} & 0.10204 &0.10195 &\textbf{  -0.81524} & \textbf{-1.33615} & \textbf{-1.20824} & \textbf{-1.94859}\\\cline{2-10}
& \multicolumn{8}{c}{equal correlation target}\\\cline{2-10}
&bona fide & -0.86149 & -0.86029 & 0.09785 & 0.09803 & -0.90268 & -1.48278 & -1.30966 & -2.06512\\
&bona fide ridge & -0.86011 & -0.85989 & 0.09773 & 0.09788 & -0.90141 & -1.48209 & -1.30729 & -2.06240\\
&target & -2.13972 & -2.14348 & 0.05303 & 0.05291 & -1.46652 & -2.71956 & -2.21071 & -3.41674\\
&KZ & -0.93578 & -0.93437 & 0.09170 & 0.09159 & -0.94118 & -1.63936 & -1.40500 & -2.14364\\\cline{2-10}
& \multicolumn{8}{c}{Fama-French target}\\\cline{2-10}
&bona fide & -0.85595 & -0.85551 & 0.09504 & 0.09433 & -0.89107 & -1.45683 & -1.29601 & -2.04557\\
&bona fide ridge & -0.8524 & -0.85234 & 0.09552 & 0.09556 & -0.88859 & -1.45891 & -1.29374 & -2.04100\\
&target & -2.27656 & -2.27951 & 0.03522 & 0.03529 & -1.44902 & -2.83263 & -2.21467 & -3.40873\\
&KZ & -0.89766 & -0.89748 & 0.08611 & 0.08562 & -0.89600 & -1.58344 & -1.34739 & -2.09784\\\hline\hline
\multirow{16}{*}{\rotatebox[origin=c]{90}{$\bf c=0.5$}}
&trad & -1.56794 & -1.56702 & 0.04112 & 0.04158 & -1.25759 & -1.97710 & -1.74511 & -2.60700\\
&trad ridge & -1.47446 & -1.47309 & 0.04468 & 0.04461 & -1.2158 & -1.90656 & -1.68947 & -2.53159\\
&LWQuEST & -3.28708 & -3.28730 & 0.02557 & 0.02559 & -1.85906 & -2.93786 & -2.56534 & -3.68434 \\
&LWAnalytical & \textbf{-0.86718} & \textbf{-0.86687} & \textbf{0.06531} & \textbf{0.06548} & \textbf{-0.89736} & -1.50835 & \textbf{-1.31493} & \textbf{-2.09438}\\\cline{2-10}
& \multicolumn{8}{c}{equally weighted target}\\\cline{2-10}
&bona fide & -1.02898 & -1.02829 & 0.05067 & 0.05098 & -1.00189 & -1.67912 & -1.458300 & -2.30347\\
&bona fide ridge & -0.99158 & -0.9911 & 0.05342 & 0.05347 & -0.98209 & -1.65375 & -1.43523 & -2.27809\\
&target & -1.47854 & -1.47831 & 0.05167 & 0.05170 & -1.35780 & -2.18133 & -1.90904 & -2.91312\\
&KZ & -0.93245 & -0.93159 & 0.05903 & 0.05973 & -0.95118 & \textbf{-1.50578} & -1.34866 & -2.10714\\\cline{2-10}
& \multicolumn{8}{c}{equal correlation target}\\\cline{2-10}
&bona fide & -1.18004 & -1.17898 & 0.05198 & 0.05215 & -1.10187 & -1.82427 & -1.59237 & -2.44481\\
&bona fide ridge & -1.15576 & -1.15526 & 0.05312 & 0.05308 & -1.09212 & -1.81848 & -1.58300 & -2.43487\\
&target & -1.81699 & -1.81663 & 0.04036 & 0.04029 & -1.40816 & -2.42265 & -2.10357 & -3.33301\\
&KZ & -1.01288 & -1.01141 & 0.05760 & 0.05770 & -1.01531 & -1.57685 & -1.41783 & -2.14790\\\cline{2-10}
& \multicolumn{8}{c}{Fama-French target}\\\cline{2-10}
&bona fide & -1.22574 & -1.22594 & 0.04963 & 0.04973 & -1.11090 & -1.75940 & -1.57373 & -2.42489\\
&bona fide ridge & -1.19113 & -1.19039 & 0.05083 & 0.05064 & -1.09567 & -1.73454 & -1.55387 & -2.40117\\
&target & -1.93351 & -1.9336 & 0.04278 & 0.04265 & -1.40284 & -2.48907 & -1.99845 & -2.92676\\
&KZ & -1.01496 & -1.01416 & 0.05761 & 0.05782 & -1.00263 & -1.55246 & -1.40187 & -2.16018\\\hline\hline
\multirow{16}{*}{\rotatebox[origin=c]{90}{$\bf c=0.8$}}
&trad & -11.10781 & -11.07733 & 0.04031 & 0.03992 & -3.35917 & -5.40548 & -4.66218 & -6.70938\\
&trad ridge & -5.60256 & -5.59981 & 0.04942 & 0.04921 & -2.40138 & -3.8217 & -3.29866 & -4.71741\\
&LWQuEST & -18.80892 & -18.81717 & 0.03520 & 0.03521 & -4.51683 & -6.92530 & -6.05810 & -8.35046\\
&LWAnalytical & \textbf{-0.98453} & \textbf{-0.98438} & \textbf{0.08988} & \textbf{0.08999} & \textbf{-0.98071} & \textbf{-1.61591} & \textbf{-1.42389} & \textbf{-2.24346}\\\cline{2-10}
& \multicolumn{8}{c}{equally weighted target}\\\cline{2-10}
&bona fide & { -1.40577} & {-1.40584} & 0.05118 & 0.05134 & {-1.29895} & {-2.07527} & {1.83362} & {-2.79312}\\
&bona fide ridge &  -1.31635 & -1.31596 & {0.05238} & {0.05235} &-1.28390 & -2.02482 & -1.79171 & -2.74021\\
&target & -1.48023 & -1.47918 & 0.05170 & 0.05171 & -1.35865 & -2.18186 & -1.91045 & -2.9148\\
&KZ & -2.03472 & -2.03185 & 0.06359 & 0.06288 & -1.40796 & -2.25191 & -1.97686 & -2.94533\\\cline{2-10}
& \multicolumn{8}{c}{equal correlation target}\\\cline{2-10}
&bona fide & {-1.60399} & {-1.60241} & 0.04978 & 0.04984 & {-1.29237} & -2.39802 & -1.98752 & -3.17541\\
&bona fide ridge & -1.55829 & -1.55768 & 0.05013 & 0.05005 & -1.26800 & -2.35739 & -1.96495 & -3.16411\\
&target & -1.68942 & -1.68949 & 0.04453 & 0.04452 & -1.3124 & -2.45543 & -2.05351 & -3.35801\\
&KZ & -1.96623 & -1.96299 & 0.06550 & 0.06522 & -1.37667 & -2.20379 & -1.93809 & -2.91761\\\cline{2-10}
& \multicolumn{8}{c}{Fama-French target}\\\cline{2-10}
&bona fide & {-1.79593} & {-1.79528} & 0.04740 & 0.04760 & -1.39084 & {-2.26699} & {-1.94795} & {-2.84830}\\
&bona fide ridge & -1.69800 & -1.6972 & 0.04796 & 0.04793 & -1.34454 & -2.23851 & -1.90270 & -2.78836\\
&target & -1.83876 & -1.83798 & 0.04355 & 0.04365 & -1.37985 & -2.41411 & -1.98305 & -2.94513\\
&KZ & -2.06935 & -2.07129 & 0.06506 & 0.06515 & -1.41329 & -2.27824 & -1.98427 & -2.95639\\\hline\hline
\multirow{13}{*}{\rotatebox[origin=c]{90}{$\bf c=2.0$}}
&trad & -2.12939 & -2.13080 & 0.06414 & 0.06427 & -1.44683 & -2.38186 & -2.07522 & -3.18270\\
&trad ridge &  -2.05179 & -2.05106 & 0.06509 & 0.06514 & -1.42082 & -2.34175 & -2.03791 & -3.12989\\
&LWQuEST & -59.03298 & -58.99561 & 0.00629 & 0.00618 & -7.9549 & -12.68863 & -10.94189 & -15.31889\\
&LWAnalytical & -1.50503 & -1.50439 & 0.06355 & 0.06375 & -1.26581 & -1.95316 & -1.7726 & -2.69539\\\cline{2-10}
& \multicolumn{8}{c}{equally weighted target}\\\cline{2-10}
&bona fide & \textbf{-1.22550} & \textbf{-1.22474} & 0.05904 & 0.05907 & -1.18681 & -1.99190 & -1.72914 & -2.75302\\
&bona fide ridge & -1.30872 & -1.30891 & 0.06790 & 0.06782 & -1.14901 & -1.94775 & -1.70477 & -2.73827\\
&target & -1.47869 & -1.47779 & 0.05175 & 0.05176 & -1.35801 & -2.17908 & -1.90898 & -2.91238\\
\cline{2-10}
& \multicolumn{8}{c}{equal correlation target}\\\cline{2-10}
&bona fide & -1.22592 & -1.22761 & 0.07282 & 0.07288 & \textbf{-1.09681} & -2.09756 & -1.74639 & -2.91502\\
&bona fide ridge & -1.24075 & -1.23914 & \textbf{0.07872} & \textbf{0.07873} & -1.10084 & \textbf{-1.93814} & \textbf{-1.67112} & {-2.70638}\\
&target & -1.32850 & -1.32805 & 0.07020 & 0.07020 & -1.13496 & -2.16850 & -1.83174 & -3.08545\\
\cline{2-10}
& \multicolumn{8}{c}{Fama-French target}\\\cline{2-10}
&bona fide & -1.44490 & -1.44438 & 0.05890 & 0.05867 & -1.22790 & -2.09709 & -1.75667 &\textbf{-2.65269}\\
&bona fide ridge & -1.42024 & -1.41913 & 0.06947 & 0.06922 & -1.19613 & -1.96825 & -1.72476 & -2.66945\\
&target & -1.62084 & -1.62093 & 0.05377 & 0.05360 & -1.29152 & -2.34075 & -1.86625 & -2.83814\\[-0.4cm]
\end{tabular}
}
\end{center}
\caption{\scriptsize  Performance of traditional, bona-fide, the benchmark portfolios (LWQuEST - \cite{lw2017}, LWAnalytical - \cite{lw2020}, KZ - \cite{kz07}) and the target portfolios for the {\it mean-variance calibration} criteria. The performance measures are averaged over 1000 random portfolios of size 300. The trading period consists of 1000 days preceding 23.03.2018 and the risk aversion is set to 5. The average values are based on trimmed mean with 10\% of extreme values being dropped. The best strategies for every criteria and every values of $c$ are highlighted in bold. }\label{tab:emp:meanv}
\end{table}

\begin{table}[p!]
\renewcommand{\arraystretch}{1}\scriptsize
\begin{center}
\adjustbox{scale=0.95}{%
\begin{tabular}{l|ccccccccc}
&& \multicolumn{2}{c}{CE} & \multicolumn{2}{c}{SR} &\multicolumn{2}{c}{VaR}&\multicolumn{2}{c}{ES} \\\hline
&& average  & median & average  & median  & $\alpha=0.05$ & $\alpha=0.01$ & $\alpha=0.05$ &$\alpha=0.01$ \\\hline
\multirow{16}{*}{\rotatebox[origin=c]{90}{$\bf c=0.2$}}
&trad & -0.85309 & -0.85231 & 0.10035 & 0.10057 & -0.88250 & -1.44210 & -1.28383 & -2.02584\\
&trad ridge & -0.84923 & -0.84889 & 0.09923 & 0.09909 & -0.88174 & -1.43823 & -1.27883 & -2.01864\\
&LWQuEST & -0.96554 & -0.96474 & \textbf{0.10343} & \textbf{0.10359} & -0.93948 & -1.52539 & -1.35090 & -2.10343\\
&LWAnalytical & -0.79114 & -0.79048 & 0.1014 & 0.10161 & -0.85381 & -1.38047 & -1.24188 & -1.96952\\\cline{2-10}
& \multicolumn{8}{c}{equally weighted target}\\\cline{2-10}
&bona fide & -0.81308 & -0.81284 & 0.10123 & 0.10150 & -0.85908 & -1.41182 & -1.26178 & -2.0085\\
&bona fide ridge &  -0.80998 & -0.81034 & 0.10012 & 0.10001 & -0.85828 & -1.40687 & -1.25735 & -2.00243\\
&target & -1.47805 & -1.47784 & 0.05169 & 0.05168 & -1.35704 & -2.17938 & -1.90829 & -2.91097\\
&KZ & \textbf{-0.73828} & \textbf{-0.73822} & {0.10187} & {0.10195} & \textbf{-0.81400} & \textbf{-1.33608} & \textbf{-1.20828} & \textbf{-1.95301}\\\cline{2-10}
& \multicolumn{8}{c}{equal correlation target}\\\cline{2-10}
&bona fide & -0.85773 & -0.85717 & 0.09806 & 0.09785 & -0.89740 & -1.47224 & -1.30236 & -2.05662\\
&bona fide ridge & -0.85262 & -0.85171 & 0.09803 & 0.0981 & -0.89563 & -1.46898 & -1.29994 & -2.05272\\
&target & -2.13255 & -2.12898 & 0.05319 & 0.05312 & -1.46343 & -2.71591 & -2.20532 & -3.40780\\
&KZ & -0.93731 & -0.93588 & 0.09141 & 0.09120 & -0.94052 & -1.63821 & -1.40403 & -2.14399\\\cline{2-10}
& \multicolumn{8}{c}{Fama-French target}\\\cline{2-10}
&bona fide & { -0.85149} & { -0.85077} & 0.09663 & 0.09677 & -0.88750 & -1.45256 & -1.29064 & -2.03126\\
&bona fide ridge & -0.84949 & -0.84947 & 0.09629 & 0.09625 & -0.88760 & -1.44700 & -1.28814 & -2.0294\\
&target & -2.27063 & -2.26895 & 0.03538 & 0.03543 & -1.44480 & -2.82390 & -2.21059 & -3.40190\\
&KZ & -0.89663 & -0.89478 & 0.08661 & 0.08659 & -0.89534 & -1.58062 & -1.34627 & -2.09320\\\hline \hline
\multirow{16}{*}{\rotatebox[origin=c]{90}{$\bf c=0.5$}}
&trad & -1.56971 & -1.57001 & 0.04173 & 0.04161 & -1.26072 & -1.97665 & -1.74582 & -2.60683\\
&trad ridge &-1.47585 & -1.4756 & 0.04465 & 0.0444 & -1.21598 & -1.90785 & -1.69053 & -2.54353\\
&LWQuEST & -3.28833 & -3.29027 & 0.02581 & 0.02608 & -1.86558 & -2.93777 & -2.56735 & -3.67757\\
&LWAnalytical & \textbf{-0.86699} & \textbf{-0.86626} & \textbf{0.0654} & \textbf{0.06526} & \textbf{-0.89816} & -1.50838 & \textbf{-1.31504} & \textbf{-2.0954}\\\cline{2-10}
& \multicolumn{8}{c}{equally weighted target}\\\cline{2-10}
&bona fide & -1.0532 & -1.05298 & 0.05066 & 0.05048 & -1.01036 & -1.69687 & -1.46966 & -2.30812\\
&bona fide ridge &  -1.14637 & -1.14571 & 0.05275 & 0.05322 & -1.08067 & -1.78718 & -1.56281 & -2.41015\\
&target & -1.48082 & -1.48066 & 0.05171 & 0.05172 & -1.35847 & -2.18138 & -1.91024 & -2.91449\\
&KZ & -0.93323 & -0.93338 & 0.05939 & 0.05941 & -0.95222 & \textbf{-1.50541} & -1.34938 & -2.10436\\\cline{2-10}
& \multicolumn{8}{c}{equal correlation target}\\\cline{2-10}
&bona fide & -1.17591 & -1.17430 & 0.05196 & 0.05184 & -1.09485 & -1.80109 & -1.57693 & -2.42242\\
&bona fide ridge &-1.14637 & -1.14571 & 0.05275 & 0.05322 & -1.08067 & -1.78718 & -1.56281 & -2.41015\\
&target & -1.80886 & -1.80920 & 0.04010 & 0.04001 & -1.40596 & -2.41696 & -2.09826 & -3.32139\\
&KZ & -1.01397 & -1.01394 & 0.05762 & 0.05790 & -1.01524 & -1.58062 & -1.41661 & -2.14494\\\cline{2-10}
& \multicolumn{8}{c}{Fama-French target}\\\cline{2-10}
&bona fide & -1.22924 & -1.22845 & 0.04980 & 0.05017 & -1.11019 & -1.75950 & -1.56925 & -2.41742\\
&bona fide ridge & -1.18967 & -1.18982 & 0.05057 & 0.05064 & -1.09012 & -1.72857 & -1.54528 & -2.39013\\
&target & -1.93623 & -1.93641 & 0.04259 & 0.04257 & -1.40372 & -2.49587 & -2.00108 & -2.93538\\
&KZ & -1.01377 & -1.01328 & 0.05759 & 0.05765 & -1.00204 & -1.55270 & -1.40064 & -2.15907\\\hline\hline
\multirow{16}{*}{\rotatebox[origin=c]{90}{$\bf c=0.8$}}
&trad & -11.1587 & -11.12847 & 0.03821 & 0.03725 & -3.3777 & -5.43327 & -4.67808 & -6.73814\\
&trad ridge & -5.59307 & -5.5935 & 0.05126 & 0.05138 & -2.39789 & -3.80057 & -3.28431 & -4.69414\\
&LWQuEST & -18.84221 & -18.84773 & 0.03577 & 0.03624 & -4.50978 & -6.96099 & -6.06441 & -8.39818\\
&LWAnalytical & \textbf{-0.98496} & \textbf{-0.98436} & \textbf{0.08942} & \textbf{0.08955} & \textbf{-0.98112} & \textbf{-1.61597} &\textbf{ -1.42344} & \textbf{-2.2402}\\\cline{2-10}
& \multicolumn{8}{c}{equally weighted target}\\\cline{2-10}
&bona fide & { -1.43449} & { -1.43529} & 0.05084 & 0.0503 & { -1.30108} & { -2.08832} & { -1.83907} & { -2.79121}\\
&bona fide ridge & -1.30591 & -1.30557 & 0.05487 & 0.05487 & -1.26708 & -1.99403 & -1.77592 & -2.71366\\
&target & -1.47821 & -1.47811 & 0.05168 & 0.05169 & -1.35667 & -2.18041 & -1.90876 & -2.91209\\
&KZ & -2.03415 & -2.03269 & 0.06235 & 0.06256 & -1.41138 & -2.25474 & -1.97862 & -2.94248\\\cline{2-10}
& \multicolumn{8}{c}{equal correlation target}\\\cline{2-10}
&bona fide & { -1.60982} & { -1.60906} & 0.05073 & 0.05031 & { -1.29875} & -2.39866 & -1.98344 & -3.15048\\
&bona fide ridge &  -1.53174 & -1.52924 & 0.05174 & 0.05171 & -1.26259 & -2.35322 & -1.94643 & -3.11905\\
&target & -1.68850 & -1.68914 & 0.04481 & 0.04491 & -1.31139 & -2.45712 & -2.05434 & -3.36065\\
&KZ & -1.96089 & -1.95878 & 0.06499 & 0.06450 & -1.37460 & -2.19955 & -1.93337 & -2.90941\\\cline{2-10}
& \multicolumn{8}{c}{Fama-French target}\\\cline{2-10}
&bona fide & { -1.82002} & { -1.82289} & 0.04739 & 0.04775 & -1.40332 & { -2.25988} & { -1.95652} & { -2.86186}\\
&bona fide ridge & -1.6811 & -1.67987 & 0.04983 & 0.04971 & -1.34377 & -2.20183 & -1.89093 & -2.77691\\
&target & -1.83460 & -1.83658 & 0.04329 & 0.0432 & -1.37616 & -2.40806 & -1.98023 & -2.93814\\
&KZ & -2.07064 & -2.06718 & 0.06476 & 0.06399 & -1.41264 & -2.27641 & -1.98342 & -2.96439\\\hline\hline
\multirow{13}{*}{\rotatebox[origin=c]{90}{$\bf c=2.0$}}
&trad & -2.12727 & -2.12819 & 0.06405 & 0.0641 & -1.44373 & -2.38377 & -2.07218 & -3.17711\\
&trad ridge &-2.06068 & -2.05893 & 0.06518 & 0.06514 & -1.4236 & -2.34051 & -2.04197 & -3.13703\\
&LWQuEST & -58.92062 & -58.87816 & 0.00574 & 0.00587 & -7.96587 & -12.64654 & -10.9261 & -15.29078\\
&LWAnalytical & -1.50775 & -1.50743 & 0.064 & 0.06383 & -1.26662 & -1.9523 & -1.77553 & -2.70126\\\cline{2-10}
& \multicolumn{8}{c}{equally weighted target}\\\cline{2-10}
&bona fide & { -1.45575} & { -1.45568} & 0.05220 & 0.05226 & { -1.34997} & { -2.15551} & { -1.89422} & -2.89468\\
&bona fide ridge & -1.2937 & -1.29318 & 0.06757 & 0.06793 & -1.14349 & -1.94082 & -1.69787 & -2.73005\\
&target & -1.4788 & -1.47899 & 0.05174 & 0.05177 & -1.35821 & -2.17967 & -1.90906 & -2.91205\\
\cline{2-10}
& \multicolumn{8}{c}{equal correlation target}\\\cline{2-10}
&bona fide & { -1.31924} & { -1.32062} & { 0.07073} & { 0.07077} & { -1.13066} & { -2.15773} & { -1.82514} & { -3.07313}\\
&bona fide ridge & \textbf{-1.22331} & \textbf{-1.22256} & \textbf{0.07853} & \textbf{0.07837} & \textbf{-1.09087} & \textbf{-1.93172} & \textbf{-1.66227} & -2.69818\\
&target & -1.32790 & -1.32922 & 0.07046 & 0.07044 & -1.13449 & -2.16647 & -1.83165 & -3.08673\\
\cline{2-10}
& \multicolumn{8}{c}{Fama-French target}\\\cline{2-10}
&bona fide & { -1.60447} & { -1.60370} & 0.05413 & 0.05423 & { -1.28581} & -2.32282 & { -1.85587} & { -2.82089}\\
&bona fide ridge &-1.41559 & -1.41478 & 0.06915 & 0.06884 & -1.19404 & -1.97205 & -1.72228 & \textbf{-2.66687}\\
&target & -1.61993 & -1.62003 & 0.05373 & 0.05383 & -1.29209 & -2.33933 & -1.86513 & -2.83576\\[-0.4cm]
\end{tabular}
}
\end{center}
\caption{\scriptsize Performance of traditional, bona-fide, the benchmark portfolios (LWQuEST - \cite{lw2017}, LWAnalytical - \cite{lw2020}, KZ - \cite{kz07}) and the target portfolios for the {\it minimum variance calibration} criteria. The performance measures are averaged over 1000 random portfolios of size 300. The trading period consists of 1000 days preceding 23.03.2018  and the risk aversion is set to 5. The average values are based on trimmed mean with 10\% of extreme values being dropped. The best strategies for every criteria and every values of $c$ are highlighted in bold.}\label{tab:emp:minv}
\end{table}

First, we consider the results for the mean-variance calibration in Table \ref{tab:emp:meanv}. If the dimension is low relatively to the sample size, namely $c=0.2$, then the traditional estimator shows a good and robust performance. According to virtually all performance measures it is better than any estimator based on equal correlation and Fama-French targets. The Ledoit-Wolf estimator is better only in terms of the Sharpe ratio. The dominant strategy for this value of $c$ is the Kan-Zhou estimator with the  equally weighted target which is closely followed by the suggested bona-fide shrinkage portfolio. The ranking of the targets and the estimators slightly changes if we increase $c$ to 0.5. As expected the traditional estimators becomes worse and is dominated by every target-based portfolio. The leading estimator for this value of $c$ becomes the analytical Ledoit-Wolf estimator followed by Kan-Zhou with the equally weighted portfolio as the target. Finally, we note that the application of the regularized sample covariance matrix based on  the Tikhonov (ridge) approximation \eqref{Tikhonov} leads to minor improvements in the performance of both the traditional and the bona-fide shrinkage estimators when $c=0.2$ and $c=0.5$.

With $c=0.8$ we attain the ratio of dimension to sample size where the high-dimensional asymptotics becomes relevant and simpler estimators heavily suffer from estimation risk. The traditional estimator shows extremely poor performance, which is similar to that of the numerical Ledoit-Wolf estimator. Since the Kan-Zhou estimator does not take the increasing dimension into account, it becomes worse than the target portfolios and the bona-fide shrinkage portfolios. At the same time, the analytical Ledoit-Wolf estimator becomes dominant with bona-fide ridge estimator slightly behind.
Large improvements are observed in the performance of the traditional estimator when the ridge regularization is employed in its construction. In contrast, the application of the ridge regularization to the bona-fide shrinkage portfolios leads to minor improvements.

Finally, if we increase $c$ to $2$ the sample covariance matrix is singular and we use the generalized inverse for the traditional estimator. Now the bona-fide estimator becomes clearly dominant, while the use of the equally weighted target and of the equally correlation target leads to similar results. As mentioned above the Kan-Zhou estimator is not feasible for $c>1$. Also, the application of the regularized sample covariance matrix based on  the Tikhonov (ridge) approximation improves the performance of the traditional estimator, while it leads to slightly worse performance of the bona-fide shrinkage estimator. To this end, we note a surprisingly poor performance of both Ledoit-Wolf estimators, which appear to be worse than most of the target portfolios. Noteworthy, the LWQuEST estimator is probably suffering from some numerical instabilities, while LWAnalytical behaves still very stable. If we switch to the minimum variance calibration criteria in Table \ref{tab:emp:minv}, then the ranking of the estimators remains unchanged.

Summarizing, the suggested bona-fide shrinkage estimator is comparable to the analytical Ledoit-Wolf nonlinear shrinkage estimator for $c<1$ and becomes superior starting with $c>1$. It is dominant with respect to all performance measures.  For smaller values of $c$ the Kan-Zhou estimator outperforms the bona-fide estimator and both Ledoit-Wolf estimators, and tends to show the best performance when it is used with the equally weighted target. For intermediate values of $c<1$ the analytical Ledoit-Wolf estimator is dominant, while its poor performance for $c>1$ is potentially due to the fact, that it does not take into account the estimation risk related to the sample mean vector, which is accounted for in the bona-fide shrinkage estimator and in the Kan-Zhou estimator. The numerical Ledoit-Wolf estimator seems to show huge numerical instabities in comparison to the analytic one (see, also Remark \ref{LWassumption5} below for further discussion of this finding).

 The choice of the target is an important issue and relying on the results we can make general recommendations regarding its choice. For $c<1$ the equally weighted portfolio is the best performing standalone strategy among the three alternatives. This target also leads to the shrinkage portfolio with the best overall performance. The equally correlated target is in all cases the second best choice. If $c=2$ then the order changes and the equally correlated target becomes slightly better both as a standalone strategy and as the target for the shrinkage-based approach. Since the analysis is based on 1000 random portfolios, we can, therefore, recommend using the equally weighted target for $c<1$ and equally correlated target for $c>1$. Furthermore, we can conclude that taking the best standalone  target strategy shall lead to the best performing shrinkage-based approach.


\begin{figure}[h!]
\begin{center}
\begin{tabular}{c}
\includegraphics[width=1\textwidth]{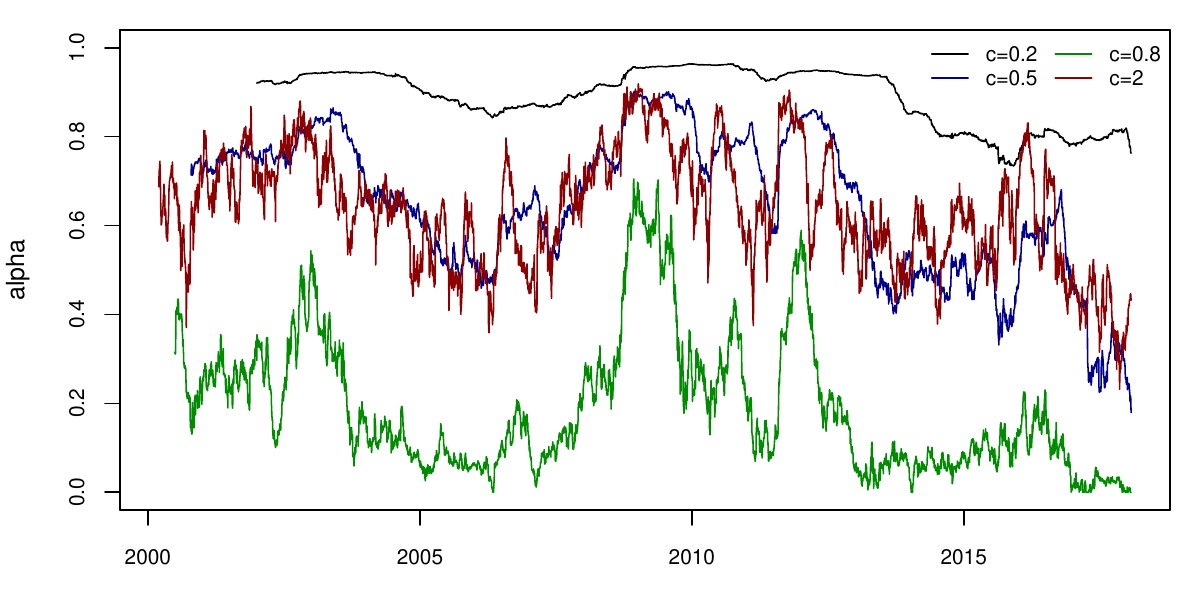}\\
\includegraphics[width=1\textwidth]{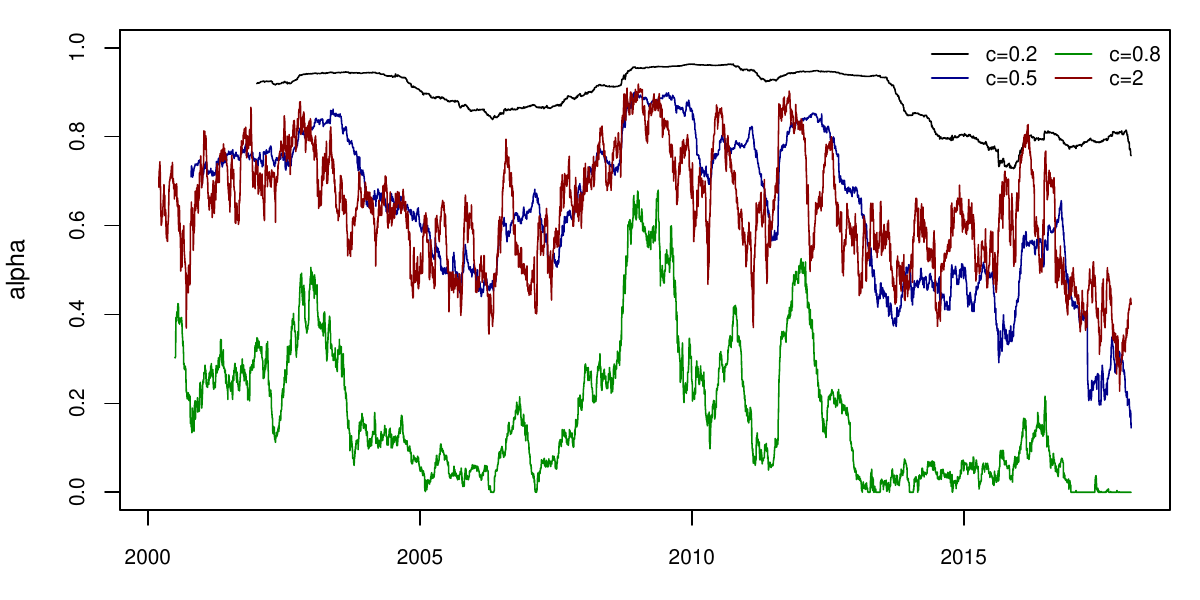}
\end{tabular}
\caption{The bona-fide shrinkage intensities for the first 100 assets (alphabetic order) using the equally weighted target portfolio and the mean-variance calibration  for  $c=0.2$, 0.5,  0.8  and 2. Above - bona fide, below - bona fide ridge (see formula \eqref{Tikhonov}).}
\label{fig:emp:alpha}
\end{center}
\end{figure}

The time series of estimated shrinkage coefficients are depicted in Figure \ref{fig:emp:alpha}. For space reasons we provide the coefficients only for the equally weighted target and the mean-variance calibration. For other parameter constellations the results are similar. The portfolio is constructed using the first alphabetically sorted assets. We observe that for small values of $c$ and thus a low estimation risk the shrinkage intensities are close to one. The behavior is very stable, but mimics the periods of high and low volatility of financial markets. Thus high volatility on financial markets causes higher shrinkage coefficients and a larger fraction of the sample EU portfolio. This can be justified by stronger effects of diversification during turmoil periods. With larger $c$ the confidence in the classical portfolio diminishes leading to a stronger preference for the equally weighted portfolio. This results in lower and more volatile shrinkage intensities. However, we observe the reverse behaviour of the estimated shrinkage intensity when $c=2$. Here, the impact of the traditional estimator in the portfolio structure increases and becomes comparable to the case of $c=0.5$. Such results are in line with our findings of the simulation study presented in Figure \ref{fig:alphastar}, where the shrinkage intensity is close to zero around $c=1$. Finally, the results obtained by employing the regularized sample covariance matrix based on  the Tikhonov (ridge) approximation leads to similar values of the shrinkage intensities independently of $c$. These are shown in the lower plot in Figure \ref{fig:emp:alpha}. This finding is in line with the values presented in Tables \ref{tab:emp:meanv} and \ref{tab:emp:minv}.

\begin{remark}\label{LWassumption5}\rm
The numerical Ledoit-Wolf estimator shows in our empirical study surprisingly poor performance, which is probably because of some numerical issues. That is why we recommend to use its new analytic version. Nevertheless, the analytical nonlinear shrinkage Ledoit-Wolf estimator still shows in our empirical study a poor performance\footnote{For $c>1$, the analytical Ledoit-Wolf estimator was initially also very unstable because the $(p-n+1)$th smallest eigenvalue was too close to zero (of order $10^{-12}$). We have corrected this issue by treating it as ``zero'' and replacing it by a specific constant. To the rest of eigenvalues the optimal nonlinear shrinkage formula was applied (see, formulas (C.4) and (C.5) in \citet[Supplement]{lw2020}). The numerical implementation of Ledoit-Wolf estimator is provided in R-package {\it HDShOP} (see, \cite{HDShOP}).} in case $c>1$, but we believe there is a specific reason for that. Indeed, in \citet[Assumption 5]{ledoit2017nonlinear} the authors assume that the sample mean vector is independently distributed of the sample covariance matrix and its distribution is rotation invariant. This assumption appears to be a characteristic property of multivariate normal distribution following \cite{lukacs1979}. Moreover, the assumption that the distribution of the sample mean vector is rotation invariant, imposes further restrictions on the data-generating model. It requires the population mean vector to be a zero vector and the population covariance matrix to be proportional to the identity matrix.
As such, it is not clear whether the Ledoit-Wolf estimator is optimal in the case of a non-zero population mean vector, i.e., in the mean-variance framework. This is also justified by the authors themselves in \citet[see Remark 5]{ledoit2017nonlinear}. It seems that this ``sample mean'' effect becomes more prominent in the case of the singular sample covariance matrix ($c>1$) and/or small risk aversion coefficient $\gamma$, i.e. when optimal portfolios lie further away from the global minimum variance portfolio on the efficient frontier. Thus, the Ledoit-Wolf estimator should be adjusted to this type of optimal portfolios before it can be efficiently used in practice when portfolio dimension is larger than the sample size, whereas the suggested bona-fide estimator for the optimal portfolio weights incorporates both the high-dimensional effects from the sample covariance matrix and the sample mean vector simultaneously.
\end{remark}

\section{Summary}

In this paper we consider the portfolio selection in the high-dimensional framework. Particularly, we assume that the number of assets $p$ and the sample size $n$ tend to infinity such that their ratio $p/n$ tends to constant $c$ where $c$ can also be larger than one, implying that we have more assets than observations. Because of the large estimation risk we suggest a shrinkage-based estimator of the portfolio weights, which shrinks the mean-variance portfolio to several target portfolios, such as the equally weighted portfolio, minimum-variance portfolio, etc. For the established shrinkage intensity we derive the limiting value which depends on $c$ and on the characteristics of the efficient frontier only. On the other side, the derived limiting expression of the shrinkage uncertainty is only an oracle value and is not feasible in practice, since it depends on unknown quantities. In order to overcome the problem,  we construct a bona-fide shrinkage estimator of the optimal portfolio weights by deducing consistent estimators of the parameters of the efficient frontier under the high-dimensional setting. As a result, a fully data-driving approach is established for constructing a practically feasible estimator for the weights of the optimal mean-variance portfolios. From the technical point of view, we rely on random matrix theory and work with the asymptotic behavior of linear and quadratic forms in the sample mean vector and in the (pseudo)-inverse sample covariance matrix. In extensive simulation and empirical studies, we evaluate the performance of established results with artificial and real data. Only if the sample size is much larger than the portfolio dimension, the traditional portfolio or the benchmark portfolio dominates the portfolio suggested in the paper.

\section*{Acknowledgment}
The authors thank Professor Christian Hansen and two anonymous Reviewers for their comments and suggestions which have improved the presentation of the paper. We also thank a Reviewer for the comment which prompted us to reconsider the proofs of the theorems and corollaries and, as a result, allowed us to simplify considerably the conditions needed for their validity.
We gratefully acknowledge the comments from the participants at the conference ''Modern Stochastics: Theory and Applications'' (Kyiv, Ukraine), the 9th Annual SoFiE Conference 2016 (Hong Kong), the Meeting of the German Statistical Society ''Statistical Week 2015'' (Hamburg, Germany) and the 7th International Conference 2014 of the ERCIM WG on Computational and Methodological Statistics (Pisa, Italy). The authors are also thankful to Prof. Raymond Kan and Prof. Michael Wolf for fruitful discussion. Taras Bodnar was partially supported by the Swedish Research Council (VR) via the project {\it Bayesian Analysis of Optimal Portfolios and Their Risk Measures}.

\section{Appendix: Proofs}

Here the proofs of the theorems are given. Recall that the sample mean vector and
the sample covariance matrix are given by
\begin{equation}\label{sy_app}
 \sy_n=\dfrac{1}{n}\by_n\bi_n
 =\bm_n+\bSigma_n^{\frac{1}{2}}\sx_n~~\text{with}~~\sx_n=\dfrac{1}{n}\bx_n\bi_n
\end{equation}
and
\begin{equation}\label{Sn_app}
 \bS_n=\dfrac{1}{n}\by_n(\bI-\frac{1}{n}\bi\bi^\prime)\by_n^{\prime}
 =\bSigma_n^{\frac{1}{2}}\bV_n\bSigma_n^{\frac{1}{2}}
 ~~\text{with}~~\bV_n=\dfrac{1}{n}\bx_n(\bI-\frac{1}{n}\bi\bi^\prime)\bx_n^{\prime}\,,
 \end{equation}
respectively. Later on, we also make use of $\tbV_n$ defined by
\begin{equation}\label{tVn}
\tbV_n=\dfrac{1}{n}\bx_n\bx_n^{\prime}
 \end{equation}
and the formula for the 1-rank update of usual inverse given by (c.f., \cite{hornjohn1985})
\begin{equation}\label{bVn_inv}
\bV_n^{-1}=(\tbV_n-\sx_n\sx_n^\prime)^{-1}=\tbV_n^{-1}+\frac{\tbV_n^{-1}\sx_n\sx_n^\prime\tbV_n^{-1}}
{1-\sx_n^\prime\tbV_n^{-1}\sx_n}
\end{equation}
as well as the formula for the 1-rank update of Moore-Penrose inverse (see, \cite{meyer1973}) expressed as
\begin{eqnarray}\label{bVn_MPinv}
\bV_n^+&=& \left(\tbV^\prime-\sx_n\sx_n^\prime\right)^+\nonumber \\
&=& \tbV_n^+-\dfrac{\tbV_n^+\sx_n\sx_n^\prime(\tbV_n^+)^2+(\tbV_n^+)^2\sx_n\sx_n^\prime
(\tbV_n^+)}{\sx_n^\prime(\tbV_n^+)^2\sx_n} +\dfrac{\sx_n^\prime(\tbV_n^+)^3\sx_n}
{(\sx_n^\prime(\tbV_n^+)^2\sx_n)^2}\tbV_n^+\sx_n\sx_n^\prime\tbV_n^+ \,.
\end{eqnarray}

First, we present an important lemma which is a special case of Theorem 1 in \cite{rubmes2011}.

\begin{lemma}\label{lem1}
Assume (A2). Let a nonrandom $p\times p$-dimensional matrix $\mathbf{\Theta}_p$ and a nonrandom $n\times n$-dimensional matrix $\bTheta_n$ possess a uniformly bounded trace norms (sum of singular values). Then it holds that
\begin{eqnarray}\label{RM2011_id}
&&\left|\text{tr}\left(\mathbf{\Theta}_p(\tbV_n-z\bI_p)^{-1}\right)
-m(z)\text{tr}\left(\mathbf{\Theta}_p\right)\right|\stackrel{a.s.}{\longrightarrow}0 \\
&&\left|\text{tr}\left(\mathbf{\Theta}_n(1/n\bx_n^\prime\bx_n-z\bI_n)^{-1}\right)
-\underline{m}(z)\text{tr}\left(\mathbf{\Theta}_n\right)\right|\stackrel{a.s.}{\longrightarrow}0
\end{eqnarray}
for $p/n\longrightarrow c \in (0, +\infty)$ as $n\rightarrow\infty$, where
\begin{equation}\label{mm}
 m(z)=(x(z)-z)^{-1}~~\text{and}~~\underline{m}(z)=-\dfrac{1-c}{z}+cm(z)
\end{equation}
with
\begin{equation}\label{RM2011_id_xz}
x(z)=\dfrac{1}{2}\left(1-c+z+\sqrt{(1-c+z)^2-4z}\right)\,.
\end{equation}
\end{lemma}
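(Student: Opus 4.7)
The plan is to treat this as a deterministic equivalent result for the resolvent of a sample-covariance-type matrix of Marchenko--Pastur type, and to split the claim into a concentration step and a mean-matching step. Since $\tbV_n=\tfrac{1}{n}\bx_n\bx_n^\prime$ is $p\times p$ while $\tfrac{1}{n}\bx_n^\prime\bx_n$ is $n\times n$ with the same nonzero spectrum, the second statement can be reduced to the first via the companion-matrix identity
\[
\tfrac{1}{n}\text{tr}\bigl(\tfrac{1}{n}\bx_n^\prime\bx_n-z\bI_n\bigr)^{-1}+\tfrac{1-c_n}{z}
=c_n\cdot\tfrac{1}{p}\text{tr}\bigl(\tbV_n-z\bI_p\bigr)^{-1},\qquad c_n=p/n,
\]
which yields $\underline{m}(z)=-(1-c)/z+cm(z)$ after passing to the limit and allows replacing $\bTheta_n$-weighted traces by $\mathbf{\Theta}_p$-weighted ones via SVD-type identities. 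I would therefore concentrate on proving the first display, and treat $z\in\mathbb C^+$ with $\Im(z)>0$ (the formula then extends by analytic continuation to $z$ outside the limiting support).

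Write $T_n=\text{tr}(\mathbf{\Theta}_p(\tbV_n-z\bI_p)^{-1})$ and split
\[
T_n-m(z)\,\text{tr}(\mathbf{\Theta}_p)=\bigl(T_n-\E T_n\bigr)+\bigl(\E T_n-m(z)\,\text{tr}(\mathbf{\Theta}_p)\bigr).
\]
For the concentration piece, I would use the Doob martingale with respect to the filtration $\mathcal F_k=\sigma(\bx_1,\ldots,\bx_k)$ generated by the columns of $\bx_n$, and bound each increment by applying the Sherman--Morrison formula in the rank-one update \eqref{bVn_inv}. Each martingale difference is then of the form $(\E_k-\E_{k-1})$ of a rank-one perturbation, which one bounds by a constant times $|\Im(z)|^{-2}\|\mathbf{\Theta}_p\|_{tr}/n$. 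Burkholder's inequality applied with moment $2+\varepsilon/2$ (exploiting assumption (A2)) then yields $\E|T_n-\E T_n|^{2+\varepsilon/2}=O(n^{-\varepsilon/4})$, which is summable along a sufficiently rapidly increasing subsequence, so a Borel--Cantelli argument delivers almost-sure convergence.

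For the mean-matching piece, I would perform a leave-one-out expansion. Let $\tbV_n^{(k)}=\tbV_n-\tfrac1n\bx_k\bx_k^\prime$ and $\bB_k=\bigl(\tbV_n^{(k)}-z\bI_p\bigr)^{-1}$. Sherman--Morrison gives
\[
(\tbV_n-z\bI_p)^{-1}=\bB_k-\frac{n^{-1}\bB_k\bx_k\bx_k^\prime\bB_k}{1+n^{-1}\bx_k^\prime\bB_k\bx_k},
\]
and the standard identity $z(\tbV_n-z\bI_p)^{-1}=-\bI_p+\tfrac1n\sum_k\bx_k\bx_k^\prime(\tbV_n-z\bI_p)^{-1}$ combined with the above leads, after taking traces against $\mathbf{\Theta}_p$ and taking expectations, to
\[
\E T_n=\frac{\text{tr}(\mathbf{\Theta}_p)}{-z\bigl(1+\tfrac{1}{n}\E\text{tr}\bB_k\bigr)}+\text{error}.
\]
Because $\tfrac1n\E\text{tr}\bB_k$ converges to $cm(z)/(1+\cdot)$ by the usual Silverstein fixed-point argument, the leading term becomes $m(z)\text{tr}(\mathbf{\Theta}_p)$ once one verifies that the solution to the resulting quadratic equation is exactly $m(z)=(x(z)-z)^{-1}$; the uniqueness of the Stieltjes-transform solution on $\mathbb C^+$ singles out the branch stated in \eqref{RM2011_id_xz}.

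The main obstacle is controlling the two error terms that appear in the leave-one-out step: the random quadratic form $n^{-1}\bx_k^\prime\bB_k\bx_k$ must concentrate around its mean $n^{-1}\text{tr}\bB_k$ with an error summable over $k$, and the product of this deviation against $\mathbf{\Theta}_p$-weighted resolvent matrices must also be controlled. Here the $4+\varepsilon$-moment hypothesis (A2) is essential, since it permits applying the Bai--Silverstein trace lemma (a Burkholder-type bound for quadratic forms in independent variables) to obtain
\[
\E\bigl|\tfrac{1}{n}\bx_k^\prime\bM\bx_k-\tfrac{1}{n}\text{tr}\bM\bigr|^{2+\varepsilon/2}
\le C\,n^{-(2+\varepsilon/2)}\bigl(\text{tr}(\bM\bM^\ast)\bigr)^{1+\varepsilon/4},
\]
which, together with $\|(\tbV_n^{(k)}-z\bI_p)^{-1}\|\le|\Im(z)|^{-1}$ and the bounded trace norm of $\mathbf{\Theta}_p$, closes the estimate. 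Assembling the two pieces completes the proof of the first display, and the second follows by the companion-matrix reduction with $\text{tr}\bTheta_n$ in place of $\text{tr}\mathbf{\Theta}_p$.
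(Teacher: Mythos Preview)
Your approach is fundamentally sound but takes a genuinely different route from the paper. The paper's proof is essentially a citation: it invokes Theorem~1 of \cite{rubmes2011} directly for the first assertion and \cite{baisil2010} for the second, and the only work done is to solve the quadratic $(1-x)/x=c/(x-z)$ for $x(z)$ and to select the correct branch by checking $\textbf{Im}\{x(z)\}>0$ at the test point $z=1+c+i2\sqrt{c}$. You instead sketch a self-contained proof via a martingale concentration step plus a leave-one-out expansion for the mean, which is in effect a reproof of the cited Rubio--Mestre/Bai--Silverstein results. What the paper's route buys is brevity; what yours buys is transparency about why the $4+\varepsilon$ moment condition (A2) enters, namely through the quadratic-form trace lemma in the error control.

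Two points of your sketch would need tightening before it is a proof. First, the displayed formula $\E T_n=\text{tr}(\mathbf{\Theta}_p)/\bigl(-z(1+\tfrac1n\E\,\text{tr}\bB_k)\bigr)+\text{error}$ is not what the leave-one-out identity actually yields: carrying the algebra through gives $\E T_n\approx \text{tr}(\mathbf{\Theta}_p)/\bigl(\tfrac{1}{1+\tfrac1n\E\,\text{tr}\bB_k}-z\bigr)$, and it is only after combining this with the self-consistent equation for the unweighted limit $(1/n)\text{tr}\bB_k\to c\,m(z)$ that one recovers $m(z)\,\text{tr}(\mathbf{\Theta}_p)$. Second, the companion-matrix identity you invoke relates only the \emph{unweighted} traces of the $p\times p$ and $n\times n$ resolvents; for a general $\bTheta_n$ with bounded trace norm it does not reduce the second display to the first. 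The cleanest fix is to observe that $\tfrac1n\bx_n^\prime\bx_n$ is itself a sample-covariance-type matrix with aspect ratio $n/p\to 1/c$ and identity population covariance, so your martingale/leave-one-out argument applies verbatim with the roles of $p$ and $n$ swapped, yielding $\underline m(z)$ as the limit. The relation $\underline m(z)=-(1-c)/z+c\,m(z)$ then follows, as you note, from matching the nonzero spectra.
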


\begin{proof}[Proof of Lemma \ref{lem1}:]
The application of Theorem 1 in \cite{rubmes2011} leads to (\ref{RM2011_id}) where $x(z)$ is a unique solution in $\mathbbm{C}^+$ of the following equation
\begin{equation}\label{eq1-Lemma6_1}
\dfrac{1-x(z)}{x(z)}=\dfrac{c}{x(z)-z}\,.
\end{equation}
The two solutions of (\ref{eq1-Lemma6_1}) are given by
\begin{equation}\label{solx}
x_{1,2}(z)=\dfrac{1}{2}\left(1-c+z\pm\sqrt{(1-c+z)^2-4z}\right)\,.
\end{equation}
In order to decide which of two solutions is feasible, we note that $x_{1,2}(z)$ is the Stieltjes transform with a positive imaginary part. Thus, without loss of generality, we can take $z=1+c+i2\sqrt{c}$ and get
\begin{equation}\label{im}
\textbf{Im}\{x_{1,2}(z)\}=\textbf{Im}\left\{\dfrac{1}{2}\left(2+i2\sqrt{c}\pm i2\sqrt{2c}\right)\right\}=\textbf{Im}\left\{1+i\sqrt{c}(1\pm\sqrt{2})\right\}=\sqrt{c}\left(1\pm\sqrt{2}\right)\,,
\end{equation}
which is positive only if the sign $"+"$ is chosen. Hence, the solution is given by
\begin{equation}\label{solx_a}
x(z)=\dfrac{1}{2}\left(1-c+z+\sqrt{(1-c+z)^2-4z}\right)\,.
\end{equation}
The second assertion of the lemma follows directly from \cite{baisil2010}.
\end{proof}

 We note here that Lemma \ref{lem1} is a special case of Theorem 1 in \cite{rubmes2011}, where one has uniform convergence in the statement of the theorem. Although it is not precisely written in the statement of Theorem 1 in \cite{rubmes2011}, this observation follows from its proof on page 600 where after showing pointwise convergence Rubio and Mestre additionally proved the uniform convergence by applying Montel's theorem. In short, they first show that the random sequence of analytic functions of interest forms a normal family and, thus, by Montel's theorem there exists a subsequence of it, which converges uniformly on each compact subset of $\mathbbm{C}\setminus\mathbbm{R}^+$ to an analytic function and this one vanishes almost surely on $\mathbbm{C}\setminus\mathbbm{R}^+$. And so, the entire sequence converges uniformly to zero on every compact subset of $\mathbbm{C}\setminus\mathbbm{R}^+$.
 Furthermore, it is mentioned on page 348 of \cite{RubioMestrePalomar2012} that the convergence in Theorem 1 of \cite{rubmes2011} is in fact uniform.

Moreover, the following result (see, e.g., Theorem 1 on page 176 in \cite{ahlfors1953}), known as the Weierstrass theorem on the uniform convergence, will be used in a sequel together with Lemma \ref{lem1} in the proofs of the technical lemmas.

  \begin{theorem}[Weierstrass]\label{weierstrass}
    Suppose that $f_n(z)$ is analytic in the region $\Omega_n$, and that the sequence $\{f_n(z)\}$ converges to a limit function $f(z)$ in a region $\Omega$, uniformly on every compact subset of $\Omega$. Then $f(z)$ is analytic in $\Omega$. Moreover, $f'(z)$ converges uniformly to $f'(z)$ on every compact subset of $\Omega$.
  \end{theorem}

Because the convergence in Lemma \ref{lem1} is uniform over $z$ on every compact subset of $\mathbbm{C}\setminus\mathbbm{R}^+$, the Weierstrass theorem allows us to interchange any derivative with respect to $z$ and the limit $n\to\infty$. We will consider compact subsets, which are the small neighbourhoods of zero with $\Re(z)=0$ (without loss of generality) because all of the times we will let $z\to0$ in order to get specific limiting expressions of interest. For example, one may take $\Omega$ as a unit disk $|z|<1$ and $\Omega_n$ as a disk $|z|<\varepsilon_n$ for some $\varepsilon_n\to0$ as $n\to\infty$. The analyticity of the function $\text{tr}\left(\mathbf{\Theta}_p(\tbV_n-z\bI_p)^{-1}\right)$ follows immediately from the properties of the Stieltjes transform.


\begin{lemma}\label{lem2}
Assume (A2). Let $\boldsymbol{\theta}$ and $\boldsymbol{\xi}$ be universal nonrandom vectors with bounded Euclidean norms. Then it holds that
\begin{eqnarray}
  \left|\boldsymbol{\xi}^\prime\tbV_n^{-1}\boldsymbol{\theta}- (1-c)^{-1} \boldsymbol{\xi}^\prime \boldsymbol{\theta}\right| &\stackrel{a.s.}{\longrightarrow}& 0 \,,\label{1}\\
   \sx_n^\prime\tbV_n^{-1}\sx_n &\stackrel{a.s.}{\longrightarrow}& c  \,,\label{2}\\
 \sx_n^\prime\tbV_n^{-1}\boldsymbol{\theta}&\stackrel{a.s.}{\longrightarrow}&0\,, \label{3}\\
  \left|\boldsymbol{\xi}^\prime\tbV_n^{-2}\boldsymbol{\theta}- (1-c)^{-3} \boldsymbol{\xi}^\prime \boldsymbol{\theta}\right| &\stackrel{a.s.}{\longrightarrow}& 0 \,,\label{4}\\
   \sx_n^\prime\tbV_n^{-2}\sx_n &\stackrel{a.s.}{\longrightarrow}& \frac{c}{(1-c)}  \,,\label{5}\\
 \sx_n^\prime\tbV_n^{-2}\boldsymbol{\theta}&\stackrel{a.s.}{\longrightarrow}&0 \label{6}\end{eqnarray}
$\text{for}~ p/n\longrightarrow c \in (0,1) ~\text{as} ~n\rightarrow\infty ~~$.
\end{lemma}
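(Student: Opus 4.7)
The six statements split naturally. Items (1) and (4) involve only deterministic vectors and follow directly from Lemma~\ref{lem1}. Items (2), (3), (5), (6) involve the sample mean $\sx_n$, which is \emph{not} independent of $\tbV_n$, and will be handled via the Sherman--Morrison identity (\ref{bVn_inv}) in order to decouple $\sx_n$ from the ``leave-out'' matrix $\bV_n=\tbV_n-\sx_n\sx_n^\prime$.

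For part (1), I apply Lemma~\ref{lem1} with the rank-one matrix $\mathbf{\Theta}_p=\boldsymbol{\theta}\boldsymbol{\xi}^\prime$, whose trace norm $\|\boldsymbol{\theta}\|\,\|\boldsymbol{\xi}\|$ is $O(1)$ by assumption. This gives $\boldsymbol{\xi}^\prime(\tbV_n-z\bI)^{-1}\boldsymbol{\theta}-m(z)\boldsymbol{\xi}^\prime\boldsymbol{\theta}\stackrel{a.s.}{\longrightarrow}0$ for $z$ in the upper half-plane. For $c<1$, the Bai--Yin theorem ensures that the smallest eigenvalue of $\tbV_n$ is bounded away from $0$ a.s., so passage to $z\to 0$ is legitimate; plugging $x(0)=1-c$ into (\ref{RM2011_id_xz}) yields $m(0)=(1-c)^{-1}$ and establishes (1). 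Part (4) then follows by differentiating the convergence in $z$: the resolvent and its limit are holomorphic and uniformly bounded off the spectrum, so Vitali's theorem allows $\partial_z$ to pass to the limit, giving $\boldsymbol{\xi}^\prime\tbV_n^{-2}\boldsymbol{\theta}\to m'(0)\boldsymbol{\xi}^\prime\boldsymbol{\theta}$; and $m'(0)=(1-c)^{-3}$ follows from a direct calculation with $x(z)$.

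For (2) and (3), I invert (\ref{bVn_inv}) via the dual Sherman--Morrison step
\begin{equation*}
\tbV_n^{-1}\sx_n=\frac{\bV_n^{-1}\sx_n}{1+\sx_n^\prime\bV_n^{-1}\sx_n},
\end{equation*}
giving $\sx_n^\prime\tbV_n^{-1}\sx_n=\sx_n^\prime\bV_n^{-1}\sx_n/(1+\sx_n^\prime\bV_n^{-1}\sx_n)$ and $\sx_n^\prime\tbV_n^{-1}\boldsymbol{\theta}=\sx_n^\prime\bV_n^{-1}\boldsymbol{\theta}/(1+\sx_n^\prime\bV_n^{-1}\sx_n)$. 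Passing to a rotated basis $\{\bi_n/\sqrt{n},\mathbf{h}_2,\ldots,\mathbf{h}_n\}$ of the sample dimension, one sees that $\sqrt{n}\sx_n=\bx_n(\bi_n/\sqrt{n})$ is a single ``rotated column'' while $\bV_n=n^{-1}\sum_{k\ge 2}(\bx_n\mathbf{h}_k)(\bx_n\mathbf{h}_k)^\prime$ is built from the remaining orthogonal ``rotated columns''. Under (A2) this yields a standard concentration-of-quadratic-forms estimate (in the spirit of Lemma~B.26 of Bai--Silverstein) giving $\sx_n^\prime\bV_n^{-1}\sx_n=n^{-1}\tr(\bV_n^{-1})+o(1)$ a.s., while $\sx_n^\prime\bV_n^{-1}\boldsymbol{\theta}=o(1)$ a.s.\ by a mean-zero/variance-$O(1/n)$ argument combined with Borel--Cantelli (using the bounded $4+\varepsilon$ moments from (A2)). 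Lemma~\ref{lem1} applied to $\bV_n$, whose limiting spectral distribution coincides with that of $\tbV_n$, yields $n^{-1}\tr(\bV_n^{-1})\to c\,m(0)=c/(1-c)$. Substituting proves (2) since $(c/(1-c))/(1+c/(1-c))=c$, while the vanishing numerator proves (3).

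Items (5) and (6) are obtained by promoting (2) and (3) to the shifted resolvent. Setting $f(z)=\sx_n^\prime(\tbV_n-z\bI)^{-1}\sx_n$, the same Sherman--Morrison step with $\bV_n$ replaced by $\bV_n-z\bI$ yields $f(z)\to cm(z)/(1+cm(z))$ a.s., locally uniformly in $z$ off the support of the limiting spectrum. Differentiating at $z=0$ and using $m(0)=(1-c)^{-1}$, $m'(0)=(1-c)^{-3}$ gives $\sx_n^\prime\tbV_n^{-2}\sx_n=f'(0)\to cm'(0)/(1+cm(0))^2=c/(1-c)$, which is (5). Part (6) follows analogously to (3): expanding $\sx_n^\prime\tbV_n^{-2}\boldsymbol{\theta}$ by Sherman--Morrison on each factor of $\tbV_n^{-1}$ produces only terms of the form $\sx_n^\prime\bV_n^{-j}\boldsymbol{\theta}$ for $j=1,2$ and products of these with bounded scalars, each vanishing a.s.\ by the same variance bound. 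The principal technical hurdle throughout is that $\sx_n$ and $\bV_n$ are uncorrelated but \emph{not} independent under (A2); the rotation to $\mathbf{z}_k=\bx_n\mathbf{h}_k$ is precisely what supplies the joint moment structure needed both for the quadratic-form concentration used above and for Lemma~\ref{lem1} to apply to $\bV_n=n^{-1}\sum_{k\ge 2}\mathbf{z}_k\mathbf{z}_k^\prime$.
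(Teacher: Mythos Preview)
Your treatment of \eqref{1} and \eqref{4} matches the paper's: apply Lemma~\ref{lem1} with $\mathbf{\Theta}_p=\boldsymbol{\theta}\boldsymbol{\xi}'$, pass $z\to 0$, and differentiate in $z$ for \eqref{4}. For \eqref{2}, \eqref{3}, \eqref{5}, \eqref{6}, however, your Sherman--Morrison route has a gap. You want $\sx_n'\bV_n^{-1}\sx_n=n^{-1}\tr(\bV_n^{-1})+o(1)$ via a Bai--Silverstein-type concentration estimate, but Lemma~B.26 requires the vector to be independent of the matrix. Your rotation does \emph{not} manufacture this: the rotated columns $\mathbf{z}_k=\bx_n\mathbf{h}_k$ are uncorrelated but not independent unless the entries of $\bx_n$ are Gaussian, so $\mathbf{z}_1=\sqrt{n}\,\sx_n$ and $\bV_n=n^{-1}\sum_{k\ge 2}\mathbf{z}_k\mathbf{z}_k'$ are still built from the same underlying $\bx_n$. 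Conditioning on $\bV_n$ does not leave $\mathbf{z}_1$ with i.i.d.\ entries, and the moment bound behind B.26 is unavailable. You correctly flag this dependence as ``the principal technical hurdle'', but the rotation only exposes the algebraic leave-one-out structure; it does not supply the probabilistic independence the concentration lemma needs. The same issue undermines your variance/Borel--Cantelli argument for \eqref{3} and \eqref{6}. Making your route rigorous would require something substantially heavier, e.g.\ a row-wise martingale decomposition.

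The paper avoids this entirely and never reduces to $\bV_n$. It writes $\sx_n'(\tbV_n-z\bI_p)^{-1}\sx_n$ as a trace and uses the Woodbury identity
\[
\tfrac{1}{\sqrt{n}}\bx_n'\bigl(\tfrac{1}{n}\bx_n\bx_n'-z\bI_p\bigr)^{-1}\tfrac{1}{\sqrt{n}}\bx_n
=\bI_n+z\bigl(\tfrac{1}{n}\bx_n'\bx_n-z\bI_n\bigr)^{-1}
\]
to pass to the $n\times n$ companion matrix $\tfrac{1}{n}\bx_n'\bx_n$. Then the \emph{second} assertion of Lemma~\ref{lem1}, applied with $\bTheta_n=n^{-1}\bi_n\bi_n'$, gives $\sx_n'(\tbV_n-z\bI_p)^{-1}\sx_n\to 1+z\underline{m}(z)=c+czm(z)$ directly, with no concentration step and hence no dependence issue. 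Setting $z=0$ yields \eqref{2}; differentiating at $z=0$ yields \eqref{5}. For \eqref{3} the paper simply cites Pan (2014), and \eqref{6} follows by differentiating the vanishing bilinear form in $z$. This Woodbury-to-companion device is the missing idea in your argument and is what makes the proof go through uniformly under (A2).
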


\begin{proof}[Proof of Lemma \ref{lem2}:]
Since the trace norm of $\btheta\bxi^\prime$ is uniformly bounded, i.e.
\[||\btheta\bxi^\prime||_{tr}\le \sqrt{\btheta^\prime\btheta}\sqrt{\bxi^\prime\bxi} <\infty,\]
we get from Lemma \ref{lem1} that
\[|tr((\tbV_n-z\bI_p)^{-1}\boldsymbol{\theta}\boldsymbol{\xi}^\prime)-m(z)tr(\boldsymbol{\theta}\boldsymbol{\xi}^\prime)| \stackrel{a.s.}{\longrightarrow}0~~\text{for}~ p/n\rightarrow c<1~ \text{as} ~n\rightarrow\infty\]

Furthermore, the application of $m(z)\rightarrow (1-c)^{-1}$ as $z \rightarrow 0$ leads to
\begin{equation*}
|\bxi^\prime\tbV_n^{-1}\btheta-(1-c)^{-1} \bxi^\prime \btheta| \stackrel{a.s.}{\longrightarrow}0~~\text{for}~ p/n\rightarrow c<1~ \text{as} ~n\rightarrow\infty\,,
\end{equation*}
which proves \eqref{1}.

For deriving \eqref{2} we consider
\begin{eqnarray*}
\sx_n^\prime \tbV_n^{-1} \sx_n &=&
 \lim\limits_{z\rightarrow0}\text{tr}\left[\frac{1}{\sqrt{n}}\bx^\prime_n\left(\dfrac{1}{n}\bx_n\bx_n^\prime-z\bI_p\right)^{-1}
\frac{1}{\sqrt{n}}\bx_n\left(\dfrac{\bi_n\bi^\prime_n}{n}\right)\right]\\
&=& \lim\limits_{z\rightarrow0} \text{tr}\left[\left(\dfrac{\bi_n\bi^\prime_n}{n}\right)\right]+z\text{tr}\left[\left(\dfrac{1}{n}\bx_n^\prime\bx_n-z\bI_n\right)^{-1}\left(\dfrac{\bi_n\bi^\prime_n}{n}\right)\right]\,,
\end{eqnarray*}
where the last equality follows from the Woodbury formula (e.g., \cite{hornjohn1985}). The application of Lemma \ref{lem1} and Theorem \ref{weierstrass} lead to
\begin{equation*}
\text{tr}\left[\left(\dfrac{\bi_n\bi^\prime_n}{n}\right)\right]+z\text{tr}\left[\left(\dfrac{1}{n}\bx_n^\prime\bx_n-z\bI_n\right)^{-1}\left(\dfrac{\bi_n\bi^\prime_n}{n}\right)\right]
\stackrel{a.s.}{\longrightarrow}\left[1+(c-1)+czm(z)\right]\text{tr}\left[\left(\dfrac{\bi_n\bi^\prime_n}{n}\right)\right]
\end{equation*}
for $p/n\longrightarrow c<1$ as $n\rightarrow \infty$ where $m(z)$ is given by (\ref{mm}). Setting $z\rightarrow0$ and taking into account $\lim\limits_{z\rightarrow0}m(z)=\dfrac{1}{1-c}$ we get
\begin{equation*}
\sx_n^\prime \tbV_n^{-1} \sx_n\stackrel{a.s.}{\longrightarrow} 1+c-1 = c~\text{for}~\dfrac{p}{n}\longrightarrow c\in(0,1)~\text{as}~n\rightarrow\infty\,.
\end{equation*}
The result \eqref{3} was derived in \cite{pan2014} (see, p. 673 of this reference).

Next, we prove \eqref{4}. It holds that
\begin{eqnarray*}
 \bxi^\prime\tbV_n^{-2}\btheta&=& \left.\dfrac{\partial}{\partial z}\text{tr}\left[\left(\tbV_n-z\bI_p\right)^{-1}\btheta\bxi^\prime\right]\right|_{z=0}=\left.\dfrac{\partial}{\partial z} \zeta_n(z)\right|_{z=0}
 \end{eqnarray*}
where $\zeta_n(z)=\text{tr}\left[\left(\tbV_n-z\bI\right)^{-1} \btheta\bxi^\prime\right]$. From Lemma \ref{lem1} $\zeta_n(z)$ tends a.s. to $m(z)\bxi^\prime\btheta$ as $n\rightarrow\infty$. Furthermore,
{\small
\begin{equation}\label{derxx}
\left.\dfrac{\partial}{\partial z}m(z)\right|_{z=0}=\left.\dfrac{\partial}{\partial z}\dfrac{1}{ x(z)-z}\right|_{z=0}=-\left.\dfrac{x^\prime(z)-1}{ (x(z)-z)^2}\right|_{z=0}
=-\left.\dfrac{\dfrac{1}{2}\left(1-\frac{1+c-z}{\sqrt{(1-c+z)^2-4z}}\right)-1}{ (x(z)-z)^2}\right|_{z=0}
= \dfrac{1}{(1-c)^3}\,.
\end{equation}
}
Consequently, using Lemma \ref{lem1} and Theorem \ref{weierstrass} we conclude
\begin{equation*}
|\bxi^\prime\bS_n^{-2}\btheta-(1-c)^{-3}\bxi^\prime \bSigma^{-1}_n\btheta| \stackrel{a.s.}{\longrightarrow}0~~\text{for}~ p/n\rightarrow c<1~ \text{as} ~n\rightarrow\infty\,.
\end{equation*}

Let $\eta_n(z) =\sx_n^\prime(\tbV_n-z\bI)^{-1}\sx_n$ and $\bTheta_n=\left(\dfrac{\bi_n\bi^\prime_n}{n}\right)$. Then
\begin{equation*}
\sx_n^\prime\tbV_n^{-2}\sx_n =\left.\dfrac{\partial}{\partial z} \eta_n(z)\right|_{z=0}\,,
\end{equation*}
where
\begin{eqnarray*}
 \eta_n(z)& =& \text{tr}\left[\frac{1}{\sqrt{n}}\bx^\prime_n\left(\dfrac{1}{n}\bx_n\bx_n^\prime-z\bI_p\right)^{-1}\frac{1}{\sqrt{n}}\bx_n\bTheta_n\right]\\
&=& \text{tr}(\bTheta_n)+ z\text{tr}\left[ (1/n\bx_n^\prime\bx_n-z\bI_n)^{-1} \bTheta_n  \right] \stackrel{a.s.}{\longrightarrow} 1+z\underline{m}(z)=c+czm(z)\,
\end{eqnarray*}
$~~\text{for}~\dfrac{p}{n}\rightarrow c\in(0,1)~\text{as}~n\rightarrow\infty$. Hence, application of Lemma \ref{lem1} and Theorem \ref{weierstrass} reveals
\begin{equation*}
\sx_n^\prime\tbV_n^{-2}\sx_n  \stackrel{a.s.}{\longrightarrow} cm(0)+cz \left.\dfrac{\partial}{\partial z}m(z)\right|_{z=0}=\frac{c}{1-c}
\end{equation*}
$~~\text{for}~\dfrac{p}{n}\rightarrow c\in(0,1)~\text{as}~n\rightarrow\infty$.

Finally, we get
\begin{equation*}
\sx_n^\prime\tbV_n^{-2}\boldsymbol{\theta}= \left.\dfrac{\partial}{\partial z}\text{tr}\left[\sx_n^\prime\left(\tbV_n-z\bI_p\right)^{-1}\btheta\right]\right|_{z=0}  \stackrel{a.s.}{\longrightarrow} 0
\end{equation*}
$~~\text{for}~\dfrac{p}{n}\rightarrow c\in(0,1)~\text{as}~n\rightarrow\infty$.
\end{proof}

\vspace{0.5cm}
\begin{lemma}\label{lem3}
Assume (A2). Let $\boldsymbol{\theta}$ and $\boldsymbol{\xi}$ be universal nonrandom vectors with bounded Euclidean norms. Then it holds that
\begin{eqnarray}
  \left|\boldsymbol{\xi}^\prime\bV_n^{-1}\boldsymbol{\theta}- (1-c)^{-1} \boldsymbol{\xi}^\prime \boldsymbol{\theta}\right| &\stackrel{a.s.}{\longrightarrow}& 0 \,,\label{2_1}\\
   \sx_n^\prime\bV_n^{-1}\sx_n &\stackrel{a.s.}{\longrightarrow}& \frac{c}{1-c}  \,,\label{2_2}\\
 \sx_n^\prime\bV_n^{-1}\boldsymbol{\theta}&\stackrel{a.s.}{\longrightarrow}&0  \,,\label{2_3}\\
  \left|\boldsymbol{\xi}^\prime\bV_n^{-2}\boldsymbol{\theta}- (1-c)^{-3} \boldsymbol{\xi}^\prime \boldsymbol{\theta}\right| &\stackrel{a.s.}{\longrightarrow}& 0 \,,\label{2_4}\\
   \sx_n^\prime\bV_n^{-2}\sx_n &\stackrel{a.s.}{\longrightarrow}& \frac{c}{(1-c)^3}   \,,\label{2_5}\\
 \sx_n^\prime\bV_n^{-2}\boldsymbol{\theta}&\stackrel{a.s.}{\longrightarrow}&0 \label{2_6}
\end{eqnarray}
$\text{for}~ p/n\longrightarrow c \in (0,1) ~\text{as} ~n\rightarrow\infty$.
\end{lemma}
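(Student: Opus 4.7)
The plan is to reduce each statement in Lemma~\ref{lem3} to the corresponding statement in Lemma~\ref{lem2} by exploiting the Sherman--Morrison rank-one update formula~(\ref{bVn_inv}), which relates $\bV_n^{-1}$ to $\tbV_n^{-1}$ via the rank-one perturbation $\sx_n\sx_n^\prime$. For brevity abbreviate $A = \tbV_n^{-1}$, $u = \sx_n$, and $\alpha = 1 - u^\prime A u$; then $\bV_n^{-1} = A + A u u^\prime A /\alpha$. All the scalars $u^\prime A u$, $u^\prime A \btheta$, $\bxi^\prime A u$, $\bxi^\prime A \btheta$ and their $A^2$-analogues are already controlled by Lemma~\ref{lem2}. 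Moreover by (\ref{2}) one has $\alpha \stackrel{a.s.}{\longrightarrow} 1-c > 0$, so denominators are safely bounded away from zero on a tail event and every ratio can be passed to the limit.

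For (\ref{2_1})--(\ref{2_3}) I would substitute the Sherman--Morrison expansion directly. For (\ref{2_1}), the correction term $\bxi^\prime A u \cdot u^\prime A \btheta / \alpha$ vanishes since both numerator factors tend to zero by (\ref{3}), so $\bxi^\prime \bV_n^{-1}\btheta \to (1-c)^{-1}\bxi^\prime\btheta$ by (\ref{1}). For (\ref{2_2}), both contributions survive and collapse via $u^\prime A u + (u^\prime A u)^2 / \alpha = u^\prime A u/\alpha \to c/(1-c)$, using (\ref{2}). For (\ref{2_3}), the two terms in the expansion are $u^\prime A \btheta$ and $(u^\prime A u)(u^\prime A \btheta)/\alpha$, both killed by $u^\prime A \btheta \to 0$.

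For (\ref{2_4})--(\ref{2_6}) I would square the expansion to obtain
\begin{equation*}
\bV_n^{-2} \;=\; A^2 \;+\; \frac{A^2 u u^\prime A + A u u^\prime A^2}{\alpha} \;+\; \frac{u^\prime A^2 u}{\alpha^2}\, A u u^\prime A.
\end{equation*}
Contracting this identity against $(\bxi,\btheta)$ or $(\sx_n,\btheta)$, every term in the correction carries at least one factor of $u^\prime A \btheta$, $u^\prime A^2 \btheta$, $\bxi^\prime A u$, or $\bxi^\prime A^2 u$, each of which vanishes a.s.\ by (\ref{3}) or (\ref{6}). Together with $\bxi^\prime A^2 \btheta \to (1-c)^{-3} \bxi^\prime \btheta$ from (\ref{4}) this yields (\ref{2_4}) and (\ref{2_6}). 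For (\ref{2_5}) the contraction $u^\prime \bV_n^{-2} u$ collapses algebraically to $(u^\prime A^2 u)\bigl(1 + u^\prime A u/\alpha\bigr)^2 = (u^\prime A^2 u)/\alpha^2$, whose limit is $\dfrac{c/(1-c)}{(1-c)^2} = \dfrac{c}{(1-c)^3}$ by (\ref{5}) and (\ref{2}).

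No serious obstacle is anticipated. The only point demanding mild care is the a.s.\ positivity of $\alpha$, which follows from $u^\prime A u \to c \in (0,1)$: on a tail event $\alpha$ lies in a fixed compact subset of $(0,\infty)$, so all the reciprocals $1/\alpha$ and $1/\alpha^2$ in the expansion converge a.s.\ and the limits above are valid. Everything else is straightforward bookkeeping of the six scalar limits supplied by Lemma~\ref{lem2}.
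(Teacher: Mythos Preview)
Your proposal is correct and follows essentially the same approach as the paper: apply the Sherman--Morrison identity \eqref{bVn_inv} to express $\bV_n^{-1}$ and $\bV_n^{-2}$ in terms of $\tbV_n^{-1}$, $\tbV_n^{-2}$ and the scalars $\sx_n^\prime\tbV_n^{-k}\sx_n$, $\sx_n^\prime\tbV_n^{-k}\btheta$, then read off each limit from Lemma~\ref{lem2}. Your algebraic simplifications (e.g.\ $u^\prime A u/\alpha$ for \eqref{2_2} and $u^\prime A^2 u/\alpha^2$ for \eqref{2_5}) and your remark on the a.s.\ positivity of $\alpha$ match the paper's computations exactly.
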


\begin{proof}[Proof of Lemma \ref{lem3}:]
From \eqref{bVn_inv} we obtain
\begin{eqnarray*}
\bxi^\prime\bV_n^{-1}\btheta&=&\bxi^\prime\tbV_n^{-1}\btheta+\frac{\bxi^\prime\tbV_n^{-1}\sx_n\sx_n^\prime\tbV_n^{-1}\btheta}
{1-\sx_n^\prime\tbV_n^{-1}\sx_n}\stackrel{a.s.}{\longrightarrow} (1-c)^{-1} \boldsymbol{\xi}^\prime \boldsymbol{\theta}
\end{eqnarray*}
$\text{for}~ p/n\longrightarrow c \in (0,1) ~\text{as} ~n\rightarrow\infty$ following \eqref{1}-\eqref{3}. Similarly, we get \eqref{2_2} and \eqref{2_3}.

In case of \eqref{2_3}, we get
\begin{eqnarray*}
\bxi^\prime\bV_n^{-2}\btheta&=&\bxi^\prime\tbV_n^{-2}\btheta+\frac{\bxi^\prime\tbV_n^{-2}\sx_n\sx_n^\prime\tbV_n^{-1}\btheta}
{1-\sx_n^\prime\tbV_n^{-1}\sx_n}+\frac{\bxi^\prime\tbV_n^{-1}\sx_n\sx_n^\prime\tbV_n^{-2}\btheta}
{1-\sx_n^\prime\tbV_n^{-1}\sx_n}\\
&+&\sx_n^\prime\tbV_n^{-2}\sx_n \frac{\bxi^\prime\tbV_n^{-1}\sx_n\sx_n^\prime\tbV_n^{-1}\btheta}
{(1-\sx_n^\prime\tbV_n^{-1}\sx_n)^2} \stackrel{a.s.}{\longrightarrow} (1-c)^{-1} \boldsymbol{\xi}^\prime \boldsymbol{\theta}
\end{eqnarray*}
$\text{for}~ p/n\longrightarrow c \in (0,1) ~\text{as} ~n\rightarrow\infty$. Similarly,
\begin{eqnarray*}
\sx_n^\prime\bV_n^{-2}\sx_n&=&\sx_n^\prime\tbV_n^{-2}\sx_n+\frac{\sx_n^\prime\tbV_n^{-2}\sx_n\sx_n^\prime\tbV_n^{-1}\sx_n}
{1-\sx_n^\prime\tbV_n^{-1}\sx_n}+\frac{\sx_n^\prime\tbV_n^{-1}\sx_n\sx_n^\prime\tbV_n^{-2}\sx_n}
{1-\sx_n^\prime\tbV_n^{-1}\sx_n}\\
&+&\sx_n^\prime\tbV_n^{-2}\sx_n \frac{\sx_n^\prime\tbV_n^{-1}\sx_n\sx_n^\prime\tbV_n^{-1}\sx_n}
{(1-\sx_n^\prime\tbV_n^{-1}\sx_n)^2}=\frac{\sx_n^\prime\tbV_n^{-2}\sx_n}{(1-\sx_n^\prime\tbV_n^{-1}\sx_n)^2} \stackrel{a.s.}{\longrightarrow} \frac{c}{(1-c)^3}
\end{eqnarray*}
and
\begin{eqnarray*}
\sx_n^\prime\bV_n^{-2}\btheta &=& \sx_n^\prime\tbV_n^{-2}\btheta+\frac{\sx_n^\prime\tbV_n^{-2}\sx_n\sx_n^\prime\tbV_n^{-1}\btheta}
{1-\sx_n^\prime\tbV_n^{-1}\sx_n}+\frac{\sx_n^\prime\tbV_n^{-1}\sx_n\sx_n^\prime\tbV_n^{-2}\btheta}
{1-\sx_n^\prime\tbV_n^{-1}\sx_n}\\
&+&\sx_n^\prime\tbV_n^{-2}\sx_n \frac{\sx_n^\prime\tbV_n^{-1}\sx_n\sx_n^\prime\tbV_n^{-1}\btheta}
{(1-\sx_n^\prime\tbV_n^{-1}\sx_n)^2} \stackrel{a.s.}{\longrightarrow} 0
\end{eqnarray*}
$\text{for}~ p/n\longrightarrow c \in (0,1) ~\text{as} ~n\rightarrow\infty$.
\end{proof}

\vspace{1cm}
\begin{lemma}\label{lem4}
Assume (A2). Let $\boldsymbol{\theta}$ and $\boldsymbol{\xi}$ be universal nonrandom vectors with bounded Euclidean norms and let $\bP_n=\bV_n^{-1}-\frac{\bV_n^{-1}\boldsymbol{\eta}\boldsymbol{\eta}^\prime\bV_n^{-1}}{\boldsymbol{\eta}^\prime\bV_n^{-1}\boldsymbol{\eta}}$ where $\boldsymbol{\eta}$ is a universal nonrandom vectors with bounded Euclidean norm. Then it holds that
\begin{eqnarray}
  \boldsymbol{\xi}^\prime\bP_n\boldsymbol{\theta} &\stackrel{a.s.}{\longrightarrow}& (1-c)^{-1}\left(\boldsymbol{\xi}^\prime \boldsymbol{\theta}-\frac{\bxi^\prime \boldsymbol{\eta}\boldsymbol{\eta}^\prime \btheta}{\boldsymbol{\eta}^\prime \boldsymbol{\eta}}\right) \,,\label{4_1}\\
   \sx_n^\prime\bP_n\sx_n &\stackrel{a.s.}{\longrightarrow}& \frac{c}{1-c}  \,,\label{4_2}\\
 \sx_n^\prime\bP_n\boldsymbol{\theta}&\stackrel{a.s.}{\longrightarrow}&0  \,,\label{4_3}\\
  \boldsymbol{\xi}^\prime\bP_n^{2}\boldsymbol{\theta} &\stackrel{a.s.}{\longrightarrow}&  (1-c)^{-3} \left(\boldsymbol{\xi}^\prime \boldsymbol{\theta}-\frac{\bxi^\prime \boldsymbol{\eta}\boldsymbol{\eta}^\prime \btheta}{\boldsymbol{\eta}^\prime \boldsymbol{\eta}}\right) \,,\label{4_4}\\
   \sx_n^\prime\bP_n^{2}\sx_n &\stackrel{a.s.}{\longrightarrow}& \frac{c}{(1-c)^3}   \,,\label{4_5}\\
 \sx_n^\prime\bP_n^{2}\boldsymbol{\theta}&\stackrel{a.s.}{\longrightarrow}&0 \label{4_6}
\end{eqnarray}
$\text{for}~ p/n\longrightarrow c \in (0,1) ~\text{as} ~n\rightarrow\infty$.
\end{lemma}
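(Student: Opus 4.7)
The plan is to reduce every bilinear or quadratic form in $\bP_n$ and $\bP_n^2$ to a finite linear combination of forms in $\bV_n^{-1}$ and $\bV_n^{-2}$, and then apply Lemma~\ref{lem3} term by term. Writing $\bP_n = \bV_n^{-1} - (\boldsymbol{\eta}^\prime\bV_n^{-1}\boldsymbol{\eta})^{-1}\bV_n^{-1}\boldsymbol{\eta}\boldsymbol{\eta}^\prime\bV_n^{-1}$, the identities \eqref{4_1}--\eqref{4_3} follow immediately. For \eqref{4_1} I expand $\bxi^\prime\bP_n\btheta$ into two terms and apply \eqref{2_1} to the four scalar products $\bxi^\prime\bV_n^{-1}\btheta$, $\bxi^\prime\bV_n^{-1}\boldsymbol{\eta}$, $\boldsymbol{\eta}^\prime\bV_n^{-1}\btheta$ and $\boldsymbol{\eta}^\prime\bV_n^{-1}\boldsymbol{\eta}$. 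The factor $(1-c)^{-2}$ produced by the rank-one numerator cancels one factor $(1-c)^{-1}$ from the denominator, leaving the advertised prefactor $(1-c)^{-1}$. For \eqref{4_2} and \eqref{4_3} the same expansion combined with \eqref{2_2} and \eqref{2_3} gives the result, the key point being that $\sx_n^\prime\bV_n^{-1}\boldsymbol{\eta}\stackrel{a.s.}{\to}0$ so the entire rank-one correction vanishes whenever $\sx_n$ appears.

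For the $\bP_n^2$ statements I first derive the expansion
\begin{equation*}
\bP_n^2 = \bV_n^{-2}
- \frac{\bV_n^{-2}\boldsymbol{\eta}\boldsymbol{\eta}^\prime\bV_n^{-1}}{\boldsymbol{\eta}^\prime\bV_n^{-1}\boldsymbol{\eta}}
- \frac{\bV_n^{-1}\boldsymbol{\eta}\boldsymbol{\eta}^\prime\bV_n^{-2}}{\boldsymbol{\eta}^\prime\bV_n^{-1}\boldsymbol{\eta}}
+ \frac{\boldsymbol{\eta}^\prime\bV_n^{-2}\boldsymbol{\eta}}{(\boldsymbol{\eta}^\prime\bV_n^{-1}\boldsymbol{\eta})^2}\bV_n^{-1}\boldsymbol{\eta}\boldsymbol{\eta}^\prime\bV_n^{-1}
\end{equation*}
by direct squaring of $\bP_n$. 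Sandwiching by $\bxi^\prime$ and $\btheta$ produces four bilinear forms whose limits I read off from \eqref{2_1} and \eqref{2_4}. The first summand contributes $(1-c)^{-3}\bxi^\prime\btheta$; the three rank-one corrections contribute, in units of $\bxi^\prime\boldsymbol{\eta}\boldsymbol{\eta}^\prime\btheta/(\boldsymbol{\eta}^\prime\boldsymbol{\eta})$ and after cancelling the powers of $(1-c)$, coefficients $-1$, $-1$ and $+1$, summing to $-1$. This produces \eqref{4_4}. The limits \eqref{4_5} and \eqref{4_6} are obtained in the same way, except that the rank-one correction terms involving $\sx_n$ are killed by \eqref{2_3} and \eqref{2_6}, so only $\sx_n^\prime\bV_n^{-2}\sx_n$ and $\sx_n^\prime\bV_n^{-2}\btheta$ survive, yielding $c/(1-c)^3$ and $0$ respectively.

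The only real obstacle is the bookkeeping of the fourfold cancellation in \eqref{4_4} and the need to control the denominator $\boldsymbol{\eta}^\prime\bV_n^{-1}\boldsymbol{\eta}$. The latter follows from \eqref{2_1} applied with $\bxi=\btheta=\boldsymbol{\eta}$, giving $\boldsymbol{\eta}^\prime\bV_n^{-1}\boldsymbol{\eta}\stackrel{a.s.}{\to}(1-c)^{-1}\boldsymbol{\eta}^\prime\boldsymbol{\eta}>0$ under the implicit assumption $\boldsymbol{\eta}\neq\mathbf{0}$ built into the definition of $\bP_n$. A continuous-mapping/Slutsky argument then lifts almost sure convergence of numerator and denominator to almost sure convergence of each ratio. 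No further random matrix input beyond Lemma~\ref{lem3} is required.
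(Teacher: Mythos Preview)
Your proof is correct and follows essentially the same route as the paper: expand $\bP_n$ and $\bP_n^2$ (the paper writes down the identical four-term formula for $\bP_n^2$), reduce every bilinear form to ones in $\bV_n^{-1}$ and $\bV_n^{-2}$, and invoke Lemma~\ref{lem3} term by term. Your explicit mention of the continuous-mapping argument for the ratios and of the implicit hypothesis $\boldsymbol{\eta}\neq\mathbf{0}$ is a useful clarification that the paper leaves tacit.
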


\begin{proof}[Proof of Lemma \ref{lem4}:]
It holds that
\begin{eqnarray*}
\boldsymbol{\xi}^\prime\bP_n\boldsymbol{\theta}&=&\bxi^\prime\bV_n^{-1}\btheta-\frac{\bxi^\prime\bV_n^{-1}\boldsymbol{\eta}\boldsymbol{\eta}^\prime\bV_n^{-1}\btheta}
{\boldsymbol{\eta}^\prime\bV_n^{-1}\boldsymbol{\eta}}
\stackrel{a.s.}{\longrightarrow} (1-c)^{-1} \left(\boldsymbol{\xi}^\prime \boldsymbol{\theta}-\frac{\bxi^\prime \boldsymbol{\eta}\boldsymbol{\eta}^\prime \btheta}{\boldsymbol{\eta}^\prime \boldsymbol{\eta}}\right)
\end{eqnarray*}
$\text{for}~ p/n\longrightarrow c \in (0,1) ~\text{as} ~n\rightarrow\infty$ following \eqref{2_1}. Similarly, we get
\begin{eqnarray*}
\sx_n^\prime\bP_n\sx_n&=&\sx_n^\prime\bV_n^{-1}\sx_n-\frac{\sx_n^\prime\bV_n^{-1}\boldsymbol{\eta}\boldsymbol{\eta}^\prime\bV_n^{-1}\sx_n}
{\boldsymbol{\eta}^\prime\bV_n^{-1}\boldsymbol{\eta}}
\stackrel{a.s.}{\longrightarrow} \frac{c}{1-c}
\end{eqnarray*}
and
\begin{eqnarray*}
\sx_n^\prime\bP_n\boldsymbol{\theta}&=&\sx_n^\prime\bV_n^{-1}\btheta-\frac{\sx_n^\prime\bV_n^{-1}\boldsymbol{\eta}\boldsymbol{\eta}^\prime\bV_n^{-1}\btheta}
{\boldsymbol{\eta}^\prime\bV_n^{-1}\boldsymbol{\eta}}
\stackrel{a.s.}{\longrightarrow}0
\end{eqnarray*}
$\text{for}~ p/n\longrightarrow c \in (0,1) ~\text{as} ~n\rightarrow\infty$.

The rest of the proof follows from the equality
\[\bP_n^2=\bV_n^{-2}-\frac{\bV_n^{-2}\boldsymbol{\eta}\boldsymbol{\eta}^\prime\bV_n^{-1}}{\boldsymbol{\eta}^\prime\bV_n^{-1}\boldsymbol{\eta}}
-\frac{\bV_n^{-1}\boldsymbol{\eta}\boldsymbol{\eta}^\prime\bV_n^{-2}}{\boldsymbol{\eta}^\prime\bV_n^{-1}\boldsymbol{\eta}}
+\boldsymbol{\eta}^\prime\bV_n^{-2}\boldsymbol{\eta}\frac{\bV_n^{-1}\boldsymbol{\eta}\boldsymbol{\eta}^\prime\bV_n^{-1}}{(\boldsymbol{\eta}^\prime\bV_n^{-1}\boldsymbol{\eta})^2}
\]
and Lemma \ref{lem3}.
\end{proof}

\vspace{0.5cm}

\begin{proof}[Proof of Theorem 2.1:]
Let $q_n=\max\{\bm_n^\prime\bSigma_n^{-1}\bm_n,\mathbf{b}^\prime\bSigma_n\mathbf{b}\}$. We get that $q_n>0$ uniformly in $p$, since $\bm_n^\prime\bSigma_n^{-1}\bm_n \ge s$ and $s>0$ uniformly in $p$ by Assumption (A3).

 The optimal shrinkage intensity can be rewritten in the following way
{\footnotesize
 \begin{eqnarray}\label{alfa_app}
  \alpha_n^*&=& \beta^{-1}\dfrac{\hat{\bw}^\prime_S(\bm_n-\beta\bSigma_n\mathbf{b})-\mathbf{b}^\prime(\bm_n-\beta\bSigma_n\mathbf{b})}{\hat{\bw}_S^\prime\bSigma_n\hat{\bw}_S
  -2\mathbf{b}^\prime\bSigma_n\hat{\bw}_S+\mathbf{b}^\prime\bSigma_n\mathbf{b}}\\
  &=& \beta^{-1}\dfrac{\dfrac{\bi^\prime\bS_n^{-1}(\bm_n-\beta\bSigma_n\mathbf{b})}{\bi^\prime\bS_n^{-1}\bi}
  +\gamma^{-1}\sy_n^\prime\hat{\bQ}_n(\bm_n-\beta\bSigma_n\mathbf{b})-\mathbf{b}^\prime(\bm_n-\beta\bSigma_n\mathbf{b})}
  {\dfrac{\bi^\prime\bS_n^{-1}\bSigma_n\bS_n^{-1}\bi}{(\bi^\prime\bS_n^{-1}\bi)^2}+2\gamma^{-1}\dfrac{\sy_n^\prime\hat{\bQ}_n\bSigma_n\bS_n^{-1}\bi}
  {\bi^\prime\bS_n^{-1}\bi}+\gamma^{-2}\sy_n^\prime\hat{\bQ}_n\bSigma_n\hat{\bQ}_n\sy_n-2\dfrac{\mathbf{b}^\prime\bSigma_n\bS_n^{-1}\bi}{\bi^\prime\bS_n^{-1}\bi}
  -2\gamma^{-1}\mathbf{b}^\prime\bSigma_n\hat{\bQ}_n\sy_n+\mathbf{b}^\prime\bSigma_n\mathbf{b}}\nonumber\\
&=&\beta^{-1}\frac{A_n^*}{B_n^*},\nonumber
 \end{eqnarray}
 }
where
\begin{eqnarray*}
  A_n^* &=&\frac{1}{q_n}\left(\frac{\bi^\prime\bS_n^{-1}(\bm_n-\beta\bSigma_n\mathbf{b})}{\bi^\prime\bS_n^{-1}\bi}
  +\gamma^{-1}\sy_n^\prime\hat{\bQ}_n(\bm_n-\beta\bSigma_n\mathbf{b})-\mathbf{b}^\prime(\bm_n-\beta\bSigma_n\mathbf{b})\right)\\
  &=&\frac{(\bi^\prime\bSigma_n^{-1}\bi)^{-1/2}\sqrt{\bm_n^\prime\bSigma_n^{-1}\bm_n}}{q_n} \frac{(\bi^\prime\bSigma_n^{-1}\bi)^{-1/2}(\bm_n^\prime\bSigma_n^{-1}\bm_n)^{-1/2}\bi^\prime\bS_n^{-1}\bm_n}
  {(\bi^\prime\bSigma_n^{-1}\bi)^{-1}\bi^\prime\bS_n^{-1}\bi}\\
  &-&\beta \frac{(\bi^\prime\bSigma_n^{-1}\bi)^{-1/2}\sqrt{\mathbf{b}^\prime\bSigma_n\mathbf{b}}}{q_n} \frac{(\bi^\prime\bSigma_n^{-1}\bi)^{-1/2}(\mathbf{b}^\prime\bSigma_n\mathbf{b})^{-1/2}\bi^\prime\bS_n^{-1}\bSigma_n\mathbf{b}}
  {(\bi^\prime\bSigma_n^{-1}\bi)^{-1}\bi^\prime\bS_n^{-1}\bi}\\
  &+&\gamma^{-1}\frac{\bm_n^\prime\bSigma_n^{-1}\bm_n}{q_n}\frac{\sy_n^\prime\hat{\bQ}_n\bm_n}{\bm_n^\prime\bSigma_n^{-1}\bm_n}
  -\beta\gamma^{-1}\frac{\sqrt{\bm_n^\prime\bSigma_n^{-1}\bm_n}\sqrt{\mathbf{b}^\prime\bSigma_n\mathbf{b}}}{q_n}
  \frac{\sy_n^\prime\hat{\bQ}_n\bSigma_n\mathbf{b}}{\sqrt{\bm_n^\prime\bSigma_n^{-1}\bm_n}\sqrt{\mathbf{b}^\prime\bSigma_n\mathbf{b}}}\\
  &-&\frac{\sqrt{\bm_n^\prime\bSigma_n^{-1}\bm_n}\sqrt{\mathbf{b}^\prime\bSigma_n\mathbf{b}}}{q_n}\frac{\mathbf{b}^\prime\bm_n}
  {\sqrt{\bm_n^\prime\bSigma_n^{-1}\bm_n}\sqrt{\mathbf{b}^\prime\bSigma_n\mathbf{b}}} +\beta\frac{\mathbf{b}^\prime\bSigma_n\mathbf{b}}{q_n}
\end{eqnarray*}
and
\begin{eqnarray*}
   B_n^*&=&\frac{1}{q_n}\Bigg(\dfrac{\bi^\prime\bS_n^{-1}\bSigma_n\bS_n^{-1}\bi}{(\bi^\prime\bS_n^{-1}\bi)^2}+2\gamma^{-1}\dfrac{\sy_n^\prime\hat{\bQ}_n\bSigma_n\bS_n^{-1}\bi}
  {\bi^\prime\bS_n^{-1}\bi}+\gamma^{-2}\sy_n^\prime\hat{\bQ}_n\bSigma_n\hat{\bQ}_n\sy_n\\
  &-&2\dfrac{\mathbf{b}^\prime\bSigma_n\bS_n^{-1}\bi}{\bi^\prime\bS_n^{-1}\bi}
  -2\gamma^{-1}\mathbf{b}^\prime\bSigma_n\hat{\bQ}_n\sy_n+\mathbf{b}^\prime\bSigma_n\mathbf{b} \Bigg)\\
   &=&\frac{(\bi^\prime\bSigma_n^{-1}\bi)^{-1}}{q_n}
   \frac{(\bi^\prime\bSigma_n^{-1}\bi)^{-1}\bi^\prime\bS_n^{-1}\bSigma_n\bS_n^{-1}\bi}{(\bi^\prime\bSigma_n^{-1}\bi)^{-2}(\bi^\prime\bS_n^{-1}\bi)^2}\\
   &+&2\gamma^{-1}\frac{(\bi^\prime\bSigma_n^{-1}\bi)^{-1/2}\sqrt{\bm_n^\prime\bSigma_n^{-1}\bm_n}}{q_n}
  \frac{(\bi^\prime\bSigma_n^{-1}\bi)^{-1/2}(\bm_n^\prime\bSigma_n^{-1}\bm_n)^{-1/2}\sy_n^\prime\hat{\bQ}_n\bSigma_n\bS_n^{-1}\bi}
  {(\bi^\prime\bSigma_n^{-1}\bi)^{-1}\bi^\prime\bS_n^{-1}\bi}\\
  &+&\gamma^{-2}\frac{\bm_n^\prime\bSigma_n^{-1}\bm_n}{q_n}\frac{\sy_n^\prime\hat{\bQ}_n\bSigma_n\hat{\bQ}_n\sy_n}{\bm_n^\prime\bSigma_n^{-1}\bm_n}\\
  &-&2\frac{(\bi^\prime\bSigma_n^{-1}\bi)^{-1/2}\sqrt{\mathbf{b}^\prime\bSigma_n\mathbf{b}}}{q_n}
  \frac{(\bi^\prime\bSigma_n^{-1}\bi)^{-1/2}(\mathbf{b}^\prime\bSigma_n\mathbf{b})^{-1/2}\mathbf{b}^\prime\bSigma_n\bS_n^{-1}\bi}
  {(\bi^\prime\bSigma_n^{-1}\bi)^{-1}\bi^\prime\bS_n^{-1}\bi}\\
  &-&2\gamma^{-1}\frac{\sqrt{\bm_n^\prime\bSigma_n^{-1}\bm_n}\sqrt{\mathbf{b}^\prime\bSigma_n\mathbf{b}}}{q_n}
  \frac{\mathbf{b}^\prime\bSigma_n\hat{\bQ}_n\sy_n}{\sqrt{\bm_n^\prime\bSigma_n^{-1}\bm_n}\sqrt{\mathbf{b}^\prime\bSigma_n\mathbf{b}}}
  +\frac{\mathbf{b}^\prime\bSigma_n\mathbf{b}}{q_n},
 \end{eqnarray*}

In the formulas for $A_n^*$ and $B_n^*$ the factors $q_n^{-1}\bm_n^\prime\bSigma_n^{-1}\bm_n$ and $q_n^{-1}\mathbf{b}^\prime\bSigma_n\mathbf{b}$ are bounded by one. Moreover, since $(\bi^\prime\bSigma_n^{-1}\bi)^{-1} \le \mathbf{b}^\prime\bSigma_n\mathbf{b}$ for any $\bb$ with $\bb^\prime \bi=1$, we also get $q_n^{-1}(\bi^\prime\bSigma_n^{-1}\bi)^{-1} \le 1$.

The Euclidean norms of the following vectors
\[\frac{\bSigma_n^{-1/2}\bi}{\sqrt{\bi^\prime\bSigma_n^{-1}\bi}},\quad \frac{\bSigma_n^{-1/2}\bm_n}{\sqrt{\bm_n^\prime\bSigma_n^{-1}\bm_n}}
\quad \text{and} \quad
\frac{\bSigma_n^{1/2}\mathbf{b}}{\sqrt{\mathbf{b}^\prime\bSigma_n\mathbf{b}}}\]
are all equal to one. As a result, using $\sy_n=\bm_n+\bSigma_n^{1/2}\sx_n$ and $\bS_n=\bSigma_n^{1/2}\bV_n \bSigma_n^{1/2}$ and applying Lemma \ref{lem3} we get
\begin{eqnarray*}
\frac{\bi^\prime\bS_n^{-1}\bi}{\bi^\prime\bSigma_n^{-1}\bi}&=&\frac{\bi^\prime\bSigma_n^{-1/2}\bV_n^{-1}\bSigma_n^{-1/2}\bi}{\bi^\prime\bSigma_n^{-1}\bi}
\stackrel{a.s.}{\longrightarrow}(1-c)^{-1},\\
\frac{\bi^\prime\bS_n^{-1}\bm_n}{\sqrt{\bi^\prime\bSigma_n^{-1}\bi}\sqrt{\bm_n^\prime\bSigma_n^{-1}\bm_n}}&=&
\frac{\bi^\prime\bSigma_n^{-1/2}\bV_n^{-1}\bSigma_n^{-1/2}\bm_n}{\sqrt{\bi^\prime\bSigma_n^{-1}\bi}\sqrt{\bm_n^\prime\bSigma_n^{-1}\bm_n}} \stackrel{a.s.}{\longrightarrow}(1-c)^{-1}  \frac{\bi^\prime\bSigma_n^{-1}\bm_n}{\sqrt{\bi^\prime\bSigma_n^{-1}\bi}\sqrt{\bm_n^\prime\bSigma_n^{-1}\bm_n}},\\
\frac{\bi^\prime\bS_n^{-1}\bSigma_n\mathbf{b}}{\sqrt{\bi^\prime\bSigma_n^{-1}\bi}\sqrt{\mathbf{b}^\prime\bSigma_n\mathbf{b}}}&=&
\frac{\bi^\prime\bSigma_n^{-1/2}\bV_n^{-1}\bSigma_n^{1/2}\mathbf{b}}{\sqrt{\bi^\prime\bSigma_n^{-1}\bi}\sqrt{\mathbf{b}^\prime\bSigma_n\mathbf{b}}}
\stackrel{a.s.}{\longrightarrow}(1-c)^{-1} \frac{1}{\sqrt{\bi^\prime\bSigma_n^{-1}\bi}\sqrt{\mathbf{b}^\prime\bSigma_n\mathbf{b}}},\\
\frac{\bi^\prime\bS_n^{-1}\bSigma_n\bS_n^{-1}\bi}{\bi^\prime\bSigma_n^{-1}\bi}&=&\frac{\bi^\prime\bSigma_n^{-1/2}\bV_n^{-2}\bSigma_n^{-1/2}\bi}{\bi^\prime\bSigma_n^{-1}\bi} \stackrel{a.s.}{\longrightarrow}(1-c)^{-3}
\end{eqnarray*}
$\text{for}~ p/n\longrightarrow c \in (0,1) ~\text{as} ~n\rightarrow\infty$.

Furthermore, from Lemma \ref{lem3} and \ref{lem4} using the equalities
\[\hat{\bQ}_n=\bSigma_n^{-1/2}\bP_n\bSigma_n^{-1/2} ~~\text{and}~~\bP_n\bV_n^{-1}=\bV_n^{-2}-\frac{\bV_n^{-1}\boldsymbol{\eta}\boldsymbol{\eta} ^\prime\bV_n^{-2}}
{\boldsymbol{\eta}^\prime\bV_n^{-1}\boldsymbol{\eta}}\]
with $\boldsymbol{\eta}=\bSigma_n^{-1/2}\bi/\sqrt{\bi^\prime\bSigma_n^{-1}\bi}$ we obtain with $\bm_n^\prime\bSigma_n^{-1}\bm_n>0$
\begin{eqnarray*}
\frac{\sy_n^\prime\hat{\bQ}_n\bm_n}{\bm_n^\prime\bSigma_n^{-1}\bm_n}&=&\frac{\bm_n^\prime \bSigma_n^{-1/2}\bP_n\bSigma_n^{-1/2}\bm_n}{\bm_n^\prime\bSigma_n^{-1}\bm_n}
+\frac{\sx_n^\prime\bP_n\bSigma_n^{-1/2}\bm_n}{\bm_n^\prime\bSigma_n^{-1}\bm_n}\\
&&\stackrel{a.s.}{\longrightarrow}(1-c)^{-1}\frac{\bm_n^\prime \bQ_n \bm_n}{\bm_n^\prime\bSigma_n^{-1}\bm_n},\\
\frac{\sy_n^\prime\hat{\bQ}_n\bSigma_n\mathbf{b}}{\sqrt{\bm_n^\prime\bSigma_n^{-1}\bm_n}\sqrt{\mathbf{b}^\prime\bSigma_n\mathbf{b}}}&=&\frac{\bm_n^\prime \bSigma_n^{-1/2}\bP_n\bSigma_n^{1/2}\mathbf{b}}{\sqrt{\bm_n^\prime\bSigma_n^{-1}\bm_n}\sqrt{\mathbf{b}^\prime\bSigma_n\mathbf{b}}}
+\frac{\sx_n^\prime\bP_n\bSigma_n^{1/2}\mathbf{b}}{\sqrt{\bm_n^\prime\bSigma_n^{-1}\bm_n}\sqrt{\mathbf{b}^\prime\bSigma_n\mathbf{b}}}\\
&&\stackrel{a.s.}{\longrightarrow}(1-c)^{-1}\frac{\bm_n^\prime \bQ_n\bSigma_n\mathbf{b}}{\sqrt{\bm_n^\prime\bSigma_n^{-1}\bm_n}\sqrt{\mathbf{b}^\prime\bSigma_n\mathbf{b}}},\\
\frac{\sy_n^\prime\hat{\bQ}_n\bSigma_n\bS_n^{-1}\bi}{\sqrt{\bm_n^\prime\bSigma_n^{-1}\bm_n}\sqrt{\bi^\prime\bSigma_n^{-1}\bi}} &=&\frac{\bm_n^\prime\bSigma_n^{-1/2}\bP_n\bV_n^{-1}\bSigma_n^{-1/2}\bi}{\sqrt{\bm_n^\prime\bSigma_n^{-1}\bm_n}\sqrt{\bi^\prime\bSigma_n^{-1}\bi}}
+\frac{\sx_n^\prime\bP_n\bV_n^{-1}\bSigma_n^{-1/2}\bi}{\sqrt{\bm_n^\prime\bSigma_n^{-1}\bm_n}\sqrt{\bi^\prime\bSigma_n^{-1}\bi}} \\
&&\stackrel{a.s.}{\longrightarrow}(1-c)^{-3} \frac{\bm_n^\prime\bQ_n\bi}{\sqrt{\bm_n^\prime\bSigma_n^{-1}\bm_n}\sqrt{\bi^\prime\bSigma_n^{-1}\bi}}=0,\\
\frac{\sy_n^\prime\hat{\bQ}_n\bSigma_n\hat{\bQ}_n\sy_n}{\bm_n^\prime\bSigma_n^{-1}\bm_n}&=&
\frac{\sx_n^\prime\bP_n^2\sx_n}{\bm_n^\prime\bSigma_n^{-1}\bm_n}+2\frac{\bm_n^\prime\bSigma_n^{-1/2}\bP_n^2\sx_n}{\bm_n^\prime\bSigma_n^{-1}\bm_n}
+\frac{\bm_n^\prime\bSigma_n^{-1/2}\bP_n^2\bSigma_n^{-1/2}\bm_n}{\bm_n^\prime\bSigma_n^{-1}\bm_n}\\
&&\stackrel{a.s.}{\longrightarrow}\frac{(1-c)^{-3}c + (1-c)^{-3} \bm_n^\prime\bQ_n\bm_n}{\bm_n^\prime\bSigma_n^{-1}\bm_n}
\end{eqnarray*}
$\text{for}~ p/n\longrightarrow c \in (0,1) ~\text{as} ~n\rightarrow\infty$.

Substituting the above results into the expressions of $A_n^*$ and $B_n^*$, we get that
\[|A_n^*-A^*|\stackrel{a.s.}{\longrightarrow}0
\quad \text{and} \quad
|B_n^*-B^*|\stackrel{a.s.}{\longrightarrow}0\]
with
{\footnotesize
\begin{eqnarray*}
A^* &=& \frac{(\bi^\prime\bSigma_n^{-1}\bi)^{-1/2}\sqrt{\bm_n^\prime\bSigma_n^{-1}\bm_n}}{q_n} \frac{(\bi^\prime\bSigma_n^{-1}\bi)^{-1/2}(\bm_n^\prime\bSigma_n^{-1}\bm_n)^{-1/2}(1-c)^{-1}\bi^\prime\bSigma_n^{-1}\bm_n}
  {(1-c)^{-1}}\\
  &-&\beta \frac{(\bi^\prime\bSigma_n^{-1}\bi)^{-1/2}\sqrt{\mathbf{b}^\prime\bSigma_n\mathbf{b}}}{q_n} \frac{(\bi^\prime\bSigma_n^{-1}\bi)^{-1/2}(\mathbf{b}^\prime\bSigma_n\mathbf{b})^{-1/2}(1-c)^{-1}}
  {(1-c)^{-1}}\\
  &+&\gamma^{-1}\frac{\bm_n^\prime\bSigma_n^{-1}\bm_n}{q_n}\frac{(1-c)^{-1}\bm_n^\prime\bQ_n\bm_n}{\bm_n^\prime\bSigma_n^{-1}\bm_n}
  -\beta\gamma^{-1}\frac{\sqrt{\bm_n^\prime\bSigma_n^{-1}\bm_n}\sqrt{\mathbf{b}^\prime\bSigma_n\mathbf{b}}}{q_n}
  \frac{(1-c)^{-1}\bm_n^\prime\bQ_n\bSigma_n\mathbf{b}}{\sqrt{\bm_n^\prime\bSigma_n^{-1}\bm_n}\sqrt{\mathbf{b}^\prime\bSigma_n\mathbf{b}}}\\
  &-&\frac{\sqrt{\bm_n^\prime\bSigma_n^{-1}\bm_n}\sqrt{\mathbf{b}^\prime\bSigma_n\mathbf{b}}}{q_n}\frac{\mathbf{b}^\prime\bm_n}
  {\sqrt{\bm_n^\prime\bSigma_n^{-1}\bm_n}\sqrt{\mathbf{b}^\prime\bSigma_n\mathbf{b}}} +\beta\frac{\mathbf{b}^\prime\bSigma_n\mathbf{b}}{q_n}\\
&=&\frac{1}{q_n}\left(\frac{\bi^\prime\bSigma_n^{-1}\bm_n}{\bi^\prime\bSigma_n^{-1}\bi}-\beta\frac{1}{\bi^\prime\bSigma_n^{-1}\bi}
+\frac{\gamma^{-1}}{1-c}\bm_n^\prime\bQ_n\bm_n-\frac{\gamma^{-1}\beta}{1-c}\bm_n^\prime\bQ_n\bSigma_n\mathbf{b} - \mathbf{b}^\prime\bm_n+\beta\mathbf{b}^\prime\bSigma_n\mathbf{b}\right)
\end{eqnarray*}
and
\begin{eqnarray*}
B^* &=& \frac{(\bi^\prime\bSigma_n^{-1}\bi)^{-1}}{q_n}
   \frac{(1-c)^{-3}}{(1-c)^{-2}}+2\gamma^{-1}\frac{(\bi^\prime\bSigma_n^{-1}\bi)^{-1/2}\sqrt{\bm_n^\prime\bSigma_n^{-1}\bm_n}}{q_n}
  \frac{0}{(1-c)^{-1}}\\
  &+&\gamma^{-2}\frac{\bm_n^\prime\bSigma_n^{-1}\bm_n}{q_n}\frac{(1-c)^{-3}c + (1-c)^{-3} \bm_n^\prime\bQ_n\bm_n}{\bm_n^\prime\bSigma_n^{-1}\bm_n}\\
  &-&2\frac{(\bi^\prime\bSigma_n^{-1}\bi)^{-1/2}\sqrt{\mathbf{b}^\prime\bSigma_n\mathbf{b}}}{q_n}
  \frac{(\bi^\prime\bSigma_n^{-1}\bi)^{-1/2}(\mathbf{b}^\prime\bSigma_n\mathbf{b})^{-1/2}(1-c)^{-1}}
  {(1-c)^{-1}}\\
  &-&2\gamma^{-1}\frac{\sqrt{\bm_n^\prime\bSigma_n^{-1}\bm_n}\sqrt{\mathbf{b}^\prime\bSigma_n\mathbf{b}}}{q_n}
  \frac{(1-c)^{-1}\mathbf{b}^\prime\bSigma_n\bQ_n\bm_n}{\sqrt{\bm_n^\prime\bSigma_n^{-1}\bm_n}\sqrt{\mathbf{b}^\prime\bSigma_n\mathbf{b}}}
  +\frac{\mathbf{b}^\prime\bSigma_n\mathbf{b}}{q_n}\\
&=&\frac{1}{q_n}\Bigg(\frac{1}{1-c}\frac{1}{\bi^\prime\bSigma_n^{-1}\bi}+\gamma^{-2}\left(\frac{1}{(1-c)^3}\bm_n^\prime\bQ_n\bm_n+\frac{c}{(1-c)^3}\right)
- 2\frac{1}{\bi^\prime\bSigma_n^{-1}\bi}\\
&-&2\frac{\gamma^{-1}}{1-c}\left(\mathbf{b}^\prime\bm_n
 -\frac{\bi^\prime\bSigma_n^{-1}\bm_n}{\bi^\prime\bSigma_n^{-1}\bi}\right)+\mathbf{b}^\prime\bSigma_n\mathbf{b}\Bigg).
\end{eqnarray*}
}

Let $\alpha^*=A^*/B^*$. Then,
\[|\alpha_n^*-\alpha^*|\le \left|\frac{1}{B_n^*}(A_n^*-A^*)\right|+\left|\frac{A^*}{B^*B_n^*}(B^*-B_n^*)\right|\stackrel{a.s.}{\longrightarrow}0\]
Using the notations $V_{GMV}=\dfrac{1}{\bi^\prime\bSigma_n^{-1}\bi}$, $R_{GMV}=\dfrac{\bi^\prime\bSigma_n^{-1}\bm_n}{\bi^\prime\bSigma_n^{-1}\bi}$, $s=\bm_n^\prime\bQ_n\bm_n$, $R_b=\mathbf{b}^\prime\bm_n$ and making some technical manipulations we get the statement of Theorem 2.1.\\
\end{proof}

\begin{proof}[Proof of Corollary 2.1:]
\noindent (a)
We first compute $U_S/q_n$ with $q_n=\max\{\bm_n^\prime\bSigma_n^{-1}\bm_n,\mathbf{b}^\prime\bSigma_n\mathbf{b}\}$ given  by
\begin{eqnarray*}
\frac{1}{q_n}U_S&=& \frac{1}{q_n}\hat{\bw}_S^\prime\bm_n-\dfrac{\gamma}{2}\frac{1}{q_n}\hat{\bw}_S^\prime\bSigma_n\hat{\bw}_S,
\end{eqnarray*}
where
\begin{eqnarray*}
 \frac{1}{q_n}\hat{\bw}_S^\prime\bm_n&=& \frac{1}{q_n}\frac{\bi^\prime\bS_n^{-1}\bm_n}{\bi^\prime\bS_n^{-1}\bi}+\gamma^{-1} \frac{1}{q_n}\sy_n^\prime\hat{\bQ}_n\bm_n\\
&=& \frac{(\bi^\prime\bSigma_n^{-1}\bi)^{-1/2}\sqrt{\bm_n^\prime\bSigma_n^{-1}\bm_n}}{q_n} \frac{(\bi^\prime\bSigma_n^{-1}\bi)^{-1/2}(\bm_n^\prime\bSigma_n^{-1}\bm_n)^{-1/2}\bi^\prime\bS_n^{-1}\bm_n}
  {(\bi^\prime\bSigma_n^{-1}\bi)^{-1}\bi^\prime\bS_n^{-1}\bi}\\
  &+&\gamma^{-1} \frac{\bm_n^\prime\bSigma_n^{-1}\bm_n}{q_n}\frac{\sy_n^\prime\hat{\bQ}_n\bm_n}{\bm_n^\prime\bSigma_n^{-1}\bm_n} \\
&\stackrel{a.s.}{\longrightarrow}&\frac{(\bi^\prime\bSigma_n^{-1}\bi)^{-1/2}\sqrt{\bm_n^\prime\bSigma_n^{-1}\bm_n}}{q_n} \frac{(\bi^\prime\bSigma_n^{-1}\bi)^{-1/2}(\bm_n^\prime\bSigma_n^{-1}\bm_n)^{-1/2}(1-c)^{-1}\bi^\prime\bSigma_n^{-1}\bm_n}
  {(1-c)^{-1}}\\
  &+&\gamma^{-1}(1-c)^{-1}\frac{\bm_n^\prime\bSigma_n^{-1}\bm_n}{q_n}\frac{\bm_n^\prime \bQ_n \bm_n}{\bm_n^\prime\bSigma_n^{-1}\bm_n}=\frac{1}{q_n}
  \left(R_{GMV}+\gamma^{-1}(1-c)^{-1}s\right)
\end{eqnarray*}
and
\begin{eqnarray*}
\frac{1}{q_n}\hat{\bw}_S^\prime\bSigma_n\hat{\bw}_S&=&\frac{\bi^\prime\bS_n^{-1}\bSigma_n\bS_n^{-1}\bi}{(\bi^\prime\bS_n^{-1}\bi)^2}
+2\gamma^{-1}\frac{\bi^\prime\bS_n^{-1}\bSigma_n\hat{\bQ}_n\sy_n}{\bi^\prime\bS_n^{-1}\bi}+\gamma^{-2}\sy_n^\prime\hat{\bQ}_n\bSigma_n\hat{\bQ}_n\sy_n\\
&=&\frac{(\bi^\prime\bSigma_n^{-1}\bi)^{-1}}{q_n}
   \frac{(\bi^\prime\bSigma_n^{-1}\bi)^{-1}\bi^\prime\bS_n^{-1}\bSigma_n\bS_n^{-1}\bi}{(\bi^\prime\bSigma_n^{-1}\bi)^{-2}(\bi^\prime\bS_n^{-1}\bi)^2}
   +\gamma^{-2}\frac{\bm_n^\prime\bSigma_n^{-1}\bm_n}{q_n}\frac{\sy_n^\prime\hat{\bQ}_n\bSigma_n\hat{\bQ}_n\sy_n}{\bm_n^\prime\bSigma_n^{-1}\bm_n}\\
&+& 2\gamma^{-1} \frac{(\bi^\prime\bSigma_n^{-1}\bi)^{-1/2}\sqrt{\bm_n^\prime\bSigma_n^{-1}\bm_n}}{q_n} \frac{(\bi^\prime\bSigma_n^{-1}\bi)^{-1/2}(\bm_n^\prime\bSigma_n^{-1}\bm_n)^{-1/2}\bi^\prime\bS_n^{-1}\bSigma_n\hat{\bQ}_n\sy_n}
{(\bi^\prime\bSigma_n^{-1}\bi)^{-1}\bi^\prime\bS_n^{-1}\bi} \\
&\stackrel{a.s.}{\longrightarrow}&\frac{(\bi^\prime\bSigma_n^{-1}\bi)^{-1}}{q_n}\frac{(1-c)^{-3}}{(1-c)^{-2}}
+\gamma^{-2}\frac{\bm_n^\prime\bSigma_n^{-1}\bm_n}{q_n}\frac{(1-c)^{-3}c + (1-c)^{-3} \bm_n^\prime\bQ_n\bm_n}{\bm_n^\prime\bSigma_n^{-1}\bm_n}\\
&+&2\gamma^{-1}\frac{(\bi^\prime\bSigma_n^{-1}\bi)^{-1/2}\sqrt{\bm_n^\prime\bSigma_n^{-1}\bm_n}}{q_n} \frac{0}{(1-c)^{-1}}\\
&=&\frac{1}{q_n}\left((1-c)^{-1}V_{GMV}+\gamma^{-2}\frac{c}{(1-c)^{3}}  +\gamma^{-2} (1-c)^{-3} s\right)
\end{eqnarray*}
$\text{for}~ p/n\longrightarrow c \in (0,1) ~\text{as} ~n\rightarrow\infty$.

Finally, the equality
\[\frac{1}{q_n}U_{EU}=\frac{1}{q_n}R_{GMV}+\frac{1}{2}\gamma^{-1} \frac{1}{q_n}s - \frac{\gamma}{2} \frac{1}{q_n}V_{GMV},\]
implies the statement of the first part of the corollary.

\noindent (b) It holds that
\begin{eqnarray*}
\frac{1}{q_n}U_{GSE} &=& \alpha^* \frac{1}{q_n}\hat{\bw}_S^\prime\bm_n+(1-\alpha^*)\frac{1}{q_n}\mathbf{b}^\prime\bm_n\\
&-&\frac{\gamma}{2}\left((\alpha^*)^2\frac{1}{q_n}\hat{\bw}_S^\prime\bSigma_n\hat{\bw}_S
+2\alpha^*(1-\alpha^*)\frac{1}{q_n}\mathbf{b}^\prime\bSigma_n\hat{\bw}_S+(1-\alpha^*)^2\frac{1}{q_n}\mathbf{b}^\prime\bSigma_n\mathbf{b}\right),
\end{eqnarray*}
where the asymptotic values of $\hat{\bw}_S^\prime\bm_n/ q_n$ and $\hat{\bw}_S^\prime\bSigma_n\hat{\bw}_S /q_n$ are fund in part (a) and
\begin{eqnarray*}
\frac{1}{q_n}\mathbf{b}^\prime\bSigma_n\hat{\bw}_S&=&\frac{1}{q_n}\frac{\mathbf{b}^\prime\bSigma_n\bS_n^{-1}\bi}{\bi^\prime\bS_n^{-1}\bi}
+\gamma^{-1}\frac{1}{q_n}\mathbf{b}^\prime\bSigma_n\hat{\bQ}_n\sy_n\\
&=&\frac{(\bi^\prime\bSigma_n^{-1}\bi)^{-1/2}\sqrt{\mathbf{b}^\prime\bSigma_n\mathbf{b}}}{q_n} \frac{(\bi^\prime\bSigma_n^{-1}\bi)^{-1/2}(\mathbf{b}^\prime\bSigma_n\mathbf{b})^{-1/2}\bi^\prime\bS_n^{-1}\bSigma_n\mathbf{b}}
  {(\bi^\prime\bSigma_n^{-1}\bi)^{-1}\bi^\prime\bS_n^{-1}\bi}\\
&+& \gamma^{-1}\frac{\sqrt{\bm_n^\prime\bSigma_n^{-1}\bm_n}\sqrt{\mathbf{b}^\prime\bSigma_n\mathbf{b}}}{q_n}
  \frac{\mathbf{b}^\prime\bSigma_n\hat{\bQ}_n\sy_n}{\sqrt{\bm_n^\prime\bSigma_n^{-1}\bm_n}\sqrt{\mathbf{b}^\prime\bSigma_n\mathbf{b}}} \\
&\stackrel{a.s.}{\longrightarrow}&
\frac{(\bi^\prime\bSigma_n^{-1}\bi)^{-1/2}\sqrt{\mathbf{b}^\prime\bSigma_n\mathbf{b}}}{q_n} \frac{(\bi^\prime\bSigma_n^{-1}\bi)^{-1/2}(\mathbf{b}^\prime\bSigma_n\mathbf{b})^{-1/2}(1-c)^{-1}}
  {(1-c)^{-1}}\\
&+&\gamma^{-1}\frac{\sqrt{\bm_n^\prime\bSigma_n^{-1}\bm_n}\sqrt{\mathbf{b}^\prime\bSigma_n\mathbf{b}}}{q_n}
  \frac{(1-c)^{-1}\left(\mathbf{b}^\prime\bm_n-\dfrac{\bi^\prime\bSigma_n^{-1}\bm_n}{\bi^\prime\bSigma_n^{-1}\bi}\right)}{\sqrt{\bm_n^\prime\bSigma_n^{-1}\bm_n}\sqrt{\mathbf{b}^\prime\bSigma_n\mathbf{b}}}
\\
&=&\frac{1}{q_n}\left(V_{GMV}+\gamma^{-1}(1-c)^{-1} (R_b-R_{GMV})\right).
\end{eqnarray*}
$\text{for}~ p/n\longrightarrow c \in (0,1) ~\text{as} ~n\rightarrow\infty$.

Hence,
{\footnotesize
\begin{eqnarray*}
&&\frac{1}{q_n}(U_{EU}-U_{GSE})= (\alpha^*)^2\frac{1}{q_n}U_{EU}+(1-\alpha^*)^2\frac{1}{q_n}U_{EU}+2\alpha^*(1-\alpha^*)\frac{1}{q_n}U_{EU}\\
&-&(\alpha^*)^2\frac{1}{q_n}U_{S}-\alpha^*(1-\alpha^*)\frac{1}{q_n}\hat{\bw}_S^\prime\bm_n
-(1-\alpha^*)^2\frac{1}{q_n}U_{\bb}-\alpha^*(1-\alpha^*)\frac{1}{q_n}\mathbf{b}^\prime\bm_n
+\gamma\alpha^*(1-\alpha^*)\frac{1}{q_n}\mathbf{b}^\prime\bSigma_n\hat{\bw}_S\\
&\stackrel{a.s.}{\longrightarrow}&(\alpha^*)^2\frac{1}{q_n}(U_{EU}-U_S)+(1-\alpha^*)^2\frac{1}{q_n}(U_{EU}-U_{\bb})\\
&+&\alpha^*(1-\alpha^*)\frac{1}{q_n}\left(2R_{GMV}+\gamma^{-1} s - \gamma V_{GMV}-R_{GMV}-\frac{\gamma^{-1}}{1-c} s-R_b+\gamma V_{GMV}+\frac{R_b-R_{GMV}}{1-c}\right)\\
&=&(\alpha^*)^2\frac{1}{q_n}(U_{EU}-U_S)+(1-\alpha^*)^2\frac{1}{q_n}(U_{EU}-U_{\bb})
+\alpha^*(1-\alpha^*)\frac{c}{1-c}\frac{1}{q_n}(R_b-R_{GMV}-\gamma^{-1} s).
\end{eqnarray*}
}
$\text{for}~ p/n\longrightarrow c \in (0,1) ~\text{as} ~n\rightarrow\infty$.
\end{proof}

\vspace{0.5cm}
For the proof of Theorem 2.2 we need several results about the properties of Moore-Penrose inverse which are summarized in the following three lemmas. Similarly as the proof of Lemma \ref{lem2}, we will use Lemma \ref{lem1} and Theorem \ref{weierstrass} in a sequel every time a derivative must be interchanged with the limit $n\to\infty$.

\begin{lemma}\label{lem5}
Assume (A2). Let $\boldsymbol{\theta}$ and $\boldsymbol{\xi}$ be universal nonrandom vectors with bounded Euclidean norms. Then it holds that
\begin{eqnarray}
\boldsymbol{\xi}^\prime\tbV_n^+\boldsymbol{\theta} &\stackrel{a.s.}{\longrightarrow}&  c^{-1}(c-1)^{-1} \boldsymbol{\xi}^\prime \boldsymbol{\theta} \,,\label{5_1}\\
\boldsymbol{\xi}^\prime(\tbV_n^+)^{2}\boldsymbol{\theta} &\stackrel{a.s.}{\longrightarrow}&  (c-1)^{-3} \boldsymbol{\xi}^\prime \boldsymbol{\theta} \,,\label{5_2}\\
\sx^\prime\tbV_n^+\sx_n &=&1   \,,\label{5_3}\\
\sx^\prime(\tbV_n^+)^{2}\sx_n &\stackrel{a.s.}{\longrightarrow}& \frac{c}{c-1}  \,,\label{5_4}\\
\sx^\prime(\tbV_n^+)^{3}\sx_n &\stackrel{a.s.}{\longrightarrow}& \frac{c^3}{(c-1)^3} \,,\label{5_5}\\
\sx^\prime(\tbV_n^+)^{4}\sx_n &\stackrel{a.s.}{\longrightarrow}&\frac{c^4 +c^5}{(c-1)^5} \,,\label{5_6}\\
\sx^\prime\tbV_n^+\boldsymbol{\theta} &\stackrel{a.s.}{\longrightarrow}&  0  \,,\label{5_7}\\
\sx^\prime(\tbV_n^+)^{2}\boldsymbol{\theta} &\stackrel{a.s.}{\longrightarrow}&  0  \,,\label{5_8}\\
\sx^\prime(\tbV_n^+)^{3}\boldsymbol{\theta} &\stackrel{a.s.}{\longrightarrow}&  0  \label{5_9}
\end{eqnarray}
$\text{for}~ p/n\longrightarrow c \in (1, +\infty) ~\text{as} ~n\rightarrow\infty ~~$.
\end{lemma}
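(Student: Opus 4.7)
The strategy is to convert every quantity involving the singular Moore--Penrose inverse $\tbV_n^+$ into one involving the $n\times n$ companion matrix $\bW_n := \tfrac{1}{n}\bx_n^\prime\bx_n$, which is almost-surely invertible for $c>1$ and whose limiting Stieltjes transform $\underline{m}(z)$ is analytic at $z=0$, and then to apply Lemma~\ref{lem1}. From the thin SVD $\bx_n=\bU\bD\bV^\prime$ one verifies algebraically the three dual identities
\[(\tbV_n^+)^k = \tfrac{1}{n}\bx_n\,\bW_n^{-(k+1)}\,\bx_n^\prime,\qquad \bx_n^\prime\sx_n=\bW_n\bi_n,\qquad (\tbV_n-z\bI_p)^{-1}=\sum_{m\ge 1}z^{m-1}(\tbV_n^+)^m - z^{-1}(\bI_p-\bU\bU^\prime),\]
the last being the Laurent expansion around $z=0$; its analytic part $\bU(\bD^2/n-z\bI_n)^{-1}\bU^\prime$ is holomorphic at $0$ because the non-zero eigenvalues of $\tbV_n$ remain bounded away from $0$ for $c>1$.

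For the quadratic forms in $\sx_n$ (items \eqref{5_3}--\eqref{5_6}), combining the first two identities gives the clean reduction
\[\sx_n^\prime(\tbV_n^+)^k\sx_n = \tfrac{1}{n}\bi_n^\prime\bW_n^{-(k-1)}\bi_n = \Tr\bigl(\bW_n^{-(k-1)}\,\bi_n\bi_n^\prime/n\bigr).\]
For $k=1$ this equals $\|\bi_n\|^2/n=1$, settling \eqref{5_3} as an exact identity. For $k\ge 2$ the matrix $\bi_n\bi_n^\prime/n$ is nonrandom with unit trace norm, so the second identity of Lemma~\ref{lem1}, iterated $(k-2)$ times through $(\bW_n-z\bI)^{-j}=\tfrac{1}{(j-1)!}\tfrac{d^{j-1}}{dz^{j-1}}(\bW_n-z\bI)^{-1}$, yields the almost-sure limit $\underline{m}^{(k-2)}(0)/(k-2)!$. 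A routine series inversion of $x^2-(1+z-c)x+z=0$ starting from $x(0)=0$ produces $x(z)=-\tfrac{z}{c-1}-\tfrac{cz^2}{(c-1)^3}-\tfrac{c(c+1)z^3}{(c-1)^5}+O(z^4)$, whence, via $m(z)=1/(x(z)-z)$ and $\underline{m}(z)=(c-1)/z+cm(z)$, the local expansions
\[\underline{m}(z)=\tfrac{1}{c-1}+\tfrac{cz}{(c-1)^3}+\tfrac{c(c+1)z^2}{(c-1)^5}+O(z^3),\qquad m(z)=-\tfrac{c-1}{cz}+\tfrac{1}{c(c-1)}+\tfrac{z}{(c-1)^3}+O(z^2).\]
Reading off $\underline{m}(0)$, $\underline{m}'(0)$, $\underline{m}''(0)/2$ delivers \eqref{5_4}, \eqref{5_5}, \eqref{5_6} in turn.

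For the deterministic bilinear forms \eqref{5_1}--\eqref{5_2}, substitute the Laurent expansion of $(\tbV_n-z\bI)^{-1}$ into the almost-sure limit $\bxi^\prime(\tbV_n-z\bI)^{-1}\btheta\to m(z)\bxi^\prime\btheta$ from the first identity of Lemma~\ref{lem1}. Matching the $z^{-1}$ coefficients recovers $\bxi^\prime(\bI_p-\bU\bU^\prime)\btheta\to(1-1/c)\bxi^\prime\btheta$, consistent with the point mass $1-1/c$ of the LSD of $\tbV_n$ at $0$; matching the $z^0$ and $z^1$ coefficients gives
\[\bxi^\prime\tbV_n^+\btheta \stackrel{a.s.}{\longrightarrow}\tfrac{1}{c(c-1)}\bxi^\prime\btheta,\qquad \bxi^\prime(\tbV_n^+)^2\btheta \stackrel{a.s.}{\longrightarrow}\tfrac{1}{(c-1)^3}\bxi^\prime\btheta,\]
which are \eqref{5_1} and \eqref{5_2}. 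For the cross terms \eqref{5_7}--\eqref{5_9} the dual representation yields $\sx_n^\prime(\tbV_n^+)^k\btheta=\tfrac{1}{n}\bi_n^\prime\bW_n^{-k}\bx_n^\prime\btheta = n^{k-1}(\bV^\prime\bi_n)^\prime\bD^{1-2k}(\bU^\prime\btheta)$; the zero mean of the entries of $\bx_n^\prime\btheta$ together with a direct second-moment bound using (A2) delivers a standard deviation of order $n^{-1/2}$, and almost-sure convergence to $0$ follows by Borel--Cantelli.

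The main obstacle will be the coefficient-matching step for the bilinear forms: one must upgrade the almost-sure pointwise convergence provided by Lemma~\ref{lem1} to almost-sure uniform-on-compacts convergence on a small complex disk around $z=0$, so that Taylor coefficients on both sides may legitimately be compared. The cleanest remedy is to work with the analytic branch $\bU(\bD^2/n-z\bI)^{-1}\bU^\prime$, whose norm is uniformly bounded in $n$ on such a disk because the smallest non-zero singular value of $\bx_n/\sqrt{n}$ remains bounded away from $0$ for $c>1$ under (A2), and to invoke Vitali's convergence theorem.
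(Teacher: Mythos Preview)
Your treatment of \eqref{5_1}--\eqref{5_6} is essentially the paper's argument in different clothing. The paper also reduces $\sx_n^\prime(\tbV_n^+)^k\sx_n$ to $\tfrac1n\bi_n^\prime\bW_n^{-(k-1)}\bi_n$ and applies the second identity of Lemma~\ref{lem1}; it computes $\underline m(0),\underline m'(0),\underline m''(0)$ by differentiating the Marchenko--Pastur functional equation rather than by your series inversion of $x^2-(1+z-c)x+z=0$, but the two calculations are equivalent. For \eqref{5_1}--\eqref{5_2} the paper uses the Woodbury identity
\[
\tfrac{1}{\sqrt n}\bx_n(\bW_n-z\bI_n)^{-1}\tfrac{1}{\sqrt n}\bx_n^\prime=\bI_p+z(\tbV_n-z\bI_p)^{-1}
\]
and then differentiates $\theta(z)=z\,m(z)=z/(x(z)-z)$ at $z=0$, which is exactly your Laurent/coefficient-matching step stated without the singular part. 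Your explicit invocation of Vitali's theorem to pass from pointwise to locally uniform convergence (so that derivatives/coefficients may be compared) is a point the paper leaves implicit; it is the right justification.

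The genuine gap is in your handling of the cross terms \eqref{5_7}--\eqref{5_9}. Your representation $\sx_n^\prime(\tbV_n^+)^k\btheta=\tfrac1n\bi_n^\prime\bW_n^{-k}\bx_n^\prime\btheta$ is correct, but the sentence ``zero mean of the entries of $\bx_n^\prime\btheta$ together with a direct second-moment bound \ldots\ delivers a standard deviation of order $n^{-1/2}$, and almost-sure convergence follows by Borel--Cantelli'' does not constitute a proof. First, $\bW_n^{-k}$ and $\bx_n^\prime\btheta$ are built from the \emph{same} data matrix and are not independent, so centring $\bx_n^\prime\btheta$ alone does not give mean zero for the bilinear form; a crude operator-norm bound only yields $O(1)$, and extracting the cancellation requires a leave-one-column or martingale-difference argument that you have not supplied. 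Second, even granting a variance of order $n^{-1}$, Chebyshev plus Borel--Cantelli fails because $\sum n^{-1}$ diverges; you would need a higher-moment bound (which (A2) does permit) and must say so. The paper sidesteps all of this: it feeds the \emph{random} vector $\sx_n$ through the resolvent identity and invokes the result $\sx_n^\prime(\tbV_n-z\bI_p)^{-1}\btheta\stackrel{a.s.}{\to}0$ (Lemma~\ref{lem2}, item~\eqref{3}, due to \cite{pan2014}), which remains valid for $z$ in a neighbourhood of $0$ when $c>1$, and then differentiates in $z$. This is precisely the same analytic-continuation/coefficient-matching device you already used for \eqref{5_1}--\eqref{5_2}; applying it once more to the cross terms closes the gap cleanly.
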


\begin{proof}[Proof of Lemma \ref{lem5}:]
It holds that
\[\tbV^+=\left(\frac{1}{n}\bx_n\bx_n^\prime\right)^+=\frac{1}{\sqrt{n}}\bx_n\left(\frac{1}{n}\bx_n^\prime\bx_n\right)^{-2}\frac{1}{\sqrt{n}}\bx_n^\prime\]
and, similarly,
\[(\tbV^+)^i=\frac{1}{\sqrt{n}}\bx_n\left(\frac{1}{n}\bx_n^\prime\bx_n\right)^{-(i+1)}\frac{1}{\sqrt{n}}\bx_n^\prime ~~\text{for} ~~ i=2,3,4.\]

Let $\mathbf{\Theta}=\boldsymbol{\theta}\boldsymbol{\xi}^\prime$. It holds that
\begin{eqnarray*}
\boldsymbol{\xi}^\prime\tbV_n^+\boldsymbol{\theta}&=&\text{tr}\left[\frac{1}{\sqrt{n}}\bx_n\left(\frac{1}{n}\bx_n^\prime\bx_n\right)^{-2}\frac{1}{\sqrt{n}}\bx_n^\prime\mathbf{\Theta}\right]
=\left.\dfrac{\partial}{\partial z}\text{tr}\left[\frac{1}{\sqrt{n}}\bx_n\left(\frac{1}{n}\bx_n^\prime\bx_n-z\bI_n\right)^{-1}\frac{1}{\sqrt{n}}\bx_n^\prime\mathbf{\Theta}\right]\right|_{z=0}\,,\\
\boldsymbol{\xi}^\prime(\tbV_n^+)^{2}\boldsymbol{\theta}&=&\text{tr}\left[\frac{1}{\sqrt{n}}\bx_n\left(\frac{1}{n}\bx_n^\prime\bx_n\right)^{-3}\frac{1}{\sqrt{n}}\bx_n^\prime\mathbf{\Theta}\right]
=\frac{1}{2}\left.\dfrac{\partial^2}{\partial z^2}\text{tr}\left[\frac{1}{\sqrt{n}}\bx_n\left(\frac{1}{n}\bx_n^\prime\bx_n-z\bI_n\right)^{-1}\frac{1}{\sqrt{n}}\bx_n^\prime\mathbf{\Theta}\right]\right|_{z=0}\,.
\end{eqnarray*}

The application of Woodbury formula (matrix inversion lemma, see, e.g., \cite{hornjohn1985}),
\begin{eqnarray}\label{wf}
\frac{1}{\sqrt{n}}\bx_n\left(\frac{1}{n}\bx_n^\prime\bx_n-z\bI_n\right)^{-1}\frac{1}{\sqrt{n}}\bx_n^\prime
=\bI_p+z\left(\frac{1}{n}\bx_n\bx_n^\prime-z\bI_p\right)^{-1}
\end{eqnarray}
leads to
\begin{eqnarray*}
\boldsymbol{\xi}^\prime \tbV_n^+\boldsymbol{\theta}&=&\left.\dfrac{\partial}{\partial z}z\text{tr}\left[\left(\frac{1}{n}\bx_n\bx_n^\prime-z\bI_p\right)^{-1}\mathbf{\Theta}\right]\right|_{z=0},\\
\boldsymbol{\xi}^\prime(\tbV_n^+)^{2}\boldsymbol{\theta}&=& \frac{1}{2}\left.\dfrac{\partial^2}{\partial z^2}z\text{tr}\left[\left(\frac{1}{n}\bx_n\bx_n^\prime-z\bI_p\right)^{-1}\mathbf{\Theta}\right]\right|_{z=0}\,.
\end{eqnarray*}

From the proof of Lemma \ref{lem2} we know that the matrix $\mathbf{\Theta} $ possesses the bounded trace norm. Then the application of Lemma \ref{lem1} leads to
\begin{eqnarray*}
\boldsymbol{\xi}^\prime\tbV_n^+\boldsymbol{\theta}&\stackrel{a.s.}{\longrightarrow}&\left.\dfrac{\partial}{\partial z}\frac{z}{x(z)-z}\right|_{z=0}\boldsymbol{\xi}^\prime\boldsymbol{\theta}\,,\label{th12_eq1}\\
\boldsymbol{\xi}^\prime(\tbV_n^+)^{2}\boldsymbol{\theta}&\stackrel{a.s.}{\longrightarrow}& \frac{1}{2}\left.\dfrac{\partial^2}{\partial z^2}\frac{z}{x(z)-z}\right|_{z=0}\boldsymbol{\xi}^\prime\boldsymbol{\theta}\label{th12_eq3}
\end{eqnarray*}
for $p/n\rightarrow c>1$ as $n\rightarrow\infty$, where $x(z)$ is given in \eqref{RM2011_id_xz}.

Let us make the following notations
\begin{equation*}\label{not}
\theta(z)=\dfrac{z}{x(z)-z} \quad \text{and} \quad \phi(z)=\dfrac{x(z)-zx^\prime(z)}{z^2}\,.
\end{equation*}
Then the first and the second derivatives of $\theta(z)$ are given by
\begin{equation}\label{not_der}
\theta^{\prime}(z)=\theta^2(z)\phi(z)\quad \text{and} \quad \theta^{''}(z)=2\theta(z)\theta^\prime(z)\phi(z)+\theta^2(z)\phi^\prime(z)\,.
\end{equation}

Using L'Hopital's rule, we get
\begin{equation}\label{tetalim}
\theta(0)=\lim\limits_{z\rightarrow0}\theta(z)=\lim\limits_{z\rightarrow0}\dfrac{z}{x(z)-z}
=\lim\limits_{z\rightarrow0}\dfrac{1}{(x^\prime(z)-1)}=\dfrac{1}{\dfrac{1}{2}\left(1-\dfrac{1+c}{|1-c|}\right)-1}=-\dfrac{c-1}{c}\,,
\end{equation}
\begin{equation}\label{philim}
\phi(0)=\lim\limits_{z\rightarrow0}\phi(z)=\lim\limits_{z\rightarrow0}\dfrac{x(z)-zx^\prime(z)}{z^2}=-\dfrac{1}{2}\lim\limits_{z\rightarrow0}x^{''}(z)
=-\dfrac{1}{2}\lim\limits_{z\rightarrow0}\frac{-2c}{((1-c+z)^2-4z)^{3/2}}
=\dfrac{c}{(c-1)^3}\,,
\end{equation}
and
\begin{eqnarray}
\lim\limits_{z\rightarrow0}\phi^\prime(z)&=&-\lim\limits_{z\rightarrow0}\dfrac{2(x(z)-zx^\prime(z))+z^2x^{''}(z)}{z^2}\nonumber\\
&=&-\lim\limits_{z\rightarrow0}\dfrac{2\phi(z)+x^{''}(z)}{z}=-\lim\limits_{z\rightarrow0}(2\phi^\prime(z)+x^{'''}(z))\,,
\end{eqnarray}
which implies
\begin{equation}\label{phi1lim2}
\phi^\prime(0)=\lim\limits_{z\rightarrow0}\phi^\prime(z)
=-\dfrac{1}{3}\lim\limits_{z\rightarrow0}x^{'''}(z)=-\dfrac{1}{3}\lim\limits_{z\rightarrow0}\frac{6c(z-c-1)}{((1-c+z)^2-4z)^{5/2}}=\dfrac{2c(c+1)}{(c-1)^5}
\,.
\end{equation}

Combining (\ref{not_der}), (\ref{tetalim}), (\ref{philim}), and (\ref{phi1lim2}), we get
\begin{equation*}\label{num11a}
\theta^{'}(0)=\lim\limits_{z\rightarrow0}\theta^{'}(z)=\theta^2(0)\phi(0)=\dfrac{1}{c(c-1)}
\end{equation*}
and
\begin{equation*}\label{num11}
\theta^{''}(0)=\lim\limits_{z\rightarrow0}\theta^{''}(z)=2\theta^3(0)\phi^2(0)+\theta^2(0)\phi^\prime(0)=\dfrac{2}{(c-1)^3}\,.
\end{equation*}
Hence,
\begin{eqnarray*}
\boldsymbol{\xi}^\prime\tbV_n^+\boldsymbol{\theta}&\stackrel{a.s.}{\longrightarrow}& \dfrac{1}{c(c-1)}\bxi^\prime\bSigma_n^{-1}\btheta ~~\text{for}~p/n\rightarrow c>1~~\text{as}~n\rightarrow\infty,\\
\boldsymbol{\xi}^\prime(\tbV_n^+)^{2}\boldsymbol{\theta}&\stackrel{a.s.}{\longrightarrow}& \dfrac{1}{(c-1)^3}\bxi^\prime\bSigma_n^{-1}\btheta~~\text{for}~p/n\rightarrow c>1~~\text{as}~n\rightarrow\infty\,.
\end{eqnarray*}

Taking into account that
\begin{equation*}
\sx_n^\prime\tbV_n^{+}\sx_n = \dfrac{1}{n}\bi_n^\prime\bx^\prime_n\bx_n(\bx_n^\prime\bx_n)^{-2}\bx_n^\prime\bx_n\bi_n = \dfrac{1}{n}\bi_n^\prime\bi_n = 1\,.
\end{equation*}
we get \eqref{5_3}. Similarly, using
\begin{eqnarray*}
 \sx_n^\prime(\tbV_n^+)^i\sx_n&=& 1/n\bi_n^\prime(1/n\bx_n^\prime\bx_n)^{-(i-1)}\bi_n ~~ \text{for} ~~ i=2,3,4
\end{eqnarray*}
we get
\begin{eqnarray*}
 1/n\bi_n^\prime(1/n\bx_n^\prime\bx_n)^{-1}\bi_n&=&\lim\limits_{z\rightarrow0} \text{tr}[(1/n\bx_n^\prime\bx_n-z\bI)^{-1}\bTheta_n]\stackrel{a.s.}{\longrightarrow} \underline{m}(0) \,, \\
 1/n\bi_n^\prime(1/n\bx_n^\prime\bx_n)^{-2}\bi_n &=&\lim\limits_{z\rightarrow0}\dfrac{\partial}{\partial z} \text{tr}[(1/n\bx_n^\prime\bx_n-z\bI)^{-1}\bTheta_n]\stackrel{a.s.}{\longrightarrow}\underline{m}^\prime(0) \,,\\
 1/n\bi_n^\prime(1/n\bx_n^\prime\bx_n)^{-3}\bi_n &=&\frac{1}{2}\lim\limits_{z\rightarrow0}\dfrac{\partial^2}{\partial z^2} \text{tr}[(1/n\bx_n^\prime\bx_n-z\bI)^{-2}\bTheta_n]\stackrel{a.s.}{\longrightarrow}\frac{1}{2}\underline{m}^{\prime\prime}(0)
\end{eqnarray*}
for $p/n\rightarrow c>1$ as $n\rightarrow\infty$, where $\bTheta_n= 1/n\bi_n\bi_n^\prime$.

Using that the elements of $\bx_n$ are independent and identically distributed and the fact that $n<p$ from Lemma \ref{lem1} and Theorem \ref{weierstrass} we get that
\begin{equation*}
\underline{m}^\prime(z)=\frac{1}{\underline{x}(z)-z}~~\text{with}~~\underline{x}(z)=\dfrac{1}{2}\left(1-c^{-1}+z+\sqrt{(1-c^{-1}+z)^2-4z}\right).
\end{equation*}

Thus,
\[\underline{m}(0)=\frac{1}{\underline{x}(0)}=\frac{1}{1-c^{-1}}=\frac{c}{c-1},\]
which proves \eqref{5_4}. Furthermore, we get
\[\underline{m}^\prime(z)=-\frac{\underline{x}^\prime(z)-1}{(\underline{x}(z)-z)^2}\]
with
\[\underline{x}^\prime(z)=\dfrac{1}{2}\left(1+\frac{1}{2}\frac{2(1-c^{-1}+z)-4}{\sqrt{(1-c^{-1}+z)^2-4z}}\right)
=\dfrac{1}{2}\left(1+\frac{-1-c^{-1}+z}{\sqrt{(1-c^{-1}+z)^2-4z}}\right),
\]
and, consequently,
\[\underline{m}^\prime(0)=-\frac{\underline{x}^\prime(0)-1}{\underline{x}(0)^2}=\frac{1}{(1-c^{-1})^3}=\frac{c^3}{(c-1)^3}.\]

In order to prove \eqref{5_6}, we compute
\[\underline{m}^{\prime\prime}(z)=-\frac{\underline{x}^{\prime\prime}(z)}{(\underline{x}(z)-z)^2}+2\frac{(\underline{x}^\prime(z)-1)^2}{(\underline{x}(z)-z)^3}\]
where
\[\underline{x}^{\prime\prime}(z)=\dfrac{1}{2}\left(\frac{1}{\sqrt{(1-c^{-1}+z)^2-4z}}-\frac{(-1-c^{-1}+z)^2}{((1-c^{-1}+z)^2-4z)^{3/2}}
\right),
\]
Hence,
\[\underline{m}^{\prime\prime}(0)=
-\frac{\underline{x}^{\prime\prime}(0)}{\underline{x}(0)^2}+2\frac{(\underline{x}^\prime(0)-1)^2}{\underline{x}(0)^3}
=2\frac{c^{-1}+1}{(1-c^{-1})^5}=2\frac{c^4+c^5}{(c-1)^5}.\]

For \eqref{5_7} we consider
\begin{eqnarray*}
\sx^\prime\tbV_n^+\boldsymbol{\theta} &=& \text{tr}\left[\bI_p+z\left(\frac{1}{n}\bx_n\bx_n^\prime-z\bI_p\right)^{-1}\btheta\sx^\prime \right] = \sx^\prime\btheta +
z\sx^\prime\left(\tbV_n-z\bI_p\right)^{-1}\btheta \,.
\end{eqnarray*}
Because of \eqref{3}, it holds that $\sx^\prime\left(\tbV_n-z\bI_p\right)^{-1}\btheta$ is uniformly bounded as $z \longrightarrow 0$. Moreover, $\sx^\prime\btheta \stackrel{a.s.}{\longrightarrow}  0$ as $p \longrightarrow\infty$ following Kolmogorov's strong law of large numbers (c.f., Sen and Singer (1993, Theorem 2.3.10), since $\btheta$ has a bounded Euclidean norm. Hence, $\sx^\prime\tbV_n^+\boldsymbol{\theta} \stackrel{a.s.}{\longrightarrow}  0$ $\text{for}~ p/n\longrightarrow c \in (1,+\infty) ~\text{as} ~n\rightarrow\infty$.

Finally, in the case of \eqref{5_8} and \eqref{5_9}, we get
\begin{eqnarray*}
\sx^\prime(\tbV_n^+)^{2}\boldsymbol{\theta} &=&\left.\dfrac{\partial}{\partial z}z\text{tr}\left[\left(\frac{1}{n}\bx_n\bx_n^\prime-z\bI_p\right)^{-1}\btheta\sx^\prime \right]\right|_{z=0} \stackrel{a.s.}{\longrightarrow}  0  \,,\\
\sx^\prime(\tbV_n^+)^{3}\boldsymbol{\theta} &=& \frac{1}{2}\left.\dfrac{\partial^2}{\partial z^2}z\text{tr}\left[\left(\frac{1}{n}\bx_n\bx_n^\prime-z\bI_p\right)^{-1}\btheta\sx^\prime \right]\right|_{z=0}\stackrel{a.s.}{\longrightarrow}  0.
\end{eqnarray*}
$\text{for}~ p/n\longrightarrow c \in (1,+\infty) ~\text{as} ~n\rightarrow\infty$.
\end{proof}

\vspace{0.5cm}
\begin{lemma}\label{lem6}
Assume (A2). Let $\boldsymbol{\theta}$ and $\boldsymbol{\xi}$ be universal nonrandom vectors with bounded Euclidean norms. Then it holds that
\begin{eqnarray}
\boldsymbol{\xi}^\prime\bV_n^{+}\boldsymbol{\theta} &\stackrel{a.s.}{\longrightarrow}& c^{-1}(c-1)^{-1} \boldsymbol{\xi}^\prime \boldsymbol{\theta}  \,,\label{6_1}\\
   \sx_n^\prime\bV_n^{+}\sx_n &\stackrel{a.s.}{\longrightarrow}& \frac{1}{c-1}  \,,\label{6_2}\\
 \sx_n^\prime\bV_n^{+}\boldsymbol{\theta}&\stackrel{a.s.}{\longrightarrow}&0  \,,\label{6_3}\\
\boldsymbol{\xi}^\prime(\bV_n^+)^{2}\boldsymbol{\theta} &\stackrel{a.s.}{\longrightarrow}& (c-1)^{-3} \boldsymbol{\xi}^\prime \boldsymbol{\theta} \,,\label{6_4}\\
   \sx_n^\prime(\bV_n^+)^{2}\sx_n &\stackrel{a.s.}{\longrightarrow}& \frac{c^2}{(c-1)^3}   \,,\label{6_5}\\
 \sx_n^\prime(\bV_n^+)^{2}\boldsymbol{\theta}&\stackrel{a.s.}{\longrightarrow}&  0\label{6_6}
\end{eqnarray}
$\text{for}~ p/n\longrightarrow c \in (1,+\infty) ~\text{as} ~n\rightarrow\infty$.
\end{lemma}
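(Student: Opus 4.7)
The plan is to reduce everything to quantities controlled by Lemma \ref{lem5} by using the rank--$1$ Moore--Penrose update formula \eqref{bVn_MPinv}. Introduce the abbreviations
\[
\bu=\tbV_n^+\sx_n,\qquad \bv=(\tbV_n^+)^2\sx_n,\qquad k_2=\sx_n^\prime(\tbV_n^+)^2\sx_n,\qquad k_3=\sx_n^\prime(\tbV_n^+)^3\sx_n,\qquad k_4=\sx_n^\prime(\tbV_n^+)^4\sx_n,
\]
so that \eqref{bVn_MPinv} reads
\[
\bV_n^+=\tbV_n^++\bR,\qquad \bR=-\frac{\bu\bv^\prime+\bv\bu^\prime}{k_2}+\frac{k_3}{k_2^2}\,\bu\bu^\prime .
\]
By Lemma \ref{lem5}, $k_2\to 1/(c-1)$, $k_3\to c/(c-1)^3$, $k_4\to c(c+1)/(c-1)^5$, $\sx_n^\prime\tbV_n^+\sx_n=1$, while the ``mixed'' forms $\bxi^\prime(\tbV_n^+)^i\sx_n$ and $\sx_n^\prime(\tbV_n^+)^i\btheta$ tend to $0$ a.s.\ for $i=1,2,3$.

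For the linear statements \eqref{6_1}--\eqref{6_3} I substitute $\bV_n^+=\tbV_n^++\bR$ and evaluate $\bxi^\prime\bR\btheta$, $\sx_n^\prime\bR\sx_n$, $\sx_n^\prime\bR\btheta$ term by term. Each summand in $\bR$ is an outer product of the form $\bu\bu^\prime$, $\bu\bv^\prime$ or $\bv\bu^\prime$, so the corresponding quadratic form factors as a product of two scalar inner products, each of which is covered by Lemma \ref{lem5}. The three claims then follow from direct arithmetic: for \eqref{6_1} every contribution of $\bR$ vanishes asymptotically; for \eqref{6_2} the surviving terms evaluate to $1-2+k_3/k_2^2\to 1/(c-1)$; for \eqref{6_3} all contributions again vanish since each contains a factor $\sx_n^\prime(\tbV_n^+)^i\btheta\to 0$.

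For the quadratic statements \eqref{6_4}--\eqref{6_6} I first simplify $\bR^2$ using the scalar identities $\bu^\prime\bu=k_2$, $\bu^\prime\bv=k_3$ and $\bv^\prime\bv=k_4$. Expanding $(\bu\bv^\prime+\bv\bu^\prime)^2$ and the cross products of $\bu\bv^\prime+\bv\bu^\prime$ with $\bu\bu^\prime$, the coefficients of $\bu\bv^\prime$ and $\bv\bu^\prime$ cancel, and I obtain the clean identity
\[
\bR^2=\frac{1}{k_2}\,\bv\bv^\prime+\left(\frac{k_4}{k_2^2}-\frac{k_3^2}{k_2^3}\right)\bu\bu^\prime .
\]
Writing $(\bV_n^+)^2=(\tbV_n^+)^2+\tbV_n^+\bR+\bR\tbV_n^+ +\bR^2$ and using $\tbV_n^+\bu=\bv$, $\tbV_n^+\bv=(\tbV_n^+)^3\sx_n$, every term in the middle two summands becomes a product of a scalar from Lemma \ref{lem5} and a ``mixed'' form that tends to $0$, so they drop out in \eqref{6_4} and \eqref{6_6}, and cancel against $(\tbV_n^+)^2$ in \eqref{6_5}. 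The only surviving contribution in \eqref{6_4} is $\bxi^\prime(\tbV_n^+)^2\btheta\to(c-1)^{-3}\bxi^\prime\btheta$. In \eqref{6_5}, combining $\sx_n^\prime(\tbV_n^+)^2\sx_n\to 1/(c-1)$ with $\sx_n^\prime\bR^2\sx_n\to k_4/k_2^2-k_3^2/k_2^3=(c(c+1)-c^2)/(c-1)^3=c/(c-1)^3$ yields the stated limit.

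The main obstacle is the bookkeeping in the expansion of $(\bV_n^+)^2$: a priori it produces many cross terms, and the result depends on a non-trivial cancellation between the $\bu\bv^\prime,\bv\bu^\prime$ parts. The key observation making everything tractable is the algebraic identity for $\bR^2$ above, which reduces the task to the quantities already tabulated in Lemma \ref{lem5}.
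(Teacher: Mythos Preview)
Your approach is essentially the paper's: apply the rank--$1$ Moore--Penrose update \eqref{bVn_MPinv} and reduce every quadratic/bilinear form to the quantities in Lemma \ref{lem5}. The closed-form identity $\bR^2=\tfrac{1}{k_2}\bv\bv^\prime+(k_4/k_2^2-k_3^2/k_2^3)\bu\bu^\prime$ is correct and is a tidy shortcut compared to the paper, which expands $\left[(\tbV_n-\sx_n\sx_n^\prime)^+\right]^2$ directly.

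There is, however, a bookkeeping slip in your treatment of \eqref{6_5}. Since $\sx_n^\prime\bu=1$ and $\sx_n^\prime\bv=k_2$ are \emph{not} vanishing ``mixed'' forms, one gets
\[
\sx_n^\prime\bR^2\sx_n=\frac{1}{k_2}(\sx_n^\prime\bv)^2+\Bigl(\frac{k_4}{k_2^2}-\frac{k_3^2}{k_2^3}\Bigr)(\sx_n^\prime\bu)^2
=k_2+\frac{k_4}{k_2^2}-\frac{k_3^2}{k_2^3},
\]
not $k_4/k_2^2-k_3^2/k_2^3$; and each of $\sx_n^\prime\tbV_n^+\bR\sx_n$, $\sx_n^\prime\bR\tbV_n^+\sx_n$ equals $-k_2$, so the middle terms \emph{over}-cancel $\sx_n^\prime(\tbV_n^+)^2\sx_n=k_2$ by $k_2$. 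The four $k_2$'s then cancel in the sum, leaving exactly $k_4/k_2^2-k_3^2/k_2^3\to c/(c-1)^3$ as in the paper. Your final limit is right, but the sentence asserting that the cross terms ``cancel against $(\tbV_n^+)^2$'' while simultaneously ``combining'' $\sx_n^\prime(\tbV_n^+)^2\sx_n$ with your (understated) $\sx_n^\prime\bR^2\sx_n$ is internally inconsistent; fix the two intermediate values and the argument goes through cleanly.
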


\begin{proof}[Proof of Lemma \ref{lem6}:]
From \eqref{bVn_MPinv} we get
\begin{eqnarray*}
\bxi^\prime\bV_n^{+}\btheta&=& \bxi^\prime\tbV_n^+ \btheta
-\dfrac{\bxi^\prime\tbV_n^+\sx_n\sx_n^\prime(\tbV_n^+)^2\btheta+\bxi^\prime(\tbV_n^+)^2\sx_n\sx_n^\prime
\tbV_n^+\btheta}{\sx_n^\prime(\tbV_n^+)^2\sx_n}\\
 &+&\dfrac{\sx_n^\prime(\tbV_n^+)^3\sx_n}
{(\sx_n^\prime(\tbV_n^+)^2\sx_n)^2}\bxi^\prime\tbV_n^+\sx_n\sx_n^\prime\tbV_n^+\btheta
\stackrel{a.s.}{\longrightarrow} c^{-1}(c-1)^{-1} \boldsymbol{\xi}^\prime \boldsymbol{\theta}
\end{eqnarray*}
$\text{for}~ p/n\longrightarrow c \in (1,+\infty) ~\text{as} ~n\rightarrow\infty$ following \eqref{5_1}-\eqref{5_3}. Similarly, we get
\begin{eqnarray*}
\sx_n^\prime\bV_n^{+}\sx_n&=& \sx_n^\prime\tbV_n^+ \sx_n
-\dfrac{\sx_n^\prime\tbV_n^+\sx_n\sx_n^\prime(\tbV_n^+)^2\sx_n+\sx_n^\prime(\tbV_n^+)^2\sx_n\sx_n^\prime
\tbV_n^+\sx_n}{\sx_n^\prime(\tbV_n^+)^2\sx_n}\\
 &+&\dfrac{\sx_n^\prime(\tbV_n^+)^3\sx_n}
{(\sx_n^\prime(\tbV_n^+)^2\sx_n)^2}\sx_n^\prime\tbV_n^+\sx_n\sx_n^\prime\tbV_n^+\sx_n
\stackrel{a.s.}{\longrightarrow} \frac{1}{c-1}
\end{eqnarray*}
and
\begin{eqnarray*}
\sx_n^\prime\bV_n^{+}\btheta&=& \sx_n^\prime\tbV_n^+ \btheta
-\dfrac{\sx_n^\prime\tbV_n^+\sx_n\sx_n^\prime(\tbV_n^+)^2\btheta+\sx_n^\prime(\tbV_n^+)^2\sx_n\sx_n^\prime
\tbV_n^+\btheta}{\sx_n^\prime(\tbV_n^+)^2\sx_n}\\
 &+&\dfrac{\sx_n^\prime(\tbV_n^+)^3\sx_n}
{(\sx_n^\prime(\tbV_n^+)^2\sx_n)^2}\sx_n^\prime\tbV_n^+\sx_n\sx_n^\prime\tbV_n^+\btheta
\stackrel{a.s.}{\longrightarrow} 0
\end{eqnarray*}
$\text{for}~ p/n\longrightarrow c \in (1,+\infty) ~\text{as} ~n\rightarrow\infty$.

Now, we consider the equality
{\footnotesize
\begin{eqnarray*}
\left[(\tbV_n-\sx_n\sx_n^\prime)^+\right]^2 &=&  \left(\tbV_n^+-\dfrac{\tbV_n^+\sx_n\sx_n^\prime(\tbV_n^+)^2+(\tbV_n^+)^2\sx_n\sx_n^\prime(\tbV_n^+)}{\sx_n^\prime(\tbV_n^+)^2\sx_n} +\dfrac{\sx_n^\prime(\tbV_n^+)^3\sx_n}{(\sx_n^\prime(\tbV_n^+)^2\sx_n)^2}\tbV_n^+\sx_n\sx_n^\prime\tbV_n^+\right)^2\nonumber\\
&=& (\tbV_n^+)^2+ \left[ \dfrac{\sx_n^\prime(\tbV_n^+)^3\sx_n}{(\sx_n^\prime(\tbV_n^+)^2\sx_n)^2}\tbV_n^+\sx_n\sx_n^\prime\tbV_n^+ - \dfrac{(\tbV_n^+)^2\sx_n\sx_n^\prime\tbV_n^++\tbV_n^+\sx_n\sx_n^\prime(\tbV_n^+)^2}{\sx_n^\prime(\tbV_n^+)^2\sx_n}\right]^2\\
&-&\dfrac{2(\tbV_n^+)^2\sx_n\sx_n^\prime(\tbV_n^+)^2+\tbV_n^+\sx_n\sx_n^\prime(\tbV_n^+)^3+(\tbV_n^+)^3\sx_n\sx_n^\prime\tbV_n^+}{\sx_n^\prime(\tbV_n^+)^2\sx_n}\\
&+&\dfrac{\sx_n^\prime(\tbV_n^+)^3\sx_n}{(\sx_n^\prime(\tbV_n^+)^2\sx_n)^2}(\tbV_n^+\sx_n\sx_n^\prime(\tbV_n^+)^2+(\tbV_n^+)^2\sx_n\sx_n^\prime\tbV_n^+)\\
&=& (\tbV_n^+)^2 + \dfrac{\sx_n^\prime(\tbV_n^+)^4\sx_n}{(\sx_n^\prime(\tbV_n^+)^2\sx_n)^2}\tbV_n^+\sx_n\sx_n^\prime\tbV_n^+-\dfrac{(\sx_n^\prime(\tbV_n^+)^3\sx_n)^2}{(\sx_n^\prime(\tbV_n^+)^2\sx_n)^3}\tbV_n^+\sx_n\sx_n^\prime\tbV_n^+\\
&+& \dfrac{\sx_n^\prime(\tbV_n^+)^3\sx_n}{(\sx_n^\prime(\tbV_n^+)^2\sx_n)^2}\left[(\tbV_n^+)^2\sx_n\sx_n^\prime\tbV_n^++\tbV_n^+\sx_n\sx_n^\prime(\tbV_n^+)^2\right]\\
&-& \dfrac{(\tbV_n^+)^2\sx_n\sx_n^\prime(\tbV_n^+)^2+\tbV_n^+\sx_n\sx_n^\prime(\tbV_n^+)^3+(\tbV_n^+)^3\sx_n\sx_n^\prime\tbV_n^+}{\sx_n^\prime(\tbV_n^+)^2\sx_n}
\end{eqnarray*}
}

Hence,
\begin{eqnarray*}
\bxi^\prime(\bV_n^{+})^2\btheta&=& \bxi^\prime(\tbV_n^+)^2 \btheta
+\dfrac{\sx_n^\prime(\tbV_n^+)^4\sx_n}{(\sx_n^\prime(\tbV_n^+)^2\sx_n)^2} \bxi^\prime\tbV_n^+\sx_n\sx_n^\prime\tbV_n^+\btheta
-\dfrac{(\sx_n^\prime(\tbV_n^+)^3\sx_n)^2}{(\sx_n^\prime(\tbV_n^+)^2\sx_n)^3} \bxi^\prime\tbV_n^+\sx_n\sx_n^\prime\tbV_n^+\btheta\\
&+& \dfrac{\sx_n^\prime(\tbV_n^+)^3\sx_n}{(\sx_n^\prime(\tbV_n^+)^2\sx_n)^2} \left[\bxi^\prime(\tbV_n^+)^2\sx_n\sx_n^\prime\tbV_n^+\btheta
+\bxi^\prime\tbV_n^+\sx_n\sx_n^\prime(\tbV_n^+)^2\btheta\right]\\
&-& \dfrac{\bxi^\prime(\tbV_n^+)^2\sx_n\sx_n^\prime(\tbV_n^+)^2\btheta +\bxi^\prime\tbV_n^+\sx_n\sx_n^\prime(\tbV_n^+)^3 \btheta +\bxi^\prime(\tbV_n^+)^3\sx_n\sx_n^\prime\tbV_n^+\btheta}{\sx_n^\prime(\tbV_n^+)^2\sx_n}
\stackrel{a.s.}{\longrightarrow} (c-1)^{-3} \boldsymbol{\xi}^\prime \boldsymbol{\theta} \,,
\end{eqnarray*}
\begin{eqnarray*}
 \sx_n^\prime(\bV_n^+)^2\sx_n& =& \sx_n^\prime(\tbV_n^+)^2\sx_n+\dfrac{\sx_n^\prime(\tbV_n^+)^4\sx_n}{(\sx_n^\prime(\tbV_n^+)^2\sx_n)^2}-\dfrac{(\sx_n^\prime(\tbV_n^+)^3\sx_n)^2}{(\sx_n^\prime(\tbV_n^+)^2\sx_n)^3}+2\dfrac{\sx_n^\prime(\tbV_n^+)^3\sx_n}{\sx_n^\prime(\tbV_n^+)^2\sx_n}\nonumber\\
 &-& \sx_n^\prime(\tbV_n^+)^2\sx_n-2\dfrac{\sx_n^\prime(\tbV_n^+)^3\sx_n}{\sx_n^\prime(\tbV_n^+)^2\sx_n}\nonumber\\
 &=& \dfrac{\sx_n^\prime(\tbV_n^+)^4\sx_n}{(\sx_n^\prime(\tbV_n^+)^2\sx_n)^2}-\dfrac{(\sx_n^\prime(\tbV_n^+)^3\sx_n)^2}{(\sx_n^\prime(\tbV_n^+)^2\sx_n)^3}
 \stackrel{a.s.}{\longrightarrow} \frac{c^2}{(c-1)^3}\,,
\end{eqnarray*}
and
\begin{eqnarray*}
\sx_n^\prime(\bV_n^{+})^2\btheta&=& \sx_n^\prime(\tbV_n^+)^2 \btheta
+\dfrac{\sx_n^\prime(\tbV_n^+)^4\sx_n}{(\sx_n^\prime(\tbV_n^+)^2\sx_n)^2} \sx_n^\prime\tbV_n^+\btheta
-\dfrac{(\sx_n^\prime(\tbV_n^+)^3\sx_n)^2}{(\sx_n^\prime(\tbV_n^+)^2\sx_n)^3} \sx_n^\prime\tbV_n^+\btheta\\
&+& \dfrac{\sx_n^\prime(\tbV_n^+)^3\sx_n}{(\sx_n^\prime(\tbV_n^+)^2\sx_n)^2} \left[\sx_n^\prime(\tbV_n^+)^2\sx_n\sx_n^\prime\tbV_n^+\btheta
+\sx_n^\prime(\tbV_n^+)^2\btheta\right]\\
&-& \dfrac{\sx_n^\prime(\tbV_n^+)^2\sx_n\sx_n^\prime(\tbV_n^+)^2\btheta +\sx_n^\prime(\tbV_n^+)^3 \btheta +\sx_n^\prime(\tbV_n^+)^3\sx_n\sx_n^\prime\tbV_n^+\btheta}{\sx_n^\prime(\tbV_n^+)^2\sx_n}
\stackrel{a.s.}{\longrightarrow} 0
\end{eqnarray*}
$\text{for}~ p/n\longrightarrow c \in (1,+\infty) ~\text{as} ~n\rightarrow\infty$.
\end{proof}

\vspace{0.5cm}
\begin{lemma}\label{lem7}
Assume (A2). Let $\boldsymbol{\theta}$ and $\boldsymbol{\xi}$ be universal nonrandom vectors with bounded Euclidean norms and let $\bP_n^+=\bV_n^{+}-\frac{\bV_n^{+}\boldsymbol{\eta}\boldsymbol{\eta}^\prime\bV_n^{+}}{\boldsymbol{\eta}^\prime\bV_n^{+}\boldsymbol{\eta}}$ where $\boldsymbol{\eta}$ is a universal nonrandom vectors with bounded Euclidean norm. Then it holds that
\begin{eqnarray}
  \boldsymbol{\xi}^\prime\bP_n^+\boldsymbol{\theta} &\stackrel{a.s.}{\longrightarrow}& c^{-1}(c-1)^{-1}\left(\boldsymbol{\xi}^\prime \boldsymbol{\theta}-\frac{\bxi^\prime \boldsymbol{\eta}\boldsymbol{\eta}^\prime \btheta}{\boldsymbol{\eta}^\prime \boldsymbol{\eta}}\right) \,,\label{7_1}\\
   \sx_n^\prime\bP_n^+\sx_n &\stackrel{a.s.}{\longrightarrow}& \frac{1}{c-1}  \,,\label{7_2}\\
 \sx_n^\prime\bP_n^+\boldsymbol{\theta}&\stackrel{a.s.}{\longrightarrow}&0  \,,\label{7_3}\\
  \boldsymbol{\xi}^\prime(\bP_n^+)^{2}\boldsymbol{\theta} &\stackrel{a.s.}{\longrightarrow}&  (c-1)^{-3} \left(\boldsymbol{\xi}^\prime \boldsymbol{\theta}-\frac{\bxi^\prime \boldsymbol{\eta}\boldsymbol{\eta}^\prime \btheta}{\boldsymbol{\eta}^\prime \boldsymbol{\eta}}\right) \,,\label{7_4}\\
   \sx_n^\prime(\bP_n^+)^{2}\sx_n &\stackrel{a.s.}{\longrightarrow}& \frac{c^2}{(c-1)^3}   \,,\label{7_5}\\
 \sx_n^\prime(\bP_n^+)^2\boldsymbol{\theta}&\stackrel{a.s.}{\longrightarrow}&0 \label{7_6}
\end{eqnarray}
$\text{for}~ p/n\longrightarrow c \in (1,+\infty) ~\text{as} ~n\rightarrow\infty$.
\end{lemma}

\begin{proof}[Proof of Lemma \ref{lem7}:]
It holds that
\begin{eqnarray*}
\boldsymbol{\xi}^\prime\bP_n^+\boldsymbol{\theta}&=&\bxi^\prime\bV_n^{+}\btheta-\frac{\bxi^\prime\bV_n^{+}\boldsymbol{\eta}\boldsymbol{\eta}^\prime\bV_n^{+}\btheta}
{\boldsymbol{\eta}^\prime\bV_n^{+}\boldsymbol{\eta}}
\stackrel{a.s.}{\longrightarrow} c^{-1}(c-1)^{-1} \left(\boldsymbol{\xi}^\prime \boldsymbol{\theta}-\frac{\bxi^\prime \boldsymbol{\eta}\boldsymbol{\eta}^\prime \btheta}{\boldsymbol{\eta}^\prime \boldsymbol{\eta}}\right)
\end{eqnarray*}
$\text{for}~ p/n\longrightarrow c \in (1,+\infty) ~\text{as} ~n\rightarrow\infty$ following \eqref{5_1}. Similarly, we get
\begin{eqnarray*}
\sx_n^\prime\bP_n^+\sx_n&=&\sx_n^\prime\bV_n^{+}\sx_n-\frac{\sx_n^\prime\bV_n^{+}\boldsymbol{\eta}\boldsymbol{\eta}^\prime\bV_n^{+}\sx_n}
{\boldsymbol{\eta}^\prime\bV_n^{+}\boldsymbol{\eta}}
\stackrel{a.s.}{\longrightarrow} \frac{1}{c-1}
\end{eqnarray*}
and
\begin{eqnarray*}
\sx_n^\prime\bP_n^+\boldsymbol{\theta}&=&\sx_n^\prime\bV_n^{+}\btheta-\frac{\sx_n^\prime\bV_n^{+}\boldsymbol{\eta}\boldsymbol{\eta}^\prime\bV_n^{+}\btheta}
{\boldsymbol{\eta}^\prime\bV_n^{+}\boldsymbol{\eta}}
\stackrel{a.s.}{\longrightarrow}0
\end{eqnarray*}
$\text{for}~ p/n\longrightarrow c \in (1,+\infty) ~\text{as} ~n\rightarrow\infty$.

The rest of the proof follows from the equality
\[(\bP_n^+)^2=(\bV_n^+)^{2}-\frac{(\bV_n^+)^{2}\boldsymbol{\eta}\boldsymbol{\eta}^\prime\bV_n^{+}}{\boldsymbol{\eta}^\prime\bV_n^{+}\boldsymbol{\eta}}
-\frac{\bV_n^{+}\boldsymbol{\eta}\boldsymbol{\eta}^\prime(\bV_n^+)^{2}}{\boldsymbol{\eta}^\prime\bV_n^+\boldsymbol{\eta}}
+\boldsymbol{\eta}^\prime(\bV_n^+)^{2}\boldsymbol{\eta}\frac{\bV_n^{+}\boldsymbol{\eta}\boldsymbol{\eta}^\prime\bV_n^{+}}{(\boldsymbol{\eta}^\prime\bV_n^{+}\boldsymbol{\eta})^2}
\]
and Lemma \ref{lem6}.
\end{proof}

\vspace{0.5cm}

\begin{proof}[Proof of Theorem 2.2:]
Let $q_n=\max\{\bm_n^\prime\bSigma_n^{-1}\bm_n,\mathbf{b}^\prime\bSigma_n\mathbf{b}\}$. From Assumption (A3) get that $q_n>0$ uniformly in $p$, since $\bm_n^\prime\bSigma_n^{-1}\bm_n \ge s$ and $s>0$ uniformly in $p$.

In case of $c>1$, the optimal shrinkage intensity is given by
{\footnotesize
 \begin{eqnarray*}\label{alfa_app1}
  \alpha_n^+&=& \beta^{-1}\dfrac{\hat{\bw}^\prime_{S^*}(\bm_n-\beta\bSigma_n\mathbf{b})-\mathbf{b}^\prime(\bm_n-\beta\bSigma_n\mathbf{b})}
  {\hat{\bw}_{S^*}^\prime\bSigma_n\hat{\bw}_{S^*}-2\mathbf{b}^\prime\bSigma_n\hat{\bw}_{S^*}+\mathbf{b}^\prime\bSigma_n\mathbf{b}}\\
  &=& \beta^{-1}\dfrac{\dfrac{\bi^\prime\bS_n^{*}(\bm_n-\beta\bSigma_n\mathbf{b})}{\bi^\prime\bS_n^{*}\bi}+\gamma^{-1}\sy_n^\prime\hat{\bQ}_n^*(\bm_n-\beta\bSigma_n\mathbf{b})
  -\mathbf{b}^\prime(\bm_n-\beta\bSigma_n\mathbf{b})}{\dfrac{\bi^\prime\bS_n^{*}\bSigma_n\bS_n^{*}\bi}{(\bi^\prime\bS_n^{*}\bi)^2}
  +2\gamma^{-1}\dfrac{\sy_n^\prime\hat{\bQ}_n^*\bSigma_n\bS_n^{*}\bi}{\bi^\prime\bS_n^{*}\bi}+\gamma^{-2}\sy_n^\prime\hat{\bQ}_n^*\bSigma_n\hat{\bQ}_n^*\sy_n
  -2\dfrac{\mathbf{b}^\prime\bSigma_n\bS_n^{*}\bi}{\bi^\prime\bS_n^{*}\bi}-2\gamma^{-1}\mathbf{b}^\prime\bSigma_n\hat{\bQ}_n^*\sy_n+\mathbf{b}^\prime\bSigma_n\mathbf{b}}\\
 &=&\beta^{-1}\frac{A_n^+}{B_n^+},
 \end{eqnarray*}
}
where
\begin{eqnarray*}
  A_n^+ &=&\frac{1}{q_n}\left(\dfrac{\bi^\prime\bS_n^{*}(\bm_n-\beta\bSigma_n\mathbf{b})}{\bi^\prime\bS_n^{*}\bi}+\gamma^{-1}\sy_n^\prime\hat{\bQ}_n^*(\bm_n-\beta\bSigma_n\mathbf{b})
  -\mathbf{b}^\prime(\bm_n-\beta\bSigma_n\mathbf{b})\right)\\
  &=&\frac{(\bi^\prime\bSigma_n^{-1}\bi)^{-1/2}\sqrt{\bm_n^\prime\bSigma_n^{-1}\bm_n}}{q_n} \frac{(\bi^\prime\bSigma_n^{-1}\bi)^{-1/2}(\bm_n^\prime\bSigma_n^{-1}\bm_n)^{-1/2}\bi^\prime\bS_n^{*}\bm_n}
  {(\bi^\prime\bSigma_n^{-1}\bi)^{-1}\bi^\prime\bS_n^{*}\bi}\\
  &-&\beta \frac{(\bi^\prime\bSigma_n^{-1}\bi)^{-1/2}\sqrt{\mathbf{b}^\prime\bSigma_n\mathbf{b}}}{q_n} \frac{(\bi^\prime\bSigma_n^{-1}\bi)^{-1/2}(\mathbf{b}^\prime\bSigma_n\mathbf{b})^{-1/2}\bi^\prime\bS_n^{*}\bSigma_n\mathbf{b}}
  {(\bi^\prime\bSigma_n^{-1}\bi)^{-1}\bi^\prime\bS_n^{*}\bi}\\
  &+&\gamma^{-1}\frac{\bm_n^\prime\bSigma_n^{-1}\bm_n}{q_n}\frac{\sy_n^\prime\hat{\bQ}_n^{*}\bm_n}{\bm_n^\prime\bSigma_n^{-1}\bm_n}
  -\beta\gamma^{-1}\frac{\sqrt{\bm_n^\prime\bSigma_n^{-1}\bm_n}\sqrt{\mathbf{b}^\prime\bSigma_n\mathbf{b}}}{q_n}
  \frac{\sy_n^\prime\hat{\bQ}_n^{*}\bSigma_n\mathbf{b}}{\sqrt{\bm_n^\prime\bSigma_n^{-1}\bm_n}\sqrt{\mathbf{b}^\prime\bSigma_n\mathbf{b}}}\\
  &-&\frac{\sqrt{\bm_n^\prime\bSigma_n^{-1}\bm_n}\sqrt{\mathbf{b}^\prime\bSigma_n\mathbf{b}}}{q_n}\frac{\mathbf{b}^\prime\bm_n}
  {\sqrt{\bm_n^\prime\bSigma_n^{-1}\bm_n}\sqrt{\mathbf{b}^\prime\bSigma_n\mathbf{b}}} +\beta\frac{\mathbf{b}^\prime\bSigma_n\mathbf{b}}{q_n}
\end{eqnarray*}
and
\begin{eqnarray*}
   B_n^+&=&\frac{1}{q_n}\Bigg(\dfrac{\bi^\prime\bS_n^{*}\bSigma_n\bS_n^{*}\bi}{(\bi^\prime\bS_n^{*}\bi)^2}
  +2\gamma^{-1}\dfrac{\sy_n^\prime\hat{\bQ}_n^*\bSigma_n\bS_n^{*}\bi}{\bi^\prime\bS_n^{*}\bi}+\gamma^{-2}\sy_n^\prime\hat{\bQ}_n^*\bSigma_n\hat{\bQ}_n^*\sy_n\\
  &-&2\dfrac{\mathbf{b}^\prime\bSigma_n\bS_n^{*}\bi}{\bi^\prime\bS_n^{*}\bi}-2\gamma^{-1}\mathbf{b}^\prime\bSigma_n\hat{\bQ}_n^*\sy_n+\mathbf{b}^\prime\bSigma_n\mathbf{b} \Bigg)\\
   &=&\frac{(\bi^\prime\bSigma_n^{-1}\bi)^{-1}}{q_n}
   \frac{(\bi^\prime\bSigma_n^{-1}\bi)^{-1}\bi^\prime\bS_n^{*}\bSigma_n\bS_n^{*}\bi}{(\bi^\prime\bSigma_n^{-1}\bi)^{-2}(\bi^\prime\bS_n^{*}\bi)^2}\\
   &+&2\gamma^{-1}\frac{(\bi^\prime\bSigma_n^{-1}\bi)^{-1/2}\sqrt{\bm_n^\prime\bSigma_n^{-1}\bm_n}}{q_n}
  \frac{(\bi^\prime\bSigma_n^{-1}\bi)^{-1/2}(\bm_n^\prime\bSigma_n^{-1}\bm_n)^{-1/2}\sy_n^\prime\hat{\bQ}_n^{*}\bSigma_n\bS_n^{*}\bi}
  {(\bi^\prime\bSigma_n^{-1}\bi)^{-1}\bi^\prime\bS_n^{*}\bi}\\
  &+&\gamma^{-2}\frac{\bm_n^\prime\bSigma_n^{-1}\bm_n}{q_n}\frac{\sy_n^\prime\hat{\bQ}_n^{*}\bSigma_n\hat{\bQ}_n^{*}\sy_n}{\bm_n^\prime\bSigma_n^{-1}\bm_n}\\
  &-&2\frac{(\bi^\prime\bSigma_n^{-1}\bi)^{-1/2}\sqrt{\mathbf{b}^\prime\bSigma_n\mathbf{b}}}{q_n}
  \frac{(\bi^\prime\bSigma_n^{-1}\bi)^{-1/2}(\mathbf{b}^\prime\bSigma_n\mathbf{b})^{-1/2}\mathbf{b}^\prime\bSigma_n\bS_n^{*}\bi}
  {(\bi^\prime\bSigma_n^{-1}\bi)^{-1}\bi^\prime\bS^{*}_n\bi}\\
  &-&2\gamma^{-1}\frac{\sqrt{\bm_n^\prime\bSigma_n^{-1}\bm_n}\sqrt{\mathbf{b}^\prime\bSigma_n\mathbf{b}}}{q_n}
  \frac{\mathbf{b}^\prime\bSigma_n\hat{\bQ}_n^{*}\sy_n}{\sqrt{\bm_n^\prime\bSigma_n^{-1}\bm_n}\sqrt{\mathbf{b}^\prime\bSigma_n\mathbf{b}}}
  +\frac{\mathbf{b}^\prime\bSigma_n\mathbf{b}}{q_n}.
 \end{eqnarray*}

Using the equalities
\[\sy_n=\bm_n+\bSigma_n^{1/2}\sx_n$ and $\bS_n^{*}=\bSigma_n^{-1/2}\bV_n^{+}\bSigma_n^{-1/2},\]
the facts that $q_n^{-1}\bm_n^\prime\bSigma_n^{-1}\bm_n\le 1$, $q_n^{-1}\mathbf{b}^\prime\bSigma_n\mathbf{b}\le 1$, $q_n^{-1}(\bi^\prime\bSigma_n^{-1}\bi)^{-1} \le 1$ and that the Euclidean norms of the following vectors
\[\frac{\bSigma_n^{-1/2}\bi}{\sqrt{\bi^\prime\bSigma_n^{-1}\bi}},\quad \frac{\bSigma_n^{-1/2}\bm_n}{\sqrt{\bm_n^\prime\bSigma_n^{-1}\bm_n}}
\quad \text{and} \quad
\frac{\bSigma_n^{1/2}\mathbf{b}}{\sqrt{\mathbf{b}^\prime\bSigma_n\mathbf{b}}}\]
are equal to one, the application of Lemma \ref{lem6} yields
\begin{eqnarray*}
\frac{\bi^\prime\bS_n^{*}\bi}{\bi^\prime\bSigma_n^{-1}\bi}&=&\frac{\bi^\prime\bSigma_n^{-1/2}\bV_n^{+}\bSigma_n^{-1/2}\bi}{\bi^\prime\bSigma_n^{-1}\bi}
\stackrel{a.s.}{\longrightarrow}c^{-1}(c-1)^{-1},\\
\frac{\bi^\prime\bS_n^{*}\bm_n}{\sqrt{\bi^\prime\bSigma_n^{-1}\bi}\sqrt{\bm_n^\prime\bSigma_n^{-1}\bm_n}}&=&
\frac{\bi^\prime\bSigma_n^{-1/2}\bV_n^{+}\bSigma_n^{-1/2}\bm_n}{\sqrt{\bi^\prime\bSigma_n^{-1}\bi}\sqrt{\bm_n^\prime\bSigma_n^{-1}\bm_n}} \stackrel{a.s.}{\longrightarrow}c^{-1}(c-1)^{-1}  \frac{\bi^\prime\bSigma_n^{-1}\bm_n}{\sqrt{\bi^\prime\bSigma_n^{-1}\bi}\sqrt{\bm_n^\prime\bSigma_n^{-1}\bm_n}},\\
\frac{\bi^\prime\bS_n^{*}\bSigma_n\mathbf{b}}{\sqrt{\bi^\prime\bSigma_n^{-1}\bi}\sqrt{\mathbf{b}^\prime\bSigma_n\mathbf{b}}}&=&
\frac{\bi^\prime\bSigma_n^{-1/2}\bV_n^{+}\bSigma_n^{1/2}\mathbf{b}}{\sqrt{\bi^\prime\bSigma_n^{-1}\bi}\sqrt{\mathbf{b}^\prime\bSigma_n\mathbf{b}}}
\stackrel{a.s.}{\longrightarrow}c^{-1}(c-1)^{-1}  \frac{1}{\sqrt{\bi^\prime\bSigma_n^{-1}\bi}\sqrt{\mathbf{b}^\prime\bSigma_n\mathbf{b}}},\\
\frac{\bi^\prime\bS_n^{*}\bSigma_n\bS_n^{*}\bi}{\bi^\prime\bSigma_n^{-1}\bi}&=&\frac{\bi^\prime\bSigma_n^{-1/2}(\bV_n^{+})^2\bSigma_n^{-1/2}\bi}{\bi^\prime\bSigma_n^{-1}\bi} \stackrel{a.s.}{\longrightarrow}(c-1)^{-3}
\end{eqnarray*}
$\text{for}~ p/n\longrightarrow c \in (1,+\infty) ~\text{as} ~n\rightarrow\infty$.

Finally, from Lemma \ref{lem6} and \ref{lem7} as well as by using the equalities
\[\hat{\bQ}_n^*=\bSigma_n^{-1/2}\bP_n^+\bSigma_n^{-1/2} ~~\text{and}~~\bP_n^+\bV_n^{+}=(\bV_n^+)^{2}-\frac{\bV_n^{+}\boldsymbol{\eta}\boldsymbol{\eta} ^\prime(\bV_n^+)^{2}}
{\boldsymbol{\eta}^\prime\bV_n^{+}\boldsymbol{\eta}}\]
with $\boldsymbol{\eta}=\bSigma_n^{-1/2}\bi/\sqrt{\bi^\prime\bSigma_n^{-1/2}\bi}$ we obtain
\begin{eqnarray*}
\frac{\sy_n^\prime\hat{\bQ}_n^*\bm_n}{\bm_n^\prime\bSigma_n^{-1}\bm_n}&=&\frac{\bm_n^\prime \bSigma_n^{-1/2}\bP_n^+\bSigma_n^{-1/2}\bm_n}{\bm_n^\prime\bSigma_n^{-1}\bm_n}
+\frac{\sx_n^\prime\bP_n^+\bSigma_n^{-1/2}\bm_n}{\bm_n^\prime\bSigma_n^{-1}\bm_n}\\
&&\stackrel{a.s.}{\longrightarrow}c^{-1}(c-1)^{-1}\frac{\bm_n^\prime \bQ_n \bm_n}{\bm_n^\prime\bSigma_n^{-1}\bm_n},\\
\frac{\sy_n^\prime\hat{\bQ}_n^*\bSigma_n\mathbf{b}}{\sqrt{\bm_n^\prime\bSigma_n^{-1}\bm_n}\sqrt{\mathbf{b}^\prime\bSigma_n\mathbf{b}}}&=&\frac{\bm_n^\prime \bSigma_n^{-1/2}\bP_n^+\bSigma_n^{1/2}\mathbf{b}}{\sqrt{\bm_n^\prime\bSigma_n^{-1}\bm_n}\sqrt{\mathbf{b}^\prime\bSigma_n\mathbf{b}}}
+\frac{\sx_n^\prime\bP_n^+\bSigma_n^{1/2}\mathbf{b}}{\sqrt{\bm_n^\prime\bSigma_n^{-1}\bm_n}\sqrt{\mathbf{b}^\prime\bSigma_n\mathbf{b}}}\\
&&\stackrel{a.s.}{\longrightarrow}c^{-1}(c-1)^{-1}\frac{\bm_n^\prime \bQ_n\bSigma_n\mathbf{b}}{\sqrt{\bm_n^\prime\bSigma_n^{-1}\bm_n}\sqrt{\mathbf{b}^\prime\bSigma_n\mathbf{b}}},\\
\frac{\sy_n^\prime\hat{\bQ}_n^*\bSigma_n\bS_n^{*}\bi}{\sqrt{\bm_n^\prime\bSigma_n^{-1}\bm_n}\sqrt{\bi^\prime\bSigma_n^{-1}\bi}} &=&\frac{\bm_n^\prime\bSigma_n^{-1/2}\bP_n^+\bV_n^{+}\bSigma_n^{-1/2}\bi}{\sqrt{\bm_n^\prime\bSigma_n^{-1}\bm_n}\sqrt{\bi^\prime\bSigma_n^{-1}\bi}}
+\frac{\sx_n^\prime\bP_n^+\bV_n^{+}\bSigma_n^{-1/2}\bi}{\sqrt{\bm_n^\prime\bSigma_n^{-1}\bm_n}\sqrt{\bi^\prime\bSigma_n^{-1}\bi}} \\
&&\stackrel{a.s.}{\longrightarrow}(c-1)^{-3} \frac{\bm_n^\prime\bQ_n\bi}{\sqrt{\bm_n^\prime\bSigma_n^{-1}\bm_n}\sqrt{\bi^\prime\bSigma_n^{-1}\bi}}=0,\\
\frac{\sy_n^\prime\hat{\bQ}_n^*\bSigma_n\hat{\bQ}_n^*\sy_n}{\bm_n^\prime\bSigma_n^{-1}\bm_n}&=&
\frac{\sx_n^\prime(\bP_n^+)^2\sx_n}{\bm_n^\prime\bSigma_n^{-1}\bm_n}+2\frac{\bm_n^\prime\bSigma_n^{-1/2}(\bP_n^+)^2\sx_n}{\bm_n^\prime\bSigma_n^{-1}\bm_n}
+\frac{\bm_n^\prime\bSigma_n^{-1/2}(\bP_n^+)^2\bSigma_n^{-1/2}\bm_n}{\bm_n^\prime\bSigma_n^{-1}\bm_n}\\
&&\stackrel{a.s.}{\longrightarrow}\frac{(c-1)^{-3}c^2 + (c-1)^{-3} \bm_n^\prime\bQ_n\bm_n}{\bm_n^\prime\bSigma_n^{-1}\bm_n}
\end{eqnarray*}
$\text{for}~ p/n\longrightarrow c \in (1,+\infty) ~\text{as} ~n\rightarrow\infty$.

Substituting the above results into the expressions of $A_n^*$ and $B_n^*$, we get that
\[|A_n^*-A^*|\stackrel{a.s.}{\longrightarrow}0
\quad \text{and} \quad
|B_n^*-B^*|\stackrel{a.s.}{\longrightarrow}0\]
with
{\footnotesize
\begin{eqnarray*}
A^+ &=& \frac{(\bi^\prime\bSigma_n^{-1}\bi)^{-1/2}\sqrt{\bm_n^\prime\bSigma_n^{-1}\bm_n}}{q_n} \frac{(\bi^\prime\bSigma_n^{-1}\bi)^{-1/2}(\bm_n^\prime\bSigma_n^{-1}\bm_n)^{-1/2}c^{-1}(c-1)^{-1}\bi^\prime\bSigma_n^{-1}\bm_n}
  {c^{-1}(c-1)^{-1}}\\
  &-&\beta \frac{(\bi^\prime\bSigma_n^{-1}\bi)^{-1/2}\sqrt{\mathbf{b}^\prime\bSigma_n\mathbf{b}}}{q_n} \frac{(\bi^\prime\bSigma_n^{-1}\bi)^{-1/2}(\mathbf{b}^\prime\bSigma_n\mathbf{b})^{-1/2}c^{-1}(c-1)^{-1}}
  {c^{-1}(c-1)^{-1}}\\
  &+&\gamma^{-1}\frac{\bm_n^\prime\bSigma_n^{-1}\bm_n}{q_n}\frac{c^{-1}(c-1)^{-1}\bm_n^\prime\bQ_n\bm_n}{\bm_n^\prime\bSigma_n^{-1}\bm_n}
  -\beta\gamma^{-1}\frac{\sqrt{\bm_n^\prime\bSigma_n^{-1}\bm_n}\sqrt{\mathbf{b}^\prime\bSigma_n\mathbf{b}}}{q_n}
  \frac{c^{-1}(c-1)^{-1}\bm_n^\prime\bQ_n\bSigma_n\mathbf{b}}{\sqrt{\bm_n^\prime\bSigma_n^{-1}\bm_n}\sqrt{\mathbf{b}^\prime\bSigma_n\mathbf{b}}}\\
  &-&\frac{\sqrt{\bm_n^\prime\bSigma_n^{-1}\bm_n}\sqrt{\mathbf{b}^\prime\bSigma_n\mathbf{b}}}{q_n}\frac{\mathbf{b}^\prime\bm_n}
  {\sqrt{\bm_n^\prime\bSigma_n^{-1}\bm_n}\sqrt{\mathbf{b}^\prime\bSigma_n\mathbf{b}}} +\beta\frac{\mathbf{b}^\prime\bSigma_n\mathbf{b}}{q_n}\\
&=&\frac{1}{q_n}\left(\frac{\bi^\prime\bSigma_n^{-1}\bm_n}{\bi^\prime\bSigma_n^{-1}\bi}-\beta\frac{1}{\bi^\prime\bSigma_n^{-1}\bi}
+\frac{\gamma^{-1}}{c(c-1)}\bm_n^\prime\bQ_n\bm_n-\frac{\gamma^{-1}\beta}{c(c-1)}\bm_n^\prime\bQ_n\bSigma_n\mathbf{b} - \mathbf{b}^\prime\bm_n+\beta\mathbf{b}^\prime\bSigma_n\mathbf{b}\right)
\end{eqnarray*}
and
\begin{eqnarray*}
B^+ &=& \frac{(\bi^\prime\bSigma_n^{-1}\bi)^{-1}}{q_n}
   \frac{(c-1)^{-3}}{c^{-2}(c-1)^{-2}}+2\gamma^{-1}\frac{(\bi^\prime\bSigma_n^{-1}\bi)^{-1/2}\sqrt{\bm_n^\prime\bSigma_n^{-1}\bm_n}}{q_n}
  \frac{0}{c^{-1}(c-1)^{-1}}\\
  &+&\gamma^{-2}\frac{\bm_n^\prime\bSigma_n^{-1}\bm_n}{q_n}\frac{(c-1)^{-3}c^2 + (c-1)^{-3} \bm_n^\prime\bQ_n\bm_n}{\bm_n^\prime\bSigma_n^{-1}\bm_n}\\
  &-&2\frac{(\bi^\prime\bSigma_n^{-1}\bi)^{-1/2}\sqrt{\mathbf{b}^\prime\bSigma_n\mathbf{b}}}{q_n}
  \frac{(\bi^\prime\bSigma_n^{-1}\bi)^{-1/2}(\mathbf{b}^\prime\bSigma_n\mathbf{b})^{-1/2}c^{-1}(c-1)^{-1}}
  {c^{-1}(c-1)^{-1}}\\
  &-&2\gamma^{-1}\frac{\sqrt{\bm_n^\prime\bSigma_n^{-1}\bm_n}\sqrt{\mathbf{b}^\prime\bSigma_n\mathbf{b}}}{q_n}
  \frac{c^{-1}(c-1)^{-1}\mathbf{b}^\prime\bSigma_n\bQ_n\bm_n}{\sqrt{\bm_n^\prime\bSigma_n^{-1}\bm_n}\sqrt{\mathbf{b}^\prime\bSigma_n\mathbf{b}}}
  +\frac{\mathbf{b}^\prime\bSigma_n\mathbf{b}}{q_n}\\
&=&\frac{1}{q_n}\Bigg(\frac{c^2}{c-1}\frac{1}{\bi^\prime\bSigma_n^{-1}\bi}+\frac{\gamma^{-2}}{(c-1)^{3}}\left(c^2 + \bm_n^\prime\bQ_n\bm_n\right)
- 2\frac{1}{\bi^\prime\bSigma_n^{-1}\bi}\\
&-&2\frac{\gamma^{-1}}{c(c-1)}\left(\mathbf{b}^\prime\bm_n
 -\frac{\bi^\prime\bSigma_n^{-1}\bm_n}{\bi^\prime\bSigma_n^{-1}\bi}\right)+\mathbf{b}^\prime\bSigma_n\mathbf{b}\Bigg).
\end{eqnarray*}
}

Let $\alpha^+=A^+/B^+$. Hence,
\[|\alpha_n^+-\alpha^+|\le \left|\frac{1}{B_n^+}(A_n^+-A^+)\right|+\left|\frac{A^+}{B^+B_n^+}(B^+-B_n^+)\right|\stackrel{a.s.}{\longrightarrow}0.\]
This completes the proof of Theorem 2.2.
\end{proof}

\begin{proof}[Proof of Corollary 2.2:]
\noindent (a)
With $q_n=\max\{\bm_n^\prime\bSigma_n^{-1}\bm_n,\mathbf{b}^\prime\bSigma_n\mathbf{b}\}$ we get
\begin{eqnarray*}
\frac{1}{q_n}U_S&=& \frac{1}{q_n}\hat{\bw}_{S^*}^\prime\bm_n-\dfrac{\gamma}{2}\frac{1}{q_n}\hat{\bw}_{S^*}^\prime\bSigma_n\hat{\bw}_{S^*},
\end{eqnarray*}
where from the proof of Theorem 2.2 we have
\begin{eqnarray*}
\frac{1}{q_n}\hat{\bw}_{S^*}^\prime\bm_n&=&\frac{1}{q_n}\frac{\bi^\prime\bS_n^{*}\bm_n}{\bi^\prime\bS_n^{*}\bi}+\gamma^{-1}\frac{1}{q_n}\sy_n^\prime\hat{\bQ}_n^*\bm_n\\
&=& \frac{(\bi^\prime\bSigma_n^{-1}\bi)^{-1/2}\sqrt{\bm_n^\prime\bSigma_n^{-1}\bm_n}}{q_n} \frac{(\bi^\prime\bSigma_n^{-1}\bi)^{-1/2}(\bm_n^\prime\bSigma_n^{-1}\bm_n)^{-1/2}\bi^\prime\bS_n^{*}\bm_n}
  {(\bi^\prime\bSigma_n^{-1}\bi)^{-1}\bi^\prime\bS_n^{*}\bi}\\
  &+&\gamma^{-1} \frac{\bm_n^\prime\bSigma_n^{-1}\bm_n}{q_n}\frac{\sy_n^\prime\hat{\bQ}_n^*\bm_n}{\bm_n^\prime\bSigma_n^{-1}\bm_n} \\
&\stackrel{a.s.}{\longrightarrow}&\frac{(\bi^\prime\bSigma_n^{-1}\bi)^{-1/2}\sqrt{\bm_n^\prime\bSigma_n^{-1}\bm_n}}{q_n} \frac{(\bi^\prime\bSigma_n^{-1}\bi)^{-1/2}(\bm_n^\prime\bSigma_n^{-1}\bm_n)^{-1/2}c^{-1}(c-1)^{-1}\bi^\prime\bSigma_n^{-1}\bm_n}
  {c^{-1}(c-1)^{-1}}\\
  &+&\gamma^{-1}c^{-1}(c-1)^{-1}\frac{\bm_n^\prime\bSigma_n^{-1}\bm_n}{q_n}\frac{\bm_n^\prime \bQ_n \bm_n}{\bm_n^\prime\bSigma_n^{-1}\bm_n}
  =\frac{1}{q_n}\left(R_{GMV}+\gamma^{-1}c^{-1}(c-1)^{-1}s\right)
\end{eqnarray*}
and
\begin{eqnarray*}
&&\hat{\bw}_{S^*}^\prime\bSigma_n\hat{\bw}_{S^*}=\frac{\bi^\prime\bS_n^{*}\bSigma_n\bS_n^{*}\bi}{(\bi^\prime\bS_n^{*}\bi)^2}
+2\gamma^{-1}\frac{\bi^\prime\bS_n^{*}\bSigma_n\hat{\bQ}_n^*\sy_n}{\bi^\prime\bS_n^{*}\bi}+\gamma^{-2}\sy_n^\prime\hat{\bQ}_n^*\bSigma_n\hat{\bQ}_n^*\sy_n\\
&=&\frac{(\bi^\prime\bSigma_n^{-1}\bi)^{-1}}{q_n}
   \frac{(\bi^\prime\bSigma_n^{-1}\bi)^{-1}\bi^\prime\bS_n^{*}\bSigma_n\bS_n^{*}\bi}{(\bi^\prime\bSigma_n^{-1}\bi)^{-2}(\bi^\prime\bS_n^{*}\bi)^2}
   +\gamma^{-2}\frac{\bm_n^\prime\bSigma_n^{-1}\bm_n}{q_n}\frac{\sy_n^\prime\hat{\bQ}_n^*\bSigma_n\hat{\bQ}_n^*\sy_n}{\bm_n^\prime\bSigma_n^{-1}\bm_n}\\
&+& 2\gamma^{-1} \frac{(\bi^\prime\bSigma_n^{-1}\bi)^{-1/2}\sqrt{\bm_n^\prime\bSigma_n^{-1}\bm_n}}{q_n} \frac{(\bi^\prime\bSigma_n^{-1}\bi)^{-1/2}(\bm_n^\prime\bSigma_n^{-1}\bm_n)^{-1/2}\bi^\prime\bS_n^{*}\bSigma_n\hat{\bQ}_n^*\sy_n}
{(\bi^\prime\bSigma_n^{-1}\bi)^{-1}\bi^\prime\bS_n^{*}\bi} \\
&\stackrel{a.s.}{\longrightarrow}&\frac{(\bi^\prime\bSigma_n^{-1}\bi)^{-1}}{q_n}\frac{(c-1)^{-3}}{c^{-2}(c-1)^{-2}}
+\gamma^{-2}\frac{\bm_n^\prime\bSigma_n^{-1}\bm_n}{q_n}\frac{(c-1)^{-3}c^2 + (c-1)^{-3} \bm_n^\prime\bQ_n\bm_n}{\bm_n^\prime\bSigma_n^{-1}\bm_n}\\
&+&2\gamma^{-1}\frac{(\bi^\prime\bSigma_n^{-1}\bi)^{-1/2}\sqrt{\bm_n^\prime\bSigma_n^{-1}\bm_n}}{q_n} \frac{0}{c^{-1}(c-1)^{-1}}\\
&=&\frac{1}{q_n}\left(c^{2}(c-1)^{-1}V_{GMV}+\gamma^{-2}\frac{c^2}{(c-1)^{3}}  +\gamma^{-2} (c-1)^{-3} s\right)
\end{eqnarray*}
$\text{for}~ p/n\longrightarrow c \in (1,+\infty) ~\text{as} ~n\rightarrow\infty$.

Finally, using that
\[\frac{1}{q_n}U_{EU}=\frac{1}{q_n}R_{GMV}+\frac{1}{2}\gamma^{-1} \frac{1}{q_n}s - \frac{\gamma}{2} \frac{1}{q_n}V_{GMV},\]
we obtain the statement of the first part of the corollary.

\noindent (b) It holds that
\begin{eqnarray*}
\frac{1}{q_n}U_{GSE} &=& \alpha^+ \frac{1}{q_n}\hat{\bw}_{S^*}^\prime\bm_n+(1-\alpha^+)\frac{1}{q_n}\mathbf{b}^\prime\bm_n\\
&-&\dfrac{\gamma}{2}\left((\alpha^+)^2\frac{1}{q_n}\hat{\bw}_{S^*}^\prime\bSigma_n\hat{\bw}_{S^*}
+2\alpha^+(1-\alpha^+)\frac{1}{q_n}\mathbf{b}^\prime\bSigma_n\hat{\bw}_{S^*}+(1-\alpha^+)^2\frac{1}{q_n}\mathbf{b}^\prime\bSigma_n\mathbf{b}\right),
\end{eqnarray*}
where
\begin{eqnarray*}
\frac{1}{q_n}\mathbf{b}^\prime\bSigma_n\hat{\bw}_{S^*}&=&
\frac{1}{q_n}\frac{\mathbf{b}^\prime\bSigma_n\bS_n^{*}\bi}{\bi^\prime\bS_n^{*}\bi}
+\gamma^{-1}\frac{1}{q_n}\mathbf{b}^\prime\bSigma_n\hat{\bQ}_n^*\sy_n\\
&=&\frac{(\bi^\prime\bSigma_n^{-1}\bi)^{-1/2}\sqrt{\mathbf{b}^\prime\bSigma_n\mathbf{b}}}{q_n} \frac{(\bi^\prime\bSigma_n^{-1}\bi)^{-1/2}(\mathbf{b}^\prime\bSigma_n\mathbf{b})^{-1/2}\bi^\prime\bS_n^{*}\bSigma_n\mathbf{b}}
  {(\bi^\prime\bSigma_n^{-1}\bi)^{-1}\bi^\prime\bS_n^{*}\bi}\\
&+& \gamma^{-1}\frac{\sqrt{\bm_n^\prime\bSigma_n^{-1}\bm_n}\sqrt{\mathbf{b}^\prime\bSigma_n\mathbf{b}}}{q_n}
  \frac{\mathbf{b}^\prime\bSigma_n\hat{\bQ}_n^*\sy_n}{\sqrt{\bm_n^\prime\bSigma_n^{-1}\bm_n}\sqrt{\mathbf{b}^\prime\bSigma_n\mathbf{b}}} \\
&\stackrel{a.s.}{\longrightarrow}&
\frac{(\bi^\prime\bSigma_n^{-1}\bi)^{-1/2}\sqrt{\mathbf{b}^\prime\bSigma_n\mathbf{b}}}{q_n} \frac{(\bi^\prime\bSigma_n^{-1}\bi)^{-1/2}(\mathbf{b}^\prime\bSigma_n\mathbf{b})^{-1/2}c^{-1}(c-1)^{-1}}
  {c^{-1}(c-1)^{-1}}\\
&+&\gamma^{-1}\frac{\sqrt{\bm_n^\prime\bSigma_n^{-1}\bm_n}\sqrt{\mathbf{b}^\prime\bSigma_n\mathbf{b}}}{q_n}
  \frac{c^{-1}(c-1)^{-1}\left(\mathbf{b}^\prime\bm_n-\dfrac{\bi^\prime\bSigma_n^{-1}\bm_n}{\bi^\prime\bSigma_n^{-1}\bi}\right)}
  {\sqrt{\bm_n^\prime\bSigma_n^{-1}\bm_n}\sqrt{\mathbf{b}^\prime\bSigma_n\mathbf{b}}}
\\
&=&\frac{1}{q_n}\left(V_{GMV}+\gamma^{-1}c^{-1}(c-1)^{-1} (R_b-R_{GMV})\right)
\end{eqnarray*}
$\text{for}~ p/n\longrightarrow c \in (1,+\infty) ~\text{as} ~n\rightarrow\infty$.

Hence,
{\footnotesize
\begin{eqnarray*}
&&\frac{1}{q_n}(U_{EU}-U_{GSE})= (\alpha^+)^2\frac{1}{q_n}U_{EU}+(1-\alpha^+)^2\frac{1}{q_n}U_{EU}+2\alpha^+(1-\alpha^+)\frac{1}{q_n}U_{EU}\\
&-&(\alpha^+)^2\frac{1}{q_n}U_{S}-\alpha^+(1-\alpha^+)\frac{1}{q_n}\hat{\bw}_{S^*}^\prime\bm_n
-(1-\alpha^+)^2\frac{1}{q_n}U_{\bb}-\alpha^+(1-\alpha^+)\frac{1}{q_n}\mathbf{b}^\prime\bm_n
+\gamma\alpha^+(1-\alpha^+)\frac{1}{q_n}\mathbf{b}^\prime\bSigma_n\hat{\bw}_{S^*}\\
&\stackrel{a.s.}{\longrightarrow}&(\alpha^+)^2\frac{1}{q_n}(U_{EU}-U_S)+(1-\alpha^+)^2\frac{1}{q_n}(U_{EU}-U_{\bb})\\
&+&\alpha^+(1-\alpha^+)\frac{1}{q_n}\left(2R_{GMV}+\gamma^{-1} s - \gamma V_{GMV}-R_{GMV}-\frac{\gamma^{-1}c^{-1}}{c-1} s-R_b+\gamma V_{GMV}+\frac{R_b-R_{GMV}}{c(c-1)}\right)\\
&=&(\alpha^+)^2\frac{1}{q_n}(U_{EU}-U_S)+(1-\alpha^+)^2\frac{1}{q_n}(U_{EU}-U_{\bb})
+\alpha^+(1-\alpha^+)\frac{1+c-c^2}{c(c-1)}\frac{1}{q_n}(R_b-R_{GMV}-\gamma^{-1} s).
\end{eqnarray*}
}
$\text{for}~ p/n\longrightarrow c \in (0,1) ~\text{as} ~n\rightarrow\infty$.
\end{proof}

\setlength{\bibsep}{4pt}
\begin{small}
\bibliography{MV}
\end{small}


\end{document}